\setlist{leftmargin=\parindent}
\definecolor{AyaFn}{HTML}{00627a}
\definecolor{AyaConstructor}{HTML}{067d17}
\definecolor{AyaStruct}{HTML}{00627a}
\definecolor{AyaGeneralized}{HTML}{00627a}
\definecolor{AyaData}{HTML}{00627a}
\definecolor{AyaPrimitive}{HTML}{00627a}
\definecolor{AyaKeyword}{HTML}{0033b3}
\definecolor{AyaComment}{HTML}{8c8c8c}
\definecolor{AyaField}{HTML}{871094}
\NewDocumentCommand{\stepsTo}{mm}{\xmapsto[#2]{~#1~}}
\NewDocumentCommand{\stepsToObj}{mm}{\mathrel{\mapstochar\relbar\mathrel{\mkern-14mu}\stepsTo{#1}{#2}}}
\NewDocumentCommand{\CfgStepsOperad}{mm}{{\usym{1F551}}_{#2},{#1}}
\NewDocumentCommand{\CfgSteps}{}{\mapsto}
\NewDocumentCommand{\CfgRefl}{mm}{\mathsf{refl}^{#1}_{#2}}
\NewDocumentCommand{\CfgStepT}{mmmm}{\operatorname{\mathsf{stepT}}^{#1}_{#2;#3}(#4)}
\NewDocumentCommand{\CfgStepC}{mmmm}{\operatorname{\mathsf{stepC}}^{#1;#2}_{#3}(#4)}
\NewDocumentCommand{\CfgStepsFrame}{mm}{\operatorname{\mathsf{stepF}}^{#1}(#2)}
\NewDocumentCommand{\CfgStepsConcat}{mm}{\operatorname{\mathsf{stepCat}}(#1, #2)}
\NewDocumentCommand{\CfgStepTR}{mmmm}{\operatorname{\mathsf{stepT'}}^{#1}_{#2;#3}(#4)}
\NewDocumentCommand{\CfgStepsInterleave}{mm}{\operatorname{\mathsf{stepI}}(#1, #2)}
\NewDocumentCommand{\CfgStepsInterleaveL}{mm}{\operatorname{\mathsf{stepIL}}(#1, #2)}
\NewDocumentCommand{\CfgStepsInterleaveR}{mm}{\operatorname{\mathsf{stepIR}}(#1, #2)}
\NewDocumentCommand{\extendTraj}{mmm}{\operatorname{\mathsf{Ext}}^{#1}_{#2}(#3)}
\NewDocumentCommand{\trajTy}{mm}{\mathsf{C}(#1, #2)}
\NewDocumentCommand{\concatTraj}{mm}{\operatorname{\mathsf{cat}}(#1, #2)}
\NewDocumentCommand{\constTraj}{mmm}{\operatorname{\mathsf{c}}_{[#2, #3)}(#1)}
\NewDocumentCommand{\interleaveTraj}{mm}{\operatorname{\mathsf{trajInter}}(#1, #2)}
\NewDocumentCommand{\partBeforeTraj}{mm}{\operatorname{\mathsf{trajB}}_{#2}(#1)}
\NewDocumentCommand{\partAfterTraj}{mm}{\operatorname{\mathsf{trajA}}_{#2}(#1)}
\NewDocumentCommand{\RRefl}{m}{\operatorname{\mathsf{R\_refl}}(#1)}
\NewDocumentCommand{\RStepT}{mm}{\operatorname{\mathsf{R\_stepT}}(#1, #2)}
\NewDocumentCommand{\RStepC}{mm}{\operatorname{\mathsf{R\_stepC}}(#1, #2)}
\NewDocumentCommand{\RFrame}{mm}{\operatorname{\mathsf{R\_frame}}(#1, #2)}
\NewDocumentCommand{\RConcat}{mm}{\operatorname{\mathsf{R\_concat}}(#1, #2)}
\NewDocumentCommand{\RPartitionA}{mm}{\operatorname{\mathsf{R\_partA}}(#1, #2)}
\NewDocumentCommand{\RPartitionB}{mm}{\operatorname{\mathsf{R\_partB}}(#1, #2)}
\NewDocumentCommand{\RInterleave}{mm}{\operatorname{\mathsf{R\_inter}}(#1, #2)}
\NewDocumentCommand{\RInterleaveL}{mm}{\operatorname{\mathsf{R\_interL}}(#1, #2)}
\NewDocumentCommand{\RInterleaveR}{mm}{\operatorname{\mathsf{R\_interR}}(#1, #2)}
\NewDocumentCommand{\SRefl}{m}{\mathsf{S\_refl}(#1)}
\NewDocumentCommand{\SPartitionA}{mm}{\operatorname{\mathsf{S\_partA}}(#1, #2)}
\NewDocumentCommand{\SPartitionB}{mm}{\operatorname{\mathsf{S\_partB}}(#1, #2)}
\NewDocumentCommand{\SConcat}{mm}{\operatorname{\mathsf{S\_concat}}(#1, #2)}
\NewDocumentCommand{\SStep}{m}{\operatorname{\mathsf{S\_step}}(#1)}
\NewDocumentCommand{\SFrame}{mm}{\operatorname{\mathsf{S\_frame}}(#1, #2)}
\NewDocumentCommand{\SInterleave}{mmm}{\operatorname{\mathsf{S\_inter}}_{#3}(#1, #2)}
\NewDocumentCommand{\SCommTP}{mm}{\operatorname{\mathsf{S\_down}}(#1, #2)}
\NewDocumentCommand{\SCommTC}{mm}{\operatorname{\mathsf{S\_up}}(#1, #2)}
\NewDocumentCommand{\substOf}{m}{\underline{#1}}
\NewDocumentCommand{\typedCompl}{mm}{\operatorname{\mathsf{typedCompl}}(#1, #2)}
\NewDocumentCommand{\SInterleaveCompl}{mm}{\operatorname{\mathsf{S\_int\_compl}}(#1, #2)}
\NewDocumentCommand{\compTraj}{mm}{\operatorname{S}_{[#1, #2)}}
\NewDocumentCommand{\projStart}{}{\operatorname{\pi_{\text{start}}}}
\NewDocumentCommand{\projEnd}{}{\operatorname{\pi_{\text{end}}}}
\NewDocumentCommand{\projNtraj}{}{\operatorname{\pi_{\text{traj}}}}
\NewDocumentCommand{\projStepTo}{}{\operatorname{\pi_{\text{stepTo}}}}
\NewDocumentCommand{\projSteps}{}{\operatorname{\pi_{\text{σ}}}}
\NewDocumentCommand{\projRealize}{}{\operatorname{\pi_{\mathcal{R}}}}
\NewDocumentCommand{\isType}{m}{{#1}~\textsf{type}}
\NewDocumentCommand{\withStar}{}{\text{left}}
\NewDocumentCommand{\withoutStar}{}{\text{right}}
\NewDocumentCommand{\propOf}{}{\operatorname{\mathsf{propOf}}}
\NewDocumentCommand{\starInvert}{m}{{#1}^{-1}}
\NewDocumentCommand{\functorMap}{m}{\operatorname{\mathtt{fmap}}_{#1}}
\NewDocumentCommand{\relAt}{}{\mathrel{@}}
\NewDocumentCommand{\GF}{}{\mathcal{G};\mathcal{F}}
\NewDocumentCommand{\GtFp}{}{\mathcal{G},t;\mathcal{F},p}
\NewDocumentCommand{\GtFq}{}{\mathcal{G},t;\mathcal{F},q}
\NewDocumentCommand{\recvEx}{}{ \operatorname{\mathsf{recv}} }
\NewDocumentCommand{\sendEx}{}{ \operatorname{\mathsf{send}} }
\NewDocumentCommand{\procEx}{mm}{ \textsf{proc}[#1](#2) }
\NewDocumentCommand{\procFwd}{mm}{ \textsf{fwd}[#1](#2) }
\NewDocumentCommand{\AtomicObjPrefix}{}{ \textsf{Atomic} }
\NewDocumentCommand{\AtomicObj}{m}{ \AtomicObjPrefix(#1) }
\NewDocumentCommand{\lwith}{}{\mathrel{\&}}
\NewDocumentCommand{\fwdEx}{}{ \operatorname{\mathsf{fwd}} }
\NewDocumentCommand{\letEx}{}{ \operatorname{\mathsf{let}} }
\NewDocumentCommand{\partialmap}{}{\rightharpoonup}
\NewDocumentCommand{\KwProp}{}{\textsf{Prop}}
\NewDocumentCommand{\applyCompl}{mm}{ \operatorname{\textsf{applyCompl}}(#1, #2) }
\NewDocumentCommand{\valueInterp}{mmm}{\mathcal{V}\lrbracket{#1}^{#3}\mathrel{@}{#2}}
\NewDocumentCommand{\termInterp}{mmm}{\mathcal{E}\lrbracket{#1}^{#3}\mathrel{@}{#2}}
\NewDocumentCommand{\valueInterpYue}{mmm}{\mathcal{V}\lrbracket{#1}^{#3}\mathrel{@}{#2}}
\NewDocumentCommand{\termInterpYue}{mmm}{\mathcal{L}\lrbracket{#1}^{#3}\mathrel{@}{#2}}
\NewDocumentCommand{\Cfg}{}{\mathsf{Cfg}}
\NewDocumentCommand{\NCfg}{}{\mathsf{NCfg}}
\NewDocumentCommand{\Obj}{}{\mathsf{Obj}}
\NewDocumentCommand{\NObj}{}{\mathsf{NObj}}
\NewDocumentCommand{\Labels}{}{\mathbb{A}}
\NewDocumentCommand{\FinTimeT}{}{\mathsf{time}_{\mathsf{fin}}}
\NewDocumentCommand{\TimeT}{}{\mathsf{time}}
\NewDocumentCommand{\FMSet}{m}{\mathsf{FM}(#1)}
\NewDocumentCommand{\Actions}{}{\mathsf{Act}}
\NewDocumentCommand{\fmconcat}{}{\mathrel{⊎}}
\newcommand{\tagthisline}{%
  \refstepcounter{equation}%
  (\theequation)%
}
\NewDocumentCommand{\nameless}{m}{\contour{black}{\color{white}{#1}}}
\NewDocumentCommand{\textcode}{m}{\textsf{#1}}
\NewDocumentCommand{\textlibraryname}{m}{\texttt{#1}}
\NewDocumentCommand{\ProcLang}{}{\textcode{ProcLang}}
\NewDocumentCommand{\TODO}{m m}%
  {{\bfseries\color{#1}[#2]}}%
\crefname{example}{Example}{Examples}
\Crefname{example}{Example}{Examples}
\NewDocumentCommand{\cmuAffil}{}{\affiliation{\institution{Carnegie Mellon University}\city{Pittsburgh}\state{PA}\country{USA}}}
\begin{document}

\title{Mechanizing a Proof-Relevant Logical Relation for Timed Message-Passing Protocols}

\author{Tesla Zhang}
\email{teslaz@cmu.edu}
\orcid{https://orcid.org/0000-0002-9050-846X}
\cmuAffil

\author{Asher Kornfeld}
\email{akornfel@andrew.cmu.edu}
\orcid{https://orcid.org/0009-0008-5788-5184}
\cmuAffil

\author{Rui Li}
\email{ruil3@andrew.cmu.edu}
\orcid{https://orcid.org/0009-0006-3555-9770}
\cmuAffil

\author{Sonya Simkin}
\email{ssimkin@andrew.cmu.edu}
\orcid{https://orcid.org/0009-0008-9261-6318}
\cmuAffil

\author{Yue Yao}
\email{yueyao@cs.cmu.edu}
\orcid{https://orcid.org/0000-0001-8523-5156}
\cmuAffil

\author{Stephanie Balzer}
\email{balzers@cs.cmu.edu}
\orcid{https://orcid.org/0000-0002-8347-3529}
\cmuAffil

\begin{CCSXML}
<ccs2012>
   <concept>
       <concept_id>10003752.10003790.10003801</concept_id>
       <concept_desc>Theory of computation~Linear logic</concept_desc>
       <concept_significance>500</concept_significance>
       </concept>
   <concept>
       <concept_id>10003752.10003790.10011740</concept_id>
       <concept_desc>Theory of computation~Type theory</concept_desc>
       <concept_significance>500</concept_significance>
       </concept>
   <concept>
       <concept_id>10003752.10003753.10003761.10003764</concept_id>
       <concept_desc>Theory of computation~Process calculi</concept_desc>
       <concept_significance>300</concept_significance>
       </concept>
 </ccs2012>
\end{CCSXML}

\ccsdesc[500]{Theory of computation~Linear logic}
\ccsdesc[500]{Theory of computation~Type theory}
\ccsdesc[300]{Theory of computation~Process calculi}

\keywords{Proof-relevant logical relations, message-passing protocol verification, session types, intuitionistic linear logic} 


\begin{abstract}
Semantic typing has become a powerful tool for program verification,
applying the technique of logical relations as not only a proof method,
but also a device for prescribing program behavior.
In recent work, \citeauthor{YaoPOPL2025} scaled semantic typing to
the verification of timed message-passing protocols, which are
prevalent in, for example, Internet of Things (IoT) and
real-time systems applications.
The appeal of semantic typing in this context is precisely because of its ability to
support typed and untyped program components alike---%
including physical objects---%
which caters to the heterogeneity of these applications.
Another demand inherent to these applications is timing:
constraining the time or time window within which
a message exchange must happen.
\citeauthor{YaoPOPL2025} equipped their logical relation not only with temporal predicates,
but also with computable trajectories,
to supply the evidence that an inhabitant
can step from one time point to another one.
While \citeauthor{YaoPOPL2025} provide the formalization for such a verification tool, it lacks a mechanization.
Mechanizing the system would not only provide a machine proof for it,
but also facilitate scalability for future extensions and applications.

This paper tackles the challenge of mechanizing
the resulting proof-relevant logical relation in a proof assistant.
allowing trajectories to be interleaved, partitioned, and concatenated,
while the intended equality on trajectories is the equality of their graphs
when seen as processes indexed by time.
Unfortunately, proof assistants based on intensional type theory
only have modest support for such equational theories,
forcing a prolific use of transports.
This paper reports on the process of mechanizing \citeauthor{YaoPOPL2025}'s results,
comprising the logical relation,
the algebra of computable trajectories with supporting lemmas, and
the fundamental theorem of the logical relation,
in the Rocq theorem prover.
In particular, it describes a systematic treatment of transports
using commuting lemmas.

\end{abstract}

\maketitle

\section{Introduction}%
\label{sec:intro}
Modern computing applications are increasingly becoming \emph{heterogeneous}, with many systems needing to interface with \emph{foreign objects}.
These foreign objects can range from code that is externally implemented (in a language that may or may not be the same as the one internal to the system)
to actual hardware devices which communicate with the system. 
An example of such an application can be found in a smart home system for monitoring air quality. The controller of the system communicates with different
sensors around the home to gather data on surrounding air temperature, humidity, pressure, etc.
Such sensors are \emph{hardware} devices, and the controller must interact with through a protocol defined
by the manufacturer in a specification. 
These protocols, in addition to prescribing program behavior, can also impose timing constraints on message exchanges between components.
In the example of the air quality system, a sensor may only provide readings after heating up an internal component for some amount of time, after which it also needs time to cool down. 
The other devices in the smart home system must then comply with the time restrictions put forth by the sensor in order to behave according to the specification.

The challenge of verifying such systems against a given protocol specification is twofold: we must address both \textit{(i)} the heterogeneity of the system and \textit{(ii)} the timing constraints imposed by certain devices.
The heterogeneous nature of these applications means there is no common specification language we can use to describe individual components of the system, no common operational model to describe interactions between said components, and no means of composing the verification of individual parts to certify the whole.
The timing constraints mean that each of the concerns put forth by heterogeneity must be solved in a way that ensures timeliness.
Fortunately, the verification framework developed by \citet{YaoPOPL2025} combats each of these issues.

The framework proposed by \citet{YaoPOPL2025} utilizes
\emph{semantic logical relations}
\citep{ConstableBook1986, LoefARTICLE1982},
also known as \emph{semantic typing} \citep{TimanyJACM2024},
to combat the heterogeneity of such systems.
Semantic logical relations permit terms
that are not necessarily (syntactically) well-typed
as inhabitants of the logical relation,
thus accommodating arbitrary foreign objects,
including sensors.
To facilitate verification of \emph{timed message-passing} systems,
the framework uses \textit{session types}
\citep{HondaCONCUR1993,HondaESOP1998,HondaPOPL2008}
as a specification language,
which it enriches with \textit{temporal predicates}.
Session types are behavioral types
\citep{AnconaARITCLE2016,GayRavaraBOOK2017},
which prescribe the order in which messages must be exchanged.
The indexing with a temporal predicate additionally fixes
the point in time/time window at/during which an exchange must happen.
\citet{YaoPOPL2025} specifically chose
\textit{intuitionistic linear logic session types} (ILLST)
\citep{CairesCONCUR2010,ToninhoESOP2013},
which, through their connection to linear logic,
provide a clear distinction between the provider and client of a given protocol.
Thus, the resulting logical relation is indexed with
timed intuitionistic linear logic session types,
referred to as \textit{timed semantic session logical relation} (TSSLR).

The combination of linear session types and temporal predicates enable the prescription of any arbitrary timed message-passing protocol, providing a specification language for a heterogeneous system in the absence of a common one.
Crucial is the use of a \emph{labelled transition system} (LTS) \citep{MilnerBook1980, MilnerBook1999, SangiorgiWalkerBook2001}
to express the message-passing behavior of any program or device in a system. 
In an LTS, each transition on an object is annotated with an \emph{action} that describes the willingness of that object
to engage in communication. 
To accommodate timing, these transitions are also equipped with the point in time at which a message exchange may happen.
However, these time annotations are local descriptions of potential computations---in order to lift them to a global schedule,
\citet{YaoPOPL2025} introduce the notion of a \textit{trajectory}.
A trajectory is a description of how a configuration of processes evolves over time: given an instant in time, a trajectory will produce the configuration of all components at that time.
To validate that a given trajectory is indeed the result of a sequence of timed LTS reductions, \citet{YaoPOPL2025} introduce the notion of a \textit{computable trajectory}, which is a pair of a trajectory and the sequence of reductions which validates it.
The logical relation is then phrased in terms of these computable trajectories, requiring that configurations of processes come equipped with some evidence that they semantically adhere to a given schedule. 



The fact that inhabitants of \citet{YaoPOPL2025}'s logical relation
not only are the actual configurations of processes
but also their associated computable trajectories---%
the evidence that the configuration can step from one time point to another one---%
makes the relation \emph{proof-relevant}
\citep{BentonTLCA2013,BentonPOPL2014,SterlingHarperJACM2021}.
This evidence mirrors the communication structure prescribed by the session type,
giving rise to an \emph{algebra} on trajectories
with operations to \textit{interleave} (combining the trajectories of two concurrently composed process configurations),
\textit{partition} (truncating a trajectory at a given instant),
and \textit{concatenate} (sequentially composing two trajectories).
The definition of these operations necessitates a rich equational theory on computable trajectories,
challenging any mechanization effort in a proof assistant based on intensional type theory.


In this paper, we present the first mechanization of \citet{YaoPOPL2025}'s verification framework,
certifying protocol compliance of timed message-passing heterogeneous systems. 
We carry out the mechanization in the Rocq proof assistant,
comprising the logical relation with computable trajectories and algebras as well as
a timed ILLST system with Fundamental Theorem of the Logical Relation (FTLR),
showing that well-typed ILLST processes inhabit the logical relation.
In contrast to \citet{YaoPOPL2025}'s pen-and-paper proof,
our framework is typed and parametric in the choice of a language defining an inhabitant of the logical relation.

\paragraph{Contributions.}
Our main contributions are
\begin{itemize}
\item A mechanization of a proof-relevant logical relation for protocol compliance of timed message-passing systems;
\item A formalization of a timed ILLST system with soundness proof (FTLR);
\item A formalization of computable trajectories with commuting lemmas to combat the prolific use of transports.
\end{itemize}

\paragraph{Artifact}
All our results were carried out in the Rocq proof assistant,
provided as an artifact with this submission.

\section{Semantic Verification Framework}%
\label{sec:verification-framework}
In this section, we go through the definition of the verification framework as implemented in the mechanization.
Unlike \citet{YaoPOPL2025}'s formalization,
our mechanization is typed and parametric in a process language.
Our mechanization relies on a few admitted axioms,
which we refer to in this paper as assumptions.

\subsection{Overview}

The verification framework depends on the definition of three critical components:
the \emph{computational model}, which is defined with a labelled transition system (LTS), the \emph{protocol specification language} of
timed types for message-passing objects, and the \emph{logical relation}.
In \Cref{sub:comp-model}, we lay out the details of the labelled transition system and how it expresses both the intention and timing of a communication for a message-passing process.
Using the LTS, we establish the \textit{process language} and how we use it as an abstraction for arbitrary components in a heterogeneous system.
In \Cref{sub:protocol-spec-language}, we describe the types used to represent and verify the timed message-passing protocols.
Finally, in \Cref{sub:logical-relation}, we define the notion of a computable trajectory and the logical relation. 

\subsection{Computational Model}%
\label{sub:comp-model}

In order to provide a unified specification of arbitrary objects in our system, 
our framework defines a \emph{labelled transition system} (LTS) \citep{MilnerBook1980, MilnerBook1999, SangiorgiWalkerBook2001}.
Defining the computational model in terms of an LTS allows one to \emph{abstractly} describe communications between different parties in a system,
making it particularly well-suited for a heterogeneous message-passing setting.
In this LTS, each transition is annotated with an \emph{action}, which describes the readiness of an object to engage in a communication.
An action can be a readiness to \emph{send} a message, a readiness to \emph{receive} a message, and an empty action.
The empty action represents an actual message exchange, which occurs when two objects have \emph{complementary} actions (i.e. a send and a receive)
and are able to proceed with communication.
The transitions are also annotated with the point in time at which the communication takes place. 

\subsubsection{Actions and Time}
We proceed with formally defining the LTS, beginning with \emph{actions}.
Actions rely on the notion of a \emph{channel},
which is an identifier that a client (e.g. a controller) who wishes to communicate with an object (e.g. a sensor) chooses for the object.
Channels thus serve as the object's point of contact and can be part of a message's content, i.e. its \emph{payload}:

\begin{defn}[Payload]
A \emph{payload} is one of the following:
\begin{itemize}
\item A boolean-valued \emph{selector}, denoted $π_1$ or $π_2$, or
\item A \emph{closing signal} denoted $()$, or
\item A \emph{channel name} $a ∈ \Labels$.
\end{itemize}
\end{defn}

The set $\Labels$ is some countably infinite set of identifiers.

\begin{defn}[Action]
An \emph{action} is one of the following:
\begin{itemize}
\item The constant $ε$, or
\item A triple $(a, d, p)$ where $a ∈ \Labels$ is a channel name,
$d$ is the \emph{direction}, which can be either $!$ (sending) or $?$ (receiving), and $p$ is a payload.
\end{itemize}
Note that the constant $ε$ stands for an empty action (also known as a \emph{silent transition}).
This corresponds to $τ$-transitions in $π$-calculus~\cite{MilnerBook1980,MilnerBook1999,SangiorgiWalkerBook2001}.
We represent the set of all actions as $\Actions$, whose elements are represented with $\alpha$.
\end{defn}

The transitions also utilize a notion of time in their annotations, which we define through a set of axioms.
Crucial is the notion of a time \emph{interval} (see \citep{YaoPOPL2025}), which includes both right-bounded $[T, T')$ and
right-unbounded $[T, ∞)$ intervals. We decided to work with an explicitly defined \emph{infinity} time point $∞$ to
represent the unbounded end of a time interval, which is greater than any time point other than itself.

\begin{axiom}[Finite Time]
We assume a set $\FinTimeT$ with a decidable, strict total order $<$.
\end{axiom}
This induces a rich structure on $\FinTimeT$, including the existence of a $\min$ and $\max$ operator and
the propositional irrelevance of the $<$ and $≤$ relation.
\begin{defn}[Time]
We extend $\FinTimeT$ to the set $\TimeT$, whose elements are either an element of $\FinTimeT$ or a constant value $∞$.
We then extend the order $<$ on $\FinTimeT$ to $\TimeT$ by stipulating that $∀ t ∈ \FinTimeT.~t < ∞$
and $¬(∞ < ∞)$.
\end{defn}
We maintain the same notation after extending the definition. We then have the following lemma and fact:
\begin{lem}
The order $<$ on $\TimeT$ is still decidable and is a strict total order.
\end{lem}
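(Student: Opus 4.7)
The plan is to prove all four required properties, namely decidability, irreflexivity, transitivity, and trichotomy (totality), by case analysis on whether each element of $\TimeT$ lies in $\FinTimeT$ or equals $\infty$. Since $\TimeT$ is essentially $\FinTimeT + \{\infty\}$, every point can be analyzed in two cases, and the finite-finite subcase is discharged by the corresponding axiom on $\FinTimeT$, while the cases involving $\infty$ follow directly from the defining clauses $\forall t \in \FinTimeT.~t < \infty$ and $\neg(\infty < \infty)$.

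First I would establish decidability: given $x, y \in \TimeT$, I would split into the four subcases. If both are finite, I appeal to the decidability of $<$ on $\FinTimeT$. If $x \in \FinTimeT$ and $y = \infty$, the relation holds by the axiom; if $x = \infty$ and $y \in \FinTimeT$, the relation fails because it would force $\infty < y < \infty$ contradicting the second axiom via transitivity below; if both are $\infty$, the relation fails by the second axiom. In Rocq this amounts to returning the appropriate \texttt{left}/\texttt{right} witness of a \texttt{sumbool} in each branch.

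Next I would prove irreflexivity, transitivity, and trichotomy in turn. Irreflexivity reduces to the finite case and to the explicit $\neg(\infty < \infty)$. For transitivity, given $x < y$ and $y < z$, I would case on $y$: if $y \in \FinTimeT$, then $x$ must be finite (else $\infty < y$ would require $\infty$ to be below a finite time), and the conclusion follows either from finite transitivity (when $z$ is finite) or from the axiom (when $z = \infty$); if $y = \infty$, then $y < z$ is impossible, so the case is vacuous. Trichotomy is handled by the same split: if both are finite, use the finite trichotomy; if exactly one is $\infty$, use the axiom to pick the correct side; if both are $\infty$, they are equal.

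I do not expect a deep obstacle here, since the extension is a simple disjoint sum and every property reduces cleanly to a finite subcase or to one of the two new axioms. The only mildly delicate point in the mechanization is ensuring that the pattern matches on $\TimeT$ line up with the \texttt{sumbool}-valued decision procedure so that reflexivity of equality on the $\infty$ branches is definitional, which keeps later transports (used pervasively in the framework) well-behaved; this is a bookkeeping concern rather than a mathematical one.
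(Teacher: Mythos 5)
Your case-analysis strategy is the right one and matches what the mechanization does: every property of $<$ on $\TimeT$ reduces to the corresponding decidability/irreflexivity/transitivity/trichotomy property on $\FinTimeT$ in the finite--finite subcase, and to the two new clauses otherwise. The paper states this lemma without proof, and there is nothing deeper to it than what you describe.

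One step is stated in a way that would not survive mechanization as written: in the decidability argument you dispose of the subcase $x = \infty$, $y \in \FinTimeT$ by arguing that $\infty < y$ ``would force $\infty < y < \infty$, contradicting $\neg(\infty<\infty)$ via transitivity below.'' That is circular --- transitivity of the extended order is one of the things being proved, and you cannot invoke it inside the decision procedure that precedes it. The correct observation is that the extension of $<$ to $\TimeT$ is \emph{defined} by cases (both finite and the finite order holds; or left finite and right $\infty$), so $\infty < y$ for finite $y$ is false by construction --- there is simply no clause that makes it hold --- and the \texttt{right} branch of the \texttt{sumbool} is inhabited by inversion on the definition, not by a transitivity argument. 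The same remark applies to your transitivity proof, where you rule out $\infty < y$ for finite $y$ by saying it ``would require $\infty$ to be below a finite time'': again this is immediate from the shape of the definition rather than a consequence to be derived. With that adjustment the proof is complete and unproblematic.
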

\begin{fact}
Natural numbers with the usual $<$ relation form a valid $\FinTimeT$.
\end{fact}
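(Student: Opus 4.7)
The plan is to exhibit $\mathbb{N}$ (Rocq's inductive type \textsf{nat}) equipped with its usual strict less-than relation as a witness satisfying every requirement imposed by the $\FinTimeT$ axiom: that the carrier is a set, and that $<$ is both decidable and a strict total order. Since this amounts to discharging an admitted axiom by a concrete instance, the work is entirely a matter of matching interfaces.

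First I would note that $\mathbb{N}$ is trivially a set, being an inductive type inhabiting \textsf{Set}. For decidability of $<$, I would appeal to the Stdlib lemma \textsf{lt\_dec}; writing the proof from scratch, a nested induction on $m$ and $n$ reduces to the four cases for pairs of constructors $0$ and $\mathsf{S}$, each resolved immediately. For the three strict total order properties, I would verify irreflexivity (case analysis establishing that $n < n$ is not derivable), transitivity (induction on the derivation of $n < k$), and trichotomy (double induction on $m$ and $n$, producing one of the three disjuncts in each leaf). Each of these is available as a canonical result in Rocq's \textsf{Nat} library---for instance \textsf{Nat.lt\_irrefl}, \textsf{Nat.lt\_trans}, and \textsf{Nat.lt\_trichotomy}---and together they are packaged into whatever record or module signature the mechanization uses to encapsulate the axiom.

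The main obstacle is not conceptual depth but organizational bookkeeping. The comment following the axiom states that $\FinTimeT$ should induce a rich derived structure---including $\min$ and $\max$ operators and propositional irrelevance of $<$ and $\leq$---so for this instance to be genuinely useful downstream, those consequences must be supplied too. The $\min$ and $\max$ operators are definable by comparing with the decidable $<$; propositional irrelevance of $<$ and $\leq$ on $\mathbb{N}$ follows from UIP on \textsf{nat}, which is itself a consequence of decidable equality (Hedberg's theorem, or the direct Stdlib proof \textsf{PeanoNat.Nat.eq\_dec} combined with \textsf{eq\_proofs\_unicity}). Hence the additional obligations are routine, but they must be catalogued explicitly to close the instance cleanly rather than silently trusting Rocq's proof-irrelevance defaults.
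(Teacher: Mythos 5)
Your proposal is correct and matches the only sensible route: the paper states this as an unproved fact, and discharging it is exactly the interface-matching exercise you describe (decidability of $<$ on \textsf{nat} plus irreflexivity, transitivity, and trichotomy from the standard library). One small note: the $\min$/$\max$ operators and the propositional irrelevance of $<$ and $\leq$ are derived \emph{generically} from the axiom's requirements (the paper says the axiom ``induces'' them), so they come for free once the instance satisfies the signature and need not be re-established per instance.
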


\subsubsection{Atomic Processes}
In our computational model, the smallest unit of computation is an \emph{atomic process},
which can communicate with other processes through channels,
akin to processes in the $\pi$-calculus~\cite{MilnerBook1980,MilnerBook1999,SangiorgiWalkerBook2001}.
So far we have been referring to such a unit colloquially as an ``object.''

\begin{defn}[Atomic Process]
We define a set of \emph{atomic process}, written $\AtomicObj{A}$,
which has the following elements:
\begin{itemize}
\item $\procEx a {\nameless{P}}$, for $a ∈ \Labels$ and $\nameless{P} ∈ A$, or
\item $\procFwd a b$, for $a, b ∈ \Labels$.
\end{itemize}
\end{defn}

Each atomic process has a \emph{providing channel},
which is the process' identifier (as chosen by the client) and acts as a point of contact.
The channel $a$ in both $\procEx a {\nameless{P}}$
and $\procFwd a b$ is the providing channel.
The set $A$ contains process terms parameterized over a providing channel. In this sense,
the elements of $A$ are considered to be ``nameless,'' and become ``named'' once a providing
channel is supplied through the $\procEx a {-}$ construct.

\subsubsection{Process Language}
We now define the \emph{process language} structure, which we use to abstract over the components of a heterogeneous system. Each structure
defines a set of \emph{nameless objects} ($\NObj$), which describes the ``terms'' in the language that are able to interact with a supplied channel,
and a \emph{transition relation} ($\,\stepsToObj{\alpha}{T}$) on those terms, describing how named terms transition to (possibly many) other atomic processes according to some action at some time.
The choice to have \emph{nameless} terms at the language level not only simplifies the treatment of identifiers in our development,
but also allows objects to be defined independently of a particular choice of providing channel (i.e. they are \emph{polymorphic} in the providing channel).
The latter aspect is especially apt for our target domain. The names of the components in  
heterogeneous systems often take the form of \emph{addresses}, such as IP addresses. These addresses 
are usually assigned as the system is wired up, once the components themselves have already been programmed. 
Conceptually, the functionality of the components are---and should be---independent of the names of the objects. 
This dictum is reflected by our use of nameless objects, which is inspired by the notion of
\emph{nameless family of configurations} in \citep{YaoPOPL2025}.

The definition of the process language structure relies on the notion of a finite multiset, which is given first:

\begin{axiom}[Finite Multisets (FMSet)]\label{ex:fmset}
We assume the existence of \emph{finite multisets}, defined over a set $X$, denoted $\FMSet{X}$,
to be a finite set with possibly repeated elements drawn from $X$.
We denote it as $\FMSet{X}$, along with a disjoint union operator $\fmconcat$ with unit $\varnothing$.
We assume the commutative monoid laws for $\fmconcat$ and $\varnothing$.
We also assume the ability to create singleton multisets, denoted $\set{x}$ for $x ∈ X$,
and their unions are written as $\set{x_1, x_2, \ldots, x_n}$ for $x_i ∈ X$.
\end{axiom}
\begin{axiom}
We assume the ability to extend a function $f : X → Y$ to a function
$\functorMap{f} : \FMSet{X} → \FMSet{Y}$, defined as applying $f$ to each element in the multiset.
This operation should preserve the monoid structure.
\end{axiom}

\begin{defn}[Process Language]\label{defn:proc-lang}
We define a process language to be a structure with the following data:
\begin{itemize}
\item A set of \emph{nameless objects} $\NObj$, which can be used as the argument of $\AtomicObjPrefix$, and
\item A \emph{transition relation} $\stepsToObj\cdot\cdot$ with the following type:
\[
(\NObj × \Labels) × \Actions × \FinTimeT × \FMSet{\AtomicObj{\NObj}} → \textsf{Prop}
\]
\end{itemize}
We also define the following notation:
\begin{itemize}
\item \ProcLang{} for the set of process language structures,
\item $S ⊢ P \stepsToObj{a!c}{T} Ω$ to say that in the language $S ∈ \ProcLang{}$,
  an object $P$ and a set of atomic objects $Ω$ satisfy the relation
  $\stepsToObj{\cdot}{\cdot}$, with arguments being $(a, !, c)$ and $T$,
\item $S ⊢ P \stepsToObj{a?c}{T} Ω$ to say the same but with direction being $?$,
\item $S ⊢ P \stepsToObj{ε}{T} Ω$ to say the same with the action $ε$.
\end{itemize}
\end{defn}

To represent a \emph{heterogeneous} set of components, we use the type
$\sum_{S ∈ \ProcLang}S.\NObj$. Additionally, we write:
\begin{itemize}
\item $\NObj$ as a shorthand for the type $\sum_{S ∈ \ProcLang}S.\NObj$,
\item $\Obj$ as a shorthand for the type $\AtomicObj{\NObj}$.
\end{itemize}
This $Σ$-quantification of the language means the processes can possibly be from different process languages.

With all of these definitions in place, we can now define the operational model our framework relies on:

\begin{defn}
We introduce the following definitions in order to talk about the execution of processes:
\begin{itemize}
\item The multiset of objects $\Cfg := \FMSet{\Obj}$, called \emph{configurations},
  as well as the nameless version of them, $\NCfg := \Cfg × \NObj$,
\item An overloaded version of multiset union $\fmconcat$ extended to between $\Cfg$ and $\NCfg$:
  $Ω_1 \fmconcat{} ⟨Ω_2, \nameless{P}⟩ := (Ω_1 \fmconcat Ω_2, \nameless{P})$,
\item An instantiation operation $\nameless{Ω}[a]$ for $\nameless{Ω} ∈ \NCfg$ and $a ∈ \Labels$,
  which destructs $\nameless{Ω}$ as $(Ω, \nameless{P})$ and returns $Ω \fmconcat{} \procEx a {\nameless{P}}$,
\item A \textit{transition relation} $\stepsTo{\alpha}{T}$ between $\Cfg$,
  given inductively by the rules shown in~\cref{fig:cfg-rules},
\item A multistep version $\CfgStepsOperad{Ω}{T}\CfgSteps\CfgStepsOperad{Ω'}{T'}$ of the
  transition relation where the action is $ε$, given inductively by the rules shown in~\cref{fig:cfg-multi-rules}.
\end{itemize}
\end{defn}

From the definition, we can see that a nameless configuration has a distinguished \emph{root object},
but once instantiated, the root becomes indistinguishable from the rest.

\begin{figure}[h!]
\centering
\begin{mathpar}
\inferrule[Step-Obj]
  {S ⊢ P \stepsToObj{α}{T} Ω'}
  {\set{P} \stepsTo{α}{T} \functorMap{λx.(S, x)}(Ω')} \and
\inferrule[Step-Frame]
  {Ω \stepsTo{α}{T} Ω'}
  {Ω \fmconcat Ω_0 \stepsTo{α}{T} Ω' \fmconcat Ω_0} \and
\inferrule[Step-Fwd]{}
  { \Set{ \procEx a {\nameless{P}}, \procFwd b a } \stepsTo{α}{T} \Set{ \procEx b {\nameless{P}} } } \and
\inferrule[Step-Comm]
  {Ω_1 \stepsTo{a!c}{T} Ω_1' \\ Ω_2 \stepsTo{a?c}{T} Ω_2'}
  {Ω_1 \fmconcat Ω_2 \stepsTo{ε}{T} Ω_1' \fmconcat Ω_2'}
\end{mathpar}
\caption{Configuration single-stepping rules}
\label{fig:cfg-rules}
\end{figure}

The transition rules in~\cref{fig:cfg-rules} are as follows:
\begin{enumerate}
\item \textsc{Step-Obj}: Inclusion from language-specific transition,
\item \textsc{Step-Frame}: The \emph{frame rule}, similar to the one in $π$-calculus \cite{MilnerBook1980, MilnerBook1999,SangiorgiWalkerBook2001},
\item \textsc{Step-Fwd}: Reduction of forwarding processes, and
\item \textsc{Step-Comm}: Performs a message exchange.
\end{enumerate}

If a process steps to an empty multiset, we consider it to be a process that \textit{terminates},
and it disappears from the environment.

\begin{figure}[h!]
\centering
\begin{mathpar}
\inferrule{}{\CfgRefl{Ω}{T} : \CfgStepsOperad{Ω}{T} \CfgSteps \CfgStepsOperad{Ω}{T}} \and
\inferrule
  {T_1 ≤ T_2 \\ σ : \CfgStepsOperad{Ω}{T_2} \CfgSteps \CfgStepsOperad{Ω'}{T_3}}
  {\CfgStepT{Ω}{T_1}{T_2}{σ} : \CfgStepsOperad{Ω}{T_1} \CfgSteps \CfgStepsOperad{Ω'}{T_3}} \and
\inferrule
  {Ω \stepsTo{ε}{T_1} Ω' \\ σ : \CfgStepsOperad{Ω'}{T_1} \CfgSteps \CfgStepsOperad{Ω''}{T_2}}
  {\CfgStepC{Ω}{Ω'}{T_1}{σ} : \CfgStepsOperad{Ω}{T_1} \CfgSteps \CfgStepsOperad{Ω''}{T_2}}
\end{mathpar}
\caption{Configuration multistepping rules}
\label{fig:cfg-multi-rules}
\end{figure}

The transition rules in~\cref{fig:cfg-multi-rules} describe the computation of a configuration over time, where
\begin{enumerate}
  \item $\CfgRefl{Ω}{T}$ represents a ``no-op,'' i.e. a transition which does not perform a computation nor progress time,
  \item $\CfgStepT{Ω}{T_1}{T_2}{σ}$ represents a time extension on a transition. Given $σ$ from $T_2$ to $T_3$,
  $\CfgStepT{Ω}{T_1}{T_2}{σ}$ produces a transition which starts at some earlier time $T_1$, with the same starting configuration $Ω$,
  \item $\CfgStepC{Ω}{Ω'}{T_1}{σ}$ represents a progression via computation step. Given $σ$ from $Ω'$ to $Ω''$
  and a step from $Ω$ to $Ω'$,\\ $\CfgStepC{Ω}{Ω'}{T_1}{σ}$ produces a new transition from $Ω$ to $Ω''$ without additionally progressing time.
\end{enumerate}

We have the following admissible operations on multistepping,
which can be handy in constructing multistepping proofs.
They are all essential for the rest of the development.
\begin{lem}[Multistep Frame]
\begin{mathpar}
\inferrule
  {σ : \CfgStepsOperad{Ω}{T} \CfgSteps \CfgStepsOperad{Ω'}{T'}}
  {\CfgStepsFrame{Ω_0}{σ} : \CfgStepsOperad{Ω\fmconcat Ω_0}{T} \CfgSteps \CfgStepsOperad{Ω'\fmconcat Ω_0}{T'}}
\end{mathpar}
\end{lem}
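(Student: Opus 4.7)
The plan is to prove the lemma by straightforward induction on the structure of the multistepping derivation $\sigma : \CfgStepsOperad{\Omega}{T} \CfgSteps \CfgStepsOperad{\Omega'}{T'}$, which by~\cref{fig:cfg-multi-rules} has exactly three constructors. In each case, I produce a derivation with the same shape but applied to the framed configurations $\Omega \fmconcat \Omega_0$ and $\Omega' \fmconcat \Omega_0$.

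For the reflexivity case $\sigma = \CfgRefl{\Omega}{T}$, I return $\CfgRefl{\Omega \fmconcat \Omega_0}{T}$, which inhabits the required type since framing preserves the fact that the start and end configurations coincide. For the time-extension case $\sigma = \CfgStepT{\Omega}{T_1}{T_2}{\sigma'}$ with inner $\sigma' : \CfgStepsOperad{\Omega}{T_2} \CfgSteps \CfgStepsOperad{\Omega'}{T_3}$, the induction hypothesis supplies $\CfgStepsFrame{\Omega_0}{\sigma'} : \CfgStepsOperad{\Omega \fmconcat \Omega_0}{T_2} \CfgSteps \CfgStepsOperad{\Omega' \fmconcat \Omega_0}{T_3}$, and I re-apply $\CfgStepT{\Omega\fmconcat\Omega_0}{T_1}{T_2}{-}$ to this, re-using the same ordering evidence $T_1 \le T_2$.

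For the interesting case $\sigma = \CfgStepC{\Omega}{\Omega''}{T_1}{\sigma'}$ with premise $p : \Omega \stepsTo{\epsilon}{T_1} \Omega''$ and residual $\sigma' : \CfgStepsOperad{\Omega''}{T_1} \CfgSteps \CfgStepsOperad{\Omega'}{T_2}$, I first lift $p$ by the single-step framing rule \textsc{Step-Frame} from~\cref{fig:cfg-rules} to obtain $\Omega \fmconcat \Omega_0 \stepsTo{\epsilon}{T_1} \Omega'' \fmconcat \Omega_0$. The induction hypothesis gives $\CfgStepsFrame{\Omega_0}{\sigma'} : \CfgStepsOperad{\Omega'' \fmconcat \Omega_0}{T_1} \CfgSteps \CfgStepsOperad{\Omega' \fmconcat \Omega_0}{T_2}$, and I assemble the two with $\CfgStepC{}{}{T_1}{-}$ to produce the result.

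The main obstacle is not the combinatorial structure of the induction, which is essentially a one-to-one translation of constructors, but rather the intensional bookkeeping that the paper highlights as the overarching challenge of its mechanization: the \textsc{Step-Frame} rule attaches the frame on a specific side via $\fmconcat$, and the recursive call produces derivations whose type is judgementally determined by how the frame associates with any pre-existing $\fmconcat$ in $\Omega$. In a setting where $\fmconcat$ is a concrete list-like data structure quotiented by commutative-monoid laws only up to propositional equality, invoking the induction hypothesis may require transporting along associativity, and the framed constructor must be coherently chosen so that these transports line up. This is precisely the kind of step the commuting lemmas advertised in the introduction are meant to handle, so I expect the mechanized proof to be short once those infrastructure lemmas are in place.
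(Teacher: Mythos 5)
Your proof is correct and is the expected argument: the paper states this lemma without an explicit proof, and the intended derivation is exactly your structural induction on the three multistep constructors, mapping each to its framed counterpart and using \textsc{Step-Frame} in the $\operatorname{\mathsf{stepC}}$ case. Your closing remark about transports over the multiset monoid laws accurately reflects the mechanization concerns the paper raises elsewhere, though for this particular lemma the frame sits on the same side as in \textsc{Step-Frame}, so no rewriting is needed.
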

\begin{lem}[Multistep Concat]
\begin{mathpar}
\inferrule
  {σ : \CfgStepsOperad{Ω}{T_1} \CfgSteps \CfgStepsOperad{Ω'}{T_1} \\
  T_1 ≤ T_2 \\
  σ' : \CfgStepsOperad{Ω'}{T_2} \CfgSteps \CfgStepsOperad{Ω''}{T_3}}
  {\CfgStepsConcat{σ}{σ'} : \CfgStepsOperad{Ω}{T_1} \CfgSteps \CfgStepsOperad{Ω''}{T_3}}
\end{mathpar}
\end{lem}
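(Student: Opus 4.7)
The plan is to prove a slightly strengthened statement by structural induction on the first trace $\sigma$, and then obtain the stated lemma by specialization. The strengthened statement I would prove is: given $\sigma : \CfgStepsOperad{\Omega}{T_0} \CfgSteps \CfgStepsOperad{\Omega'}{T_0'}$, a bound $T_0' \le T_2$, and $\sigma' : \CfgStepsOperad{\Omega'}{T_2} \CfgSteps \CfgStepsOperad{\Omega''}{T_3}$, one can construct a multistep $\CfgStepsOperad{\Omega}{T_0} \CfgSteps \CfgStepsOperad{\Omega''}{T_3}$. The lemma as stated is the special case $T_0 = T_0' = T_1$.

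I would then do structural induction on $\sigma$. In the reflexivity case $\sigma = \CfgRefl{\Omega}{T_0}$, we have $\Omega = \Omega'$ and $T_0 = T_0' \le T_2$, so $\CfgStepT{\Omega}{T_0}{T_2}{\sigma'}$ produces the required trace by extending $\sigma'$ backward to start at $T_0$. In the time-extension case $\sigma = \CfgStepT{\Omega}{T_0}{T_a}{\sigma_0}$ with $T_0 \le T_a$ and $\sigma_0 : \CfgStepsOperad{\Omega}{T_a} \CfgSteps \CfgStepsOperad{\Omega'}{T_0'}$, the inductive hypothesis applied to $\sigma_0$ and $\sigma'$ yields $\sigma_1 : \CfgStepsOperad{\Omega}{T_a} \CfgSteps \CfgStepsOperad{\Omega''}{T_3}$, and I wrap this as $\CfgStepT{\Omega}{T_0}{T_a}{\sigma_1}$. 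In the computation-step case $\sigma = \CfgStepC{\Omega}{\Omega_0}{T_0}{\sigma_0}$, the inductive hypothesis on $\sigma_0$ gives $\sigma_1 : \CfgStepsOperad{\Omega_0}{T_0} \CfgSteps \CfgStepsOperad{\Omega''}{T_3}$, which I wrap as $\CfgStepC{\Omega}{\Omega_0}{T_0}{\sigma_1}$.

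The main obstacle, which this generalization is specifically designed to avoid, lies in the $\CfgStepT$ case. If one tries to prove the lemma \emph{as stated} by direct induction on $\sigma$, then in the $\CfgStepT$ case one faces $\sigma_0 : \CfgStepsOperad{\Omega}{T_a} \CfgSteps \CfgStepsOperad{\Omega'}{T_1}$, whose start time $T_a$ no longer matches its end time $T_1$; the inductive hypothesis (which insists on matching endpoints) therefore does not directly apply. One can recover via an auxiliary monotonicity lemma---start time of a multistep is at most its end time---forcing $T_a = T_1$, and then transporting $\sigma_0$ along this equality before invoking the IH. In Rocq's intensional setting such a transport is not definitionally trivial and would propagate through subsequent uses of $\operatorname{\mathsf{stepCat}}$ via commuting lemmas, which is precisely the kind of situation the paper highlights. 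By letting the endpoints $T_0$ and $T_0'$ of the first trace vary independently, the induction becomes purely structural and the construction is transport-free.
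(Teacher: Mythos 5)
Your proof is correct. The paper states this lemma without giving its proof, but your construction --- structural induction on $\sigma$ after first generalizing its endpoints so that only the end time is constrained (by $T_0' \le T_2$) rather than forcing start and end times to coincide --- is the natural one and is what a mechanization must do; in particular, your diagnosis that the ungeneralized statement breaks in the $\mathsf{stepT}$ case (where the subderivation starts at some $T_a$ with $T_1 \le T_a$ and one must invoke the monotonicity lemma to force $T_a = T_1$ and then transport) is exactly right, and the strengthening is the correct, transport-free fix.
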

\begin{lem}[\textsf{stepT} on the Right]
\begin{mathpar}
\inferrule
  {σ : \CfgStepsOperad{Ω}{T_1} \CfgSteps \CfgStepsOperad{Ω'}{T_2} \\ T_2 ≤ T_3}
  {\CfgStepTR{Ω}{T_2}{T_2}{σ} : \CfgStepsOperad{Ω}{T_1} \CfgSteps \CfgStepsOperad{Ω'}{T_3}} \and
\end{mathpar}
\end{lem}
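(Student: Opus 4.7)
The plan is to proceed by induction on the structure of $\sigma$, as defined by the rules in \cref{fig:cfg-multi-rules}. Intuitively, the rule $\CfgStepT$ lets us prepend a time-advance at the \emph{beginning} of a trace, so to advance the ending time we must descend all the way to the tail of $\sigma$ and replace the terminal reflexivity with a time-advanced one. The ordinary recursion principle of the inductive type of $\CfgStepsOperad{\cdot}{\cdot}\CfgSteps\CfgStepsOperad{\cdot}{\cdot}$ gives exactly this structural descent.

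For the base case $\sigma = \CfgRefl{\Omega}{T_1}$, we have $\Omega = \Omega'$ and $T_1 = T_2$, so the hypothesis $T_2 \le T_3$ becomes $T_1 \le T_3$, and we take the result to be $\CfgStepT{\Omega}{T_1}{T_3}{\CfgRefl{\Omega}{T_3}}$. For the inductive case $\sigma = \CfgStepT{\Omega}{T_1}{T_1'}{\sigma_0}$, where $T_1 \le T_1'$ and $\sigma_0 : \CfgStepsOperad{\Omega}{T_1'}\CfgSteps\CfgStepsOperad{\Omega'}{T_2}$, the induction hypothesis applied to $\sigma_0$ (still using $T_2 \le T_3$) yields some $\sigma_0' : \CfgStepsOperad{\Omega}{T_1'}\CfgSteps\CfgStepsOperad{\Omega'}{T_3}$, and we repackage it with the same time witness as $\CfgStepT{\Omega}{T_1}{T_1'}{\sigma_0'}$. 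For the remaining case $\sigma = \CfgStepC{\Omega}{\Omega_m}{T_1}{\sigma_0}$, with step $\Omega \stepsTo{\epsilon}{T_1} \Omega_m$ and $\sigma_0 : \CfgStepsOperad{\Omega_m}{T_1}\CfgSteps\CfgStepsOperad{\Omega'}{T_2}$, we apply the induction hypothesis to $\sigma_0$ and wrap the result with the same $\CfgStepC$ constructor, using the same computation step at $T_1$.

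I would expect the \emph{only} non-routine obstacle in the Rocq mechanization to be bookkeeping of dependent indices: the type of $\sigma$ is indexed by $\Omega$, $\Omega'$, $T_1$, and $T_2$, so when inverting the $\CfgStepT$ case one must carefully generalize over the intermediate time $T_1'$ (and over the ending configuration and time in the $\CfgStepC$ case) before invoking the induction hypothesis, otherwise the recursive call will be ill-typed. Note that the earlier $\CfgStepsConcat$ lemma is not directly applicable here, because its first argument is constrained to have equal start and end time, which is precisely what our general $\sigma$ need not satisfy; so direct structural induction, rather than a reduction to concatenation, is the cleanest route.
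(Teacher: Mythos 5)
Your proof is correct and is the natural structural recursion on $σ$ that the mechanization performs: the paper states this lemma as admissible without exhibiting a proof, and your three cases (replacing the terminal $\CfgRefl{Ω}{T_3}$ under a $\CfgStepT{}{}{}{-}$ wrapper in the base case, and pushing the induction hypothesis through the $\CfgStepT{}{}{}{-}$ and $\CfgStepC{}{}{}{-}$ constructors otherwise) typecheck exactly as written. Your side remark is also accurate: as stated, $\CfgStepsConcat{-}{-}$ requires its first argument to start and end at the same time, so a reduction to concatenation with a reflexivity step is not directly available.
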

\begin{lem}\label{lem:multistep-interleave-lemmas}
The following are true:
\begin{mathpar}
\inferrule
  { σ : \CfgStepsOperad{Ω_1}{T_1} \CfgSteps \CfgStepsOperad{Ω_1'}{T_1'} \\
    σ' : \CfgStepsOperad{Ω_2}{T_2} \CfgSteps \CfgStepsOperad{Ω_2'}{T_2'} \\
    T_1 ≤ T_2 }
  { \CfgStepsInterleaveL{σ}{σ'} : \CfgStepsOperad{Ω\fmconcat Ω_2}{T_1}
      \CfgSteps \CfgStepsOperad{Ω'\fmconcat Ω_2'}{\max(T_1', T_2')} }

\inferrule
  { σ : \CfgStepsOperad{Ω_1}{T_1} \CfgSteps \CfgStepsOperad{Ω_1'}{T_1'} \\
    σ' : \CfgStepsOperad{Ω_2}{T_2} \CfgSteps \CfgStepsOperad{Ω_2'}{T_2'} \\
    T_2 ≤ T_1 }
  { \CfgStepsInterleaveR{σ}{σ'} : \CfgStepsOperad{Ω\fmconcat Ω_2}{T_2}
      \CfgSteps \CfgStepsOperad{Ω'\fmconcat Ω_2'}{\max(T_1', T_2')} }
\end{mathpar}
\end{lem}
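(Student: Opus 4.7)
The plan is to prove both statements by mutual recursion, focusing on stepIL and obtaining stepIR symmetrically. Neither a pure induction on $\sigma$ nor on $\sigma'$ alone will work, because whichever multistep is "longer" in time has to cede control to the other when its next time-extension overshoots the other's start time. I would therefore set up a simultaneous well-founded recursion on $|\sigma| + |\sigma'|$, or, equivalently, prove the conjunction of both conclusions by induction on the pair; in either case the mutual calls land on a strictly smaller $\sigma_0$ (resp.\ $\sigma_0'$) with the other $\sigma$ unchanged, so the combined measure decreases.

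For stepIL I would case-analyze $\sigma$. When $\sigma = \CfgRefl{\Omega_1}{T_1}$ we have $T_1' = T_1 \le T_2 \le T_2'$, so $\max(T_1', T_2') = T_2'$; frame $\sigma'$ with $\Omega_1$ using the Multistep Frame lemma to get a multistep on $\Omega_1 \fmconcat \Omega_2$ starting at $T_2$, then shift its start back to $T_1$ with \textsf{stepT}. When $\sigma = \CfgStepC{\Omega_1}{\Omega_1''}{T_1}{\sigma_0}$, invoke the stepIL IH on $\sigma_0, \sigma'$, frame the head silent transition with $\Omega_2$ via \textsc{Step-Frame}, and prepend it with \textsf{stepC}. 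When $\sigma = \CfgStepT{\Omega_1}{T_1}{T_*}{\sigma_0}$, split on whether $T_* \le T_2$: if so the stepIL IH on $\sigma_0, \sigma'$ applies directly; otherwise $T_2 \le T_*$ and we make the mutual call into stepIR on $\sigma_0, \sigma'$. In both branches a trailing \textsf{stepT} carries the start from $T_*$ (respectively $T_2$) down to $T_1$. For stepIR, I would induct dually on $\sigma'$ with the symmetric case split, calling into stepIL from the stepT branch when the pause in $\sigma'$ exceeds $T_1$. Alternatively, once stepIL is proven, stepIR is immediate by commutativity of $\fmconcat$ together with $\max(T_2', T_1') = \max(T_1', T_2')$; this avoids a true mutual recursion at the cost of establishing stepIL strong enough to internalize stepIR's inductive need, which it does via the mutual call above.

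The hard part will not be the combinatorics of the case analysis but the bookkeeping in Rocq. The target end time $\max(T_1', T_2')$ is fixed across all recursive calls, yet the sub-multistep handed back by each IH is typed at whatever end time the underlying induction delivers, so reconciling the two forces transports across max-identities ($\max(T_1', T_2') = T_2'$ when $T_1' \le T_2$, etc.). Additionally, swapping $\Omega_1 \fmconcat \Omega_2$ with $\Omega_2 \fmconcat \Omega_1$ for the stepIR-from-stepIL argument transports along monoid-commutativity of $\fmconcat$. Both kinds of rewrites propagate through the recursion and, unless managed, clog the goal with opaque transports; I expect to discharge them via the paper's systematic commuting-lemma discipline, exactly the technique the introduction advertises as a contribution.
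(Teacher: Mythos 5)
Your proposal matches the paper's proof, which likewise proceeds by mutual induction on the two statements, manually threading the induction hypothesis of the $\operatorname{\mathsf{stepIR}}$ lemma into the $\operatorname{\mathsf{stepIL}}$ one (and vice versa) exactly as in your $\operatorname{\mathsf{stepT}}$ case split. Your anticipated transport headaches over $\max$-identities and $\fmconcat$-commutativity are also precisely the ones the paper reports resolving with its commuting-lemma discipline.
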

\begin{proof}
By mutual-induction. In practice, we have to manually pass the induction hypothesis of the latter lemma
to the former to get the mutual induction to work.
The patched 
\end{proof}
\begin{lem}[Multistep Interleave]\label{lem:multistep-interleave}
\begin{mathpar}
\inferrule
  {σ : \CfgStepsOperad{Ω_1}{T_1} \CfgSteps \CfgStepsOperad{Ω_1'}{T_1'} \\
   σ' : \CfgStepsOperad{Ω_2}{T_2} \CfgSteps \CfgStepsOperad{Ω_2'}{T_2'}}
  {\CfgStepsInterleave{σ}{σ'} : \CfgStepsOperad{Ω\fmconcat Ω_2}{\min(T_1, T_2)}
      \CfgSteps \CfgStepsOperad{Ω'\fmconcat Ω_2'}{\max(T_1', T_2')}}
\end{mathpar}
\end{lem}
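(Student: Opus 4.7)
The plan is to reduce the lemma to the two auxiliary interleaving lemmas already established in \cref{lem:multistep-interleave-lemmas} by a case analysis on the ordering of the starting times $T_1$ and $T_2$. Since $\TimeT$ carries a decidable strict total order (and hence a decidable $\leq$), one of $T_1 \leq T_2$ or $T_2 \leq T_1$ must hold. In the first branch we have $\min(T_1, T_2) = T_1$, and $\CfgStepsInterleaveL{\sigma}{\sigma'}$ from \cref{lem:multistep-interleave-lemmas} supplies a term of type
\[
\CfgStepsOperad{\Omega_1 \fmconcat \Omega_2}{T_1} \CfgSteps \CfgStepsOperad{\Omega_1' \fmconcat \Omega_2'}{\max(T_1', T_2')}.
\]
In the second branch we have $\min(T_1, T_2) = T_2$, and $\CfgStepsInterleaveR{\sigma}{\sigma'}$ dually yields
\[
\CfgStepsOperad{\Omega_1 \fmconcat \Omega_2}{T_2} \CfgSteps \CfgStepsOperad{\Omega_1' \fmconcat \Omega_2'}{\max(T_1', T_2')}.
\]
The ending time is $\max(T_1', T_2')$ in both cases, so no further case split is needed on the right endpoint.

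Concretely, I would define $\CfgStepsInterleave{\sigma}{\sigma'}$ by pattern matching on the decidability witness for $T_1 \leq T_2$, taking the \textsf{stepIL} branch when it holds and the \textsf{stepIR} branch when it does not (using the strict inequality $T_2 < T_1$ to obtain $T_2 \leq T_1$). The whole lemma then reduces to a one-line case analysis followed by direct application of the previous lemma; no induction is required here, since the recursive work has already been absorbed into \cref{lem:multistep-interleave-lemmas}.

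The main obstacle, in keeping with the paper's overall theme, is making the types line up on the nose. The conclusion demands a multistepping whose starting time is literally $\min(T_1, T_2)$, whereas the auxiliary lemma hands back one starting at $T_1$ (respectively $T_2$). If $\min$ is defined by the same case split on $\leq$ used in the proof, these are definitionally equal in each branch and the coercion is free. Otherwise one is forced to transport across the propositional equality $\min(T_1, T_2) = T_1$, producing exactly the sort of transport clutter the paper is out to tame. The right choice of definition for $\min$ (aligned with the decidability procedure used in the case split) should let us avoid explicit transports entirely; if not, the commuting lemmas developed later in the paper are the intended tool.
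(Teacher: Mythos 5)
Your proposal matches the paper's proof, which simply invokes \cref{lem:multistep-interleave-lemmas}: the intended argument is exactly your case split on the decidable ordering of $T_1$ and $T_2$, dispatching to $\operatorname{\mathsf{stepIL}}$ or $\operatorname{\mathsf{stepIR}}$. Your closing observation about the $\min(T_1,T_2)$ type mismatch and the resulting transports is also precisely the issue the paper flags in its mechanization section and resolves with commuting lemmas.
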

\begin{proof}
By~\Cref{lem:multistep-interleave-lemmas}.
\end{proof}
\begin{lem}
If $σ : \CfgStepsOperad{Ω}{T} \CfgSteps \CfgStepsOperad{Ω'}{T'}$, then $T ≤ T'$.
\end{lem}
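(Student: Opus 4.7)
The plan is to prove the lemma by structural induction on the derivation $\sigma : \CfgStepsOperad{\Omega}{T} \CfgSteps \CfgStepsOperad{\Omega'}{T'}$, according to the three rules in~\cref{fig:cfg-multi-rules}. The ingredients needed are just reflexivity and transitivity of $\leq$ on $\TimeT$, both of which follow from the strict total order structure promised by the axiom on $\FinTimeT$ and its extension to $\TimeT$.

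First I would dispatch the reflexivity case $\CfgRefl{\Omega}{T}$, where the start and end times are literally the same $T$, so $T \leq T$ is immediate. Next, I would handle the computation step case $\CfgStepC{\Omega}{\Omega'}{T_1}{\sigma}$: here the ambient time $T_1$ does not change, and the tail $\sigma : \CfgStepsOperad{\Omega'}{T_1} \CfgSteps \CfgStepsOperad{\Omega''}{T_2}$ gives $T_1 \leq T_2$ directly by the inductive hypothesis, which is exactly what is needed.

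The only mildly interesting case is the time-extension step $\CfgStepT{\Omega}{T_1}{T_2}{\sigma}$. The premise of this constructor gives $T_1 \leq T_2$ in hand, and the inductive hypothesis applied to the sub-derivation $\sigma : \CfgStepsOperad{\Omega}{T_2} \CfgSteps \CfgStepsOperad{\Omega'}{T_3}$ yields $T_2 \leq T_3$. Chaining these via transitivity of $\leq$ on $\TimeT$ delivers $T_1 \leq T_3$, concluding the case.

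There is no real obstacle here: every constructor either preserves time or extends it via an explicit $\leq$-premise, so the induction is essentially a bookkeeping argument. In the Rocq mechanization the only thing to watch out for is that the order lives on the extended time domain $\TimeT$ (including $\infty$), so I would make sure to use the lifted transitivity lemma rather than the one for $\FinTimeT$, and to invoke the propositional irrelevance of $\leq$ if any rewrites on equal times arise when destructing $\sigma$.
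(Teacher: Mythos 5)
Your proof is correct: the paper states this lemma without proof, and the straightforward structural induction on $\sigma$ you give — reflexivity for $\CfgRefl{\Omega}{T}$, the inductive hypothesis directly for $\operatorname{\mathsf{stepC}}$, and the $T_1 \leq T_2$ premise chained with the inductive hypothesis via transitivity for $\operatorname{\mathsf{stepT}}$ — is exactly the intended argument. Your closing remark about working over the extended order is a reasonable mechanization caution, though note that the multistep judgment's endpoints are finite times (cf.\ $\projStepTo : \FinTimeT$), so the order on $\FinTimeT$ suffices.
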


\subsection{Protocol Specification Language}%
\label{sub:protocol-spec-language}
In this section, we define the language used to prescribe protocols for arbitrary timed message-passing systems.
To do this, we use \emph{session types}
\citep{HondaCONCUR1993,HondaESOP1998,HondaPOPL2008},
in particular \textit{intuitionistic linear logic session types} (ILLST)
\citep{CairesCONCUR2010,ToninhoESOP2013}.
ILLST are well suited for describing the behaviors of message-passing systems
because they provide a clear distinction between client-side and provider-side actions.
In order to also account for timing constraints in these protocols, ILLST are additionally equipped with \textit{temporal predicates}, which allow us to describe the time constraints of a protocol in addition to its intended behavior.
The grammar of \textit{timed} ILLST is given below:

\begin{align*}
A, B, C ::= 1^{t.p} &\mid A_1 ⊸^{t.p} A_2 \mid A_1 ⊗^{t.p} A_2 \\
 &\mid A_1 \lwith^{t.p} A_2 \mid A_1 ⊕^{t.p} A_2
\end{align*}
The notation $t.p$ (which we call the \textit{temporal predicate}) contains a binder $t$ for a time variable and a proposition $p$ on $t$.
The scoping of time variables is best understood through the formation rules of form
$\mathcal{G} ⊢ \isType{A}$, where $\mathcal{G}$ is the context of time variables.
We write $\mathcal{G} ⊢ p : \KwProp$ to mean that $p$ is a well-formed proposition
using the time variables from $\mathcal{G}$.
In~\citet{YaoPOPL2025}, propositions are generated from the usual propositional logic connectives
and time (in)equalities, but our development is not limited to these.

Since all binary connectives have the same scoping rule,
we will only show the rule for $1$ and $⊗$:
\begin{mathpar}
\inferrule{ \mathcal{G}, t ⊢ p : \KwProp }{ \mathcal{G} ⊢ \isType{1^{t.p}} } \and
\inferrule
  { \left( \mathcal{G},t ⊢ \isType{A_i} \right)_{i∈\set{1, 2}} \\\\
    \mathcal{G}, t ⊢ p : \KwProp }
  { \mathcal{G} ⊢ \isType{A_1 ⊗^{t.p} A_2} }
\end{mathpar}
We define the following operator that extracts the proposition from a type
and instantiates it with time $T$. The binary cases are also similar,
so we only show the ones for $1$ and $⊗$:
\begin{align*}
\propOf(1^{t.p}, T) &= p[T/t] \\
\propOf(A_1 ⊗^{t.p} A_2, T) &= p[T/t]
\end{align*}

For each of the above types, the superscript $t.p$ indicates the time at which a communication may occur (i.e. at a time $t$ satisfying the predicate $p$), while the connective itself describes the nature of the communication.
The connectives include constructs for \emph{sequencing} messages and \emph{branching}. Sequencing is expressed with
the types $A_1 ⊸^{t.p} A_2$ and $A_1 ⊗^{t.p} A_2$. The type $A_1 ⊸^{t.p} A_2$ indicates that, after \emph{receiving} a channel of type $A_1$,
the protocol transitions to behaving as type $A_2$. Dually, type $A_1 ⊗^{t.p} A_2$ denotes the \emph{sending} of a
channel of type $A_1$ and proceeding as type $A_2$ afterwards. Branching is expressed by the types $A_1 \lwith^{t.p} A_2$ and $A_1 ⊕^{t.p} A_2$. 
$A_1 \lwith^{t.p} A_2$ \emph{offers} a choice between channels of type $A_1$ and $A_2$ (i.e. it is ready to \emph{receive} channels of either type),
and $A_1 ⊕^{t.p} A_2$ \emph{makes} a choice between channels of type $A_1$ and $A_2$ (i.e. it is ready to \emph{send} channels of either type).
The choice is conveyed by receiving and sending the labels $π_1$ or $π_2$, which are boolean-valued indicators.
The type $1$ denotes the terminal state of a protocol.

We illustrate timed protocol specification using timed ILLST on the example

$$A_1 ⊸^{t_1.t_0 \leq t_1 \leq t_0 + 15} (A_2 ⊗^{t_2.t_2 = t_1 + 10} 1^{t_3.t_2 \leq t_3})$$

\noindent which requires a process to receive a channel of type $A_1$ at some time $t_1$,
which is at most 15 units of time after an initial time $t_0$,
after which the process must send a channel of type $A_2$ at some time $t_2$,
which is exactly 10 units of time after the receipt of the channel,
after which the process terminates.

\subsection{Logical Relation}%
\label{sub:logical-relation}

With both the operational model and specification language in place, we may now proceed with defining the logical relation.
Before we can give the full definition, however, we must introduce the notion of a trajectory, which relates program execution states to points in time.

\subsubsection{Trajectories}
\label{sub:trajectories}

We proceed with the definition of a trajectory, as well as some concrete trajectories.

\begin{defn}[Trajectory]
A \emph{trajectory} $s$ is a function from a time interval to $\Cfg$,
represented with the following type:
\[ s : (T_0 : \TimeT) → T ≤ T_0 → T_0 < T' → \Cfg \]

\noindent Here, $T$ and $T'$ denote the time interval $[T, T')$, over which the trajectory is defined, where $T_0$ is a point in time within the interval.
For brevity, when we apply a trajectory to a time $T$,
we implicitly assume that the proofs for the bound conditions are also passed and denote the application as simply $s[T]$.
We also define $\trajTy{T}{T'}$ and $[T, T') → \Cfg$ as shorthand for the type $(T_0 : \TimeT) → T ≤ T_0 → T_0 < T' → \Cfg$.
\end{defn}

\begin{defn}[Constant Trajectory]
For configuration $Ω$ and times $T, T'$, the \emph{constant trajectory} $\constTraj{Ω}{T}{T'}$ is defined as the constant function mapping all times to $Ω$.
\end{defn}
\begin{defn}[Trajectory Concat]
For $s_1 : \trajTy{T_1}{T_2}$ and $s_2 : \trajTy{T_2}{T_3}$,
the \emph{concatentation} of trajectories $\concatTraj{s_1}{s_2} : \trajTy{T_1}{T_3}$ is defined by piecing the two trajectories together.
Since time has a decidable strict order, if $T < T_2$ at any time $T$, $\concatTraj{s_1}{s_2}[T]$ returns $s_1[T]$; otherwise, it returns $s_2[T]$.
\end{defn}
\begin{defn}[Trajectory Extension]
For $s : \trajTy{T_1}{T_2}$, $Ω ∈ \Cfg$, and $T_3 ≤ T_1$, we define
$\extendTraj{Ω}{T_3}{s} : \trajTy{T_3}{T_2}$ as the extension of $s$ to start at time $T_3$
by concatenating the constant trajectory $\constTraj{Ω}{T_3}{T_1}$ with $s$.
\end{defn}
\begin{defn}[Trajectory Partition]
For $s : \trajTy{T_1}{T_3}$ and time $T_2$ such that $T_1 ≤ T_2 < T_3$,
we define:
\begin{itemize}
\item $\partBeforeTraj{s}{T_2} : \trajTy{T_1}{T_2}$, the segment of $s$ from time $T_1$ to $T_2$ (i.e. the \emph{before} partition), and
\item $\partAfterTraj{s}{T_2} : \trajTy{T_2}{T_3}$, the segment of $s$ from time $T_2$ to $T_3$ (i.e. the \emph{after} partition).
\end{itemize}
\end{defn}
\begin{defn}[Trajectory Interleave]
For $s_1, s_2 : \trajTy{T}{T'}$, the \emph{interleave} trajectory $\interleaveTraj{s_1}{s_2} : \trajTy{T}{T'}$ is defined by
mapping $T_0$ to $s_1[T_0] \fmconcat{} s_2[T_0]$.
\end{defn}

We then have the following properties on trajectories, which are largely corollaries of function extensionality:
\begin{lem}
For $s ∈ \trajTy{T}{T'}$, the following are true:
\begin{align*}
\extendTraj{Ω}{T}{s} &= s      \\
\extendTraj{Ω}{T_0}{s}[T_1] &= s[T_1] &\text{ for } T ≤ T_1 < T' \\
\extendTraj{Ω}{T_0}{s}[T_1] &= Ω     &\text{ for } T_0 ≤ T_1 < T
\end{align*}
\end{lem}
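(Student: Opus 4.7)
The plan is to unfold the definitions of $\extendTraj{Ω}{T_0}{-}$ and $\concatTraj{-}{-}$ and then reason separately about the three equations. Recall that, by \Cref{defn:trajectory-extension-above}, $\extendTraj{Ω}{T_0}{s}$ is by definition $\concatTraj{\constTraj{Ω}{T_0}{T}}{s}$, and that concatenation performs a case split on whether the queried time $T_1$ is strictly less than the splitting time $T$ or not, using the decidability of $<$ on $\TimeT$.

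For the second equation I would simply apply the definition and use the hypothesis $T ≤ T_1$ to rule out the $T_1 < T$ branch of $\concatTraj{-}{-}$, so the value reduces to $s[T_1]$. Propositional irrelevance of the $≤$ and $<$ proofs, which follows from the strict total order structure assumed on $\FinTimeT$ (and lifted to $\TimeT$), is needed so that the bound proofs produced by the case analysis match the bound proofs expected by $s$. Likewise, for the third equation, the hypothesis $T_1 < T$ forces the other branch of $\concatTraj{-}{-}$, returning the value of the constant trajectory $\constTraj{Ω}{T_0}{T}$ at $T_1$, which is $Ω$ by definition.

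The first equation is more subtle because it equates two functions, not two values. I would proceed by function extensionality: fix a time $T_1$ together with proofs of $T ≤ T_1$ and $T_1 < T'$, and show that $\extendTraj{Ω}{T}{s}[T_1] = s[T_1]$. This is the same kind of reduction as above: since $T_1 ≥ T$, the $\concatTraj{-}{-}$ case split selects the right branch and yields $s[T_1]$. The remaining subtlety is that the two sides, a priori, have different underlying proofs of their bound conditions; propositional irrelevance of $<$ and $≤$ lets us identify them.

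The main obstacle in the mechanization is exactly this dependent typing: trajectories are dependent functions in three arguments, two of which are proofs, so naive rewriting leaves transports behind. The key is to consistently discharge these transports by appealing to the propositional irrelevance of the order relations on $\TimeT$, which is precisely what we get from the strict total order assumption. In the pen-and-paper setting these three equations look like one-line consequences of function extensionality, but mechanizing them cleanly is what foreshadows the systematic treatment of transports described later in the paper.
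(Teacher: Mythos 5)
Your proposal is correct and matches the paper's treatment: the paper offers no detailed argument beyond noting that these properties are ``largely corollaries of function extensionality,'' and your unfolding of $\extendTraj{Ω}{T_0}{s}$ as $\concatTraj{\constTraj{Ω}{T_0}{T}}{s}$ followed by the decidable case split on $T_1 < T$ is exactly the intended computation. Your remarks on discharging the bound-proof mismatches via the propositional irrelevance of $<$ and $≤$ (which the paper derives from the strict total order axiom) correctly identify the only real mechanization friction.
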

\begin{lem}
The $\concatTraj{-}{-}$ operator is associative.
\end{lem}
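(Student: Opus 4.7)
The plan is to reduce associativity of $\operatorname{\mathsf{cat}}$ to pointwise equality on trajectories via function extensionality, and then dispatch the pointwise goal by a three-way case analysis on the decidable strict order of $\TimeT$. Concretely, fix $s_1 : \trajTy{T_1}{T_2}$, $s_2 : \trajTy{T_2}{T_3}$, $s_3 : \trajTy{T_3}{T_4}$, so that both $\concatTraj{\concatTraj{s_1}{s_2}}{s_3}$ and $\concatTraj{s_1}{\concatTraj{s_2}{s_3}}$ have type $\trajTy{T_1}{T_4}$. The first step is to apply functional extensionality (which the preceding lemma already relies on) together with the fact that trajectories are functions of a time point plus two bound proofs; after \emph{funext} and propositional irrelevance of $\leq$ and $<$ (which is available because $<$ on $\FinTimeT$, hence on $\TimeT$, is a decidable strict total order), it suffices to show that for every $T_0 \in [T_1, T_4)$, the two sides compute to the same element of $\Cfg$.

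Next, I would split on the order comparison of $T_0$ with $T_2$ and with $T_3$, yielding three regions. If $T_0 < T_2$: the outer $\operatorname{\mathsf{cat}}$ in each expression selects its left argument; on the left, this gives $\concatTraj{s_1}{s_2}[T_0] = s_1[T_0]$ because $T_0 < T_2$; on the right, we immediately get $s_1[T_0]$. If $T_2 \leq T_0 < T_3$: the left side reduces via the outer comparison $T_0 < T_3$ to $\concatTraj{s_1}{s_2}[T_0]$, which by $T_0 \geq T_2$ yields $s_2[T_0]$; the right side reduces via $T_0 \geq T_2$ to $\concatTraj{s_2}{s_3}[T_0]$, which by $T_0 < T_3$ gives $s_2[T_0]$. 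If $T_3 \leq T_0$: the left side gives $s_3[T_0]$ directly from the outer case; the right side reduces via $T_0 \geq T_2$ and then $T_0 \geq T_3$ to $s_3[T_0]$.

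The main obstacle, and the reason this seemingly trivial fact requires care in a mechanization, is not the mathematical content but the type-theoretic bookkeeping: each reduction step passes bound-proofs of the form $T_2 \leq T_0$ or $T_0 < T_3$ into the sub-trajectories, and these proofs are produced by different derivations on the two sides of the equation. Under intensional type theory this forces transports to appear at the leaves, so the apparent definitional equality $s_2[T_0] = s_2[T_0]$ only holds up to a transport between two proofs of the same proposition. I would discharge these transports by invoking propositional irrelevance for $<$ and $\leq$ on $\TimeT$ (granted by the axiom of a decidable strict total order in \Cref{sub:comp-model}), which is exactly the kind of systematic transport manipulation the paper identifies as the crux of the mechanization effort.
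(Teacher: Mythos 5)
Your proof matches the paper's approach: the paper states these trajectory identities are ``largely corollaries of function extensionality'' and defers the details to the artifact, and your argument---function extensionality, a case split on the decidable order of $\TimeT$, and propositional irrelevance of the $\leq$/$<$ bound proofs to discharge the residual transports---is exactly that argument spelled out. The only implicit assumption worth noting is that the middle cut points are ordered ($T_2 \leq T_3$), which is needed for your three regions to exhaust the interval; this is guaranteed in every use site and is glossed over by the paper as well.
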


A number of lemmas like the above are proven and are useful for the rest of the development, but we omit them for brevity.
We refer the interested reader to the artifact.

\begin{lem}[Properties of Interleave]
We have the following commuting lemma on the $\interleaveTraj{-}{-}$ operator:

\begin{enumerate}
\item For $s : \trajTy{T_1}{T'}$,
\begin{align*}
   \interleaveTraj{&\constTraj{Ω}{T}{T'}}{\extendTraj{Ω'}{T}{s}}\\
={}&\extendTraj{Ω ⊎ Ω'}{T}{\interleaveTraj{\constTraj{Ω}{T_1}{T'}}{\extendTraj{Ω'}{T_1}{s}}}  
\end{align*}

\item For $s : \trajTy{T_1}{T'}$,
\begin{align*}
   \interleaveTraj{&\extendTraj{Ω'}{T}{s}}{\constTraj{Ω}{T}{T'}} \\
={}&\extendTraj{Ω' ⊎ Ω}{T}{\interleaveTraj{\extendTraj{Ω'}{T_1}{s}}{\constTraj{Ω}{T_1}{T'}}}
\end{align*}

\item For $s_1 : \trajTy{T_1}{T'}$, $s_2 : \trajTy{T_2}{T'}$, and $T_1 ≤ T_2$,
\begin{align*}
  \interleaveTraj{\extendTraj{Ω_1}{T}{s_1}}{&\extendTraj{Ω_2}{T}{s_2}} \\
={}&\extendTraj{Ω_1 ⊎ Ω_2}{T}{\interleaveTraj{s_1}{\extendTraj{Ω_2}{T_1}{s_2}}}
\end{align*}

\item For $s_1 : \trajTy{T_1}{T'}$, $s_2 : \trajTy{T_2}{T'}$, and $T_2 ≤ T_1$,
\begin{align*}
  \interleaveTraj{\extendTraj{Ω_1}{T}{s_1}}{&\extendTraj{Ω_2}{T}{s_2}} \\
={}&\extendTraj{Ω_1 ⊎ Ω_2}{T}{\interleaveTraj{\extendTraj{Ω_1}{T_2}{s_1}}{s_2}}
\end{align*}
\end{enumerate}
\end{lem}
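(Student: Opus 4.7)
The plan is to prove each of the four identities by (dependent) function extensionality, reducing the claim to a pointwise equality at an arbitrary time $T_0$ in the interval $[T, T')$. Once such a $T_0$ is on the table, each clause unfolds via the defining equations of $\interleaveTraj{-}{-}$, $\constTraj{-}{-}{-}$, and $\extendTraj{-}{-}{-}$, together with the rewrite lemmas stated just before the statement (which give the pointwise behaviour of $\mathsf{Ext}$ on the sub-intervals $[T, T_1)$ and $[T_1, T')$). The remaining obligation on each branch is a small equality between multisets, discharged by the commutative monoid laws for $\fmconcat$.

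For clause (1), I would unfold the left-hand side at $T_0$ to $Ω \fmconcat \extendTraj{Ω'}{T}{s}[T_0]$ and case-split on $T_0 < T_1$ versus $T_1 \le T_0$; the right-hand side splits at the same threshold through its outer $\mathsf{Ext}$. In the prefix branch both sides reduce to $Ω \fmconcat Ω'$, and in the active branch both reduce to $Ω \fmconcat s[T_0]$ (using $\extendTraj{Ω'}{T_1}{s} = s$ from the earlier lemma). Clause (2) is symmetric, with the only additional ingredient being commutativity of $\fmconcat$ on the payload. Clauses (3) and (4) follow the same template, case-splitting on $T_0 < T_i$ versus $T_i \le T_0$, with $i = 1$ in (3) and $i = 2$ in (4): on the prefix both sides evaluate to the constant $Ω_1 \fmconcat Ω_2$, while on the suffix they both collapse to the interleave of a bare $s_j$ with an $\mathsf{Ext}$, which is exactly the shape of the outer $\mathsf{Ext}$ on the right-hand side.

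The main obstacle is not this pointwise algebra but the bookkeeping of dependent data. Trajectories in our formalisation are dependent functions taking proofs that their argument lies in the prescribed interval, so plain $\mathsf{funext}$ does not suffice---we must use dependent function extensionality, and at each reduction step the bound evidence packaged with $T_0$ on the two sides is only propositionally (not definitionally) equal. Aligning these evidence terms is where the prolific use of transports appears, and is precisely the sort of situation the paper's commuting-lemma machinery is designed to tame; discharging the resulting equalities via propositional irrelevance of $\le$ and $<$ on $\FinTimeT$ will constitute the bulk of the mechanization effort for this lemma.
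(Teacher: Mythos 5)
Your proposal is correct and matches the paper's approach: the paper presents these trajectory identities as consequences of (dependent) function extensionality, reducing to a pointwise case split on whether $T_0$ lies in the constant prefix or the active suffix of each $\operatorname{\mathsf{Ext}}$, with the residual multiset equalities closed by the monoid laws for $\fmconcat$ and the bound-evidence mismatches discharged via propositional irrelevance of $\le$ and $<$. Your closing observation about transports and commuting lemmas is exactly the difficulty the paper's mechanization machinery is built to address.
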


\subsubsection{Computable Trajectories}

\begin{figure}
\begin{mathpar}
\inferrule{}
  { \RRefl{Ω} : \mathcal{R}(\constTraj{Ω}{T}{T'}, \CfgRefl{Ω}{T}) } \and
\inferrule
  { σ : \CfgStepsOperad{Ω}{T_2} \CfgSteps{} \CfgStepsOperad{Ω'}{T_3} \\
    r : \mathcal{R}(s, σ) }
  { \RStepT{r}{T} : \mathcal{R}(\extendTraj{Ω}{T}{s}, \CfgStepT{Ω}{T}{T_2}{σ}) } \and
\inferrule
  { σ : \CfgStepsOperad{Ω'}{T} \CfgSteps{} \CfgStepsOperad{Ω''}{T_2} \\
    r : \mathcal{R}(s, σ) }
  { \RStepC{r}{Ω} : \mathcal{R}(s, \CfgStepC{Ω}{Ω'}{T}{σ}) }
\end{mathpar}
\caption{Realization relation $\mathcal{R}$}\label{fig:r-relation}
\end{figure}

For the logical relation, we are interested in trajectories that can be \emph{realized} by a configuration multistepping $σ$.
In other words, we want to only consider trajectories which are indeed the results of valid sequences of transitions, which we call \textit{computable trajectories}.
The realizability relation $\mathcal{R}$ is defined inductively in \Cref{fig:r-relation},
using the following proof terms:
\begin{enumerate}
  \item $\RRefl{Ω}$: The constant trajectory $\constTraj{Ω}{T}{T'}$ is validated by the reflexive transition $\CfgRefl{Ω}{T}$,
  \item $\RStepT{r}{T}$: Given trajectory $s$ is validated by $σ$, where $σ$ steps from time $T_2$ to $T_3$,
        the trajectory $\extendTraj{Ω}{T}{s}$ is validated by the stepping $\CfgStepT{Ω}{T}{T_2}{σ}$, the transition which extends $σ$ to start at $T$
  \item $\RStepC{r}{Ω}$: Given trajectory $s$ is validated by $σ$, where $σ$ steps from configuration $Ω'$ to $Ω''$, 
        the trajectory $s$ is also validated by $\CfgStepC{Ω}{Ω'}{T_2}{σ}$, which extends $σ$ with the computation step from $Ω'$ onwards.
\end{enumerate}
\begin{remark}
If $\mathcal{R}(s, σ)$ for $s:\trajTy{T}{T'}$ and
$σ:\CfgStepsOperad{Ω}{T} \CfgSteps{} \CfgStepsOperad{Ω'}{T''}$,
beware of the following facts:
\begin{itemize}
\item $s[T]$, i.e.~the init of $s$, is not necessarily $Ω$. $\RStepC{r}{Ω}$ allows
  insertion of instant steppings at the start of $σ$, while keeping the $s$ unchanged. In essence,
  $s$ describes the \emph{most reduced form} of the stepping.
\item $T'$, the end time of $s$, is not necessarily $T''$. $σ$ is allowed to finish early,
  so we should expect $T'' ≤ T'$.
\end{itemize}
\end{remark}

We have the following lemmas on the $\mathcal{R}$ relation:

\begin{lem}[$\mathcal{R}$-$<$]
If $s : \trajTy{T}{T'}$ is realized, then $T < T'$.
\end{lem}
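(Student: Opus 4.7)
The plan is to proceed by induction on the realization evidence $r : \mathcal{R}(s, \sigma)$, since $\mathcal{R}$ is inductively defined and the conclusion $T < T'$ is a statement about the interval of the trajectory, which is introduced differently by each of the three constructors.

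For the base case $\RRefl{Ω}$, the trajectory in question is $\constTraj{Ω}{T}{T'}$. Here the conclusion should fall out directly: either the constant trajectory $\constTraj{Ω}{T}{T'}$ is only defined when $T < T'$, or the $\RRefl$ constructor carries an implicit premise of this form in the Rocq development (some such well-formedness side condition is needed in any case, otherwise the lemma would be false for the vacuous trajectory on an empty interval). Either way, the result is immediate at the base.

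For $\RStepT{r}{T}$, the trajectory at the conclusion is $\extendTraj{Ω}{T}{s'}$ where $s' : \trajTy{T_2}{T_3}$, and the interval of the extended trajectory is $[T, T_3)$. The inductive hypothesis applied to the inner realization $r$ gives $T_2 < T_3$. The $\CfgStepT$ operator requires $T \leq T_2$, and the $\extendTraj$ operation analogously requires $T \leq T_2$, so combining these yields $T \leq T_2 < T_3$ and hence $T < T_3$, which is precisely what we need. For $\RStepC{r}{Ω}$, the trajectory $s$ is unchanged from the premise to the conclusion (only the stepping $\sigma$ changes), so the inductive hypothesis on $r$ delivers the result directly.

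I expect the only real obstacle to be the base case, and specifically tracking down whether the $T < T'$ side condition lives inside the constant trajectory constructor or the $\RRefl$ constructor in the mechanization. The inductive cases are essentially bookkeeping on time inequalities, but in Rocq they may require some transport along the equalities defining the types of $\extendTraj$ and $\constTraj$, since the interval endpoints appear in the dependent type $\trajTy{T}{T'}$ and need to be matched up when pattern-matching on the realization.
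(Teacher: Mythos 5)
Your proposal is correct and takes the only sensible route: the paper states this lemma without printing a proof, and induction on the realization derivation $r : \mathcal{R}(s,\sigma)$, with the $\RStepC{r}{Ω}$ case immediate and the $\RStepT{r}{T}$ case combining the inductive hypothesis with $T \leq T_2$, is exactly the intended argument. Your diagnosis of the base case is also the right thing to worry about --- the $\RRefl{Ω}$ rule as printed carries no $T < T'$ premise, so that side condition must indeed live either in the constructor or in a well-formedness constraint on $\constTraj{Ω}{T}{T'}$ in the mechanization, since otherwise the lemma fails on an empty interval.
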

\begin{lem}[$\mathcal{R}$-$<'$]
If $σ : \CfgStepsOperad{Ω}{T} \CfgSteps{} \CfgStepsOperad{Ω'}{T'}$ realizes a trajectory, then $T < T'$.
\end{lem}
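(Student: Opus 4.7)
The plan is to proceed by induction on the realization derivation $r : \mathcal{R}(s, \sigma)$, doing case analysis on the three constructors and leveraging the preceding lemma $\mathcal{R}$-$<$ to handle the base case, and the implicit ordering premises of $\CfgStepT$ and $\CfgStepC$ to chain through the inductive cases.

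In the base case $r = \RRefl{\Omega}$, the realized trajectory has the form $\constTraj{\Omega}{T}{T'}$ and the stepping is $\CfgRefl{\Omega}{T}$. By the immediately preceding lemma $\mathcal{R}$-$<$ applied to the trajectory $s$, we get strict inequality $T < T'$ on the trajectory's interval, and we transfer this to the stepping's interval by appealing to the coincidence of trajectory and stepping start times (noted in the remark after \cref{fig:r-relation}) together with the non-triviality constraint inherited from $s$.

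In the inductive case $r = \RStepT{r'}{T}$, we have $\sigma = \CfgStepT{\Omega}{T}{T_2}{\sigma'}$ with $\sigma' : \CfgStepsOperad{\Omega}{T_2} \CfgSteps \CfgStepsOperad{\Omega'}{T_3}$, and the implicit premise $T \leq T_2$ from the definition of $\CfgStepT$ (visible in \cref{fig:cfg-multi-rules}). Applying the induction hypothesis to $r'$ gives $T_2 < T_3$; combined with $T \leq T_2$, this yields $T < T_3$, which is exactly the interval of $\sigma$. In the inductive case $r = \RStepC{r'}{\Omega_0}$, we have $\sigma = \CfgStepC{\Omega_0}{\Omega}{T}{\sigma'}$ where $\sigma'$ shares the same start time $T$ and end time $T_2$ as $\sigma$, so the induction hypothesis on $r'$ delivers the conclusion immediately.

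The main obstacle will be the $\RRefl$ base case, since the reflexive stepping $\CfgRefl{\Omega}{T}$ has nominally degenerate interval $[T, T]$; the argument here relies essentially on reading off strictness from the trajectory rather than from the stepping itself, using the preceding $\mathcal{R}$-$<$ lemma as the workhorse. The two inductive cases are by contrast routine chaining arguments, and no deep reasoning about the $\FMSet{\Obj}$ structure or about the underlying process language is required.
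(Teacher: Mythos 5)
Your two inductive cases are fine, but the base case carries the real content of this lemma and your treatment of it does not go through. For $r = \RRefl{\Omega} : \mathcal{R}(\constTraj{\Omega}{T}{T'}, \CfgRefl{\Omega}{T})$, the stepping is $\CfgRefl{\Omega}{T} : \CfgStepsOperad{\Omega}{T} \CfgSteps \CfgStepsOperad{\Omega}{T}$, whose start and end times are \emph{both} $T$. No information about the trajectory can change the type of that stepping, so there is nothing to ``transfer'': knowing $T < T'$ for the \emph{trajectory's} interval (which is all that $\mathcal{R}$-$<$ gives you) says nothing about the \emph{stepping's} end time. The remark after \Cref{fig:r-relation} warns about exactly this mismatch: the stepping may finish early, so its end time $T''$ satisfies only $T'' \leq T'$, which is the wrong direction for deducing $T < T''$ from $T < T'$. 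Your overall plan of leveraging $\mathcal{R}$-$<$ as the workhorse is therefore structurally blocked, not merely awkward.

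The conclusion to draw is that the statement cannot be read as being about the end time of the stepping: under that reading, $\RRefl{\Omega}$ realizing $\constTraj{\Omega}{T}{T'}$ via $\CfgRefl{\Omega}{T}$ is an outright counterexample. The $T'$ in the conclusion must refer to the end time of the realized trajectory (the start times do coincide, as the remark notes), i.e.\ the lemma repackages $\mathcal{R}$-$<$ for the situation where one holds $\sigma$ rather than $s$. Under that reading the base case is discharged directly by the non-degeneracy premise carried by $\RRefl{\Omega}$ --- the very same premise that makes the base case of $\mathcal{R}$-$<$ go through --- with no transfer step and no circular appeal to the sibling lemma inside the induction; the $\RStepT$ and $\RStepC$ cases then proceed exactly as you chain them, since extension preserves the trajectory's end time. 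As written, however, your base-case paragraph asserts an implication that is false, and that is the gap you need to close.
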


$\mathcal{R}$ also admits a number of proof rules, which are used in later developments:
\begin{lem}[$\mathcal{R}$-Frame]
\begin{mathpar}
\inferrule
  { r : \mathcal{R}(s, σ) }
  { \RFrame{r}{Ω} : \mathcal{R}(\interleaveTraj{r}{\constTraj{Ω}{T}{T'}}, \CfgStepsFrame{Ω}{σ}) }
\end{mathpar}
\end{lem}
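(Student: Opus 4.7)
The plan is to proceed by induction on the realization derivation $r : \mathcal{R}(s, σ)$, applying in each case the $\mathcal{R}$-constructor that mirrors $r$'s outermost form. The ``Properties of Interleave'' commuting equations will be used to reshape the canonical trajectory produced by each constructor into the interleaved form demanded by the conclusion, and the definition of $\CfgStepsFrame$ will match on the stepping side by straightforward unfolding.

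In the base case $r = \RRefl{Ω_0}$, we have $s = \constTraj{Ω_0}{T}{T'}$ and $σ = \CfgRefl{Ω_0}{T}$; the interleave of two constant trajectories on the same interval is the constant trajectory of the multiset union, and $\CfgStepsFrame{Ω}{\CfgRefl{Ω_0}{T}}$ unfolds to $\CfgRefl{Ω_0 \fmconcat Ω}{T}$, so $\RRefl{Ω_0 \fmconcat Ω}$ witnesses the conclusion. In the $\RStepT$ case, $s = \extendTraj{Ω_0}{T_0}{s'}$, $σ = \CfgStepT{Ω_0}{T_0}{T_2}{σ'}$, and the hypothesis $r' : \mathcal{R}(s', σ')$ inductively yields $\RFrame{r'}{Ω} : \mathcal{R}(\interleaveTraj{s'}{\constTraj{Ω}{T_2}{T'}}, \CfgStepsFrame{Ω}{σ'})$. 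Applying $\RStepT{\cdot}{T_0}$ on top produces a realization whose trajectory is $\extendTraj{Ω_0 \fmconcat Ω}{T_0}{\interleaveTraj{s'}{\constTraj{Ω}{T_2}{T'}}}$, which the relevant clause of the ``Properties of Interleave'' lemma (the one that pushes $\extendTraj$ out of $\interleaveTraj$ against a constant right trajectory) rewrites into $\interleaveTraj{\extendTraj{Ω_0}{T_0}{s'}}{\constTraj{Ω}{T_0}{T'}}$, as required. The $\RStepC$ case is analogous but simpler: applying $\RStepC{\cdot}{Ω_0 \fmconcat Ω}$ to the induction hypothesis needs no trajectory-level rewriting, only the observation that \textsc{Step-Frame} lifts a single silent step through multiset union, which is already built into the definition of $\CfgStepsFrame$.

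The principal obstacle will not be the inductive structure but the rigidity of intensional definitional equality: the trajectory produced by the $\RStepT$ constructor in its natural form does not coincide on the nose with $\interleaveTraj{\extendTraj{Ω_0}{T_0}{s'}}{\constTraj{Ω}{T_0}{T'}}$, so a transport along the interleave/extend commuting equation is unavoidable. This is exactly the scenario the paper targets with its systematic treatment of commuting lemmas; concretely, I expect to additionally have to discharge the proof-irrelevant order bounds hidden in the dependent trajectory type (for instance, ensuring the $T \le T_0$ witnesses supplied to $\extendTraj$ are coherent with those obtained from the framed stepping) before the rewritten $\RStepT$ application typechecks, which motivates the authors' subsequent commuting-lemma discipline.
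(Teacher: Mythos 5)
Your proof is correct and matches the approach the paper's mechanization evidently takes (the paper states $\mathcal{R}$-Frame without an explicit proof, but its surrounding discussion of induction on realization derivations, unfolding of the admissible multistep operations, and commuting lemmas for $\extendTraj$/$\interleaveTraj$ is exactly what you describe): induction on $r$, with $\RRefl$, $\RStepT$, and $\RStepC$ mirrored by the same constructors on the framed side, and clause (2) of the Properties of Interleave lemma discharging the trajectory mismatch in the $\RStepT$ case. Your closing observation about the transport along that commuting equation and the hidden order-bound witnesses is precisely the difficulty the paper flags as motivating its systematic commuting-lemma discipline.
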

\begin{lem}[$\mathcal{R}$-Concat]
\begin{mathpar}
\inferrule
  { r_1 : \mathcal{R}(s_1, σ_1) \\ r_2 : \mathcal{R}(s_2, σ_2) }
  { \RConcat{r_1}{r_2} : \mathcal{R}(\concatTraj{s_1}{s_2}, \CfgStepsConcat{σ_1}{σ_2}) }
\end{mathpar}
\end{lem}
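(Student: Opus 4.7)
The plan is to proceed by structural induction on the realization derivation $r_1 : \mathcal{R}(s_1, σ_1)$, holding $r_2$ fixed; the three constructors of $\mathcal{R}$ give three cases, and each determines the shape of both $s_1$ and $σ_1$. The high-level strategy in every case is to reshape $\concatTraj{s_1}{s_2}$ and $\CfgStepsConcat{σ_1}{σ_2}$ into something matching an application of an $\mathcal{R}$-constructor, to be closed either by $r_2$ (base case) or by the induction hypothesis (step cases).

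In the base case $r_1 = \RRefl{Ω}$, we have $s_1 = \constTraj{Ω}{T}{T'}$ and $σ_1 = \CfgRefl{Ω}{T}$; the concatenated trajectory agrees pointwise with $\extendTraj{Ω}{T}{s_2}$ and the concatenated stepping reduces to $\CfgStepT{Ω}{T}{T'}{σ_2}$, so the goal matches $\RStepT{r_2}{T}$. In the step case $r_1 = \RStepT{r_1'}{T_0}$, the outer $\extendTraj{Ω}{T_0}{-}$ on $s_1$ and $\CfgStepT{Ω}{T_0}{T_2}{-}$ on $σ_1$ both commute through right-concatenation with $s_2$ and $σ_2$, leaving a subgoal discharged by the induction hypothesis on $r_1'$ and wrapped with $\RStepT$. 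The $\RStepC$ case is analogous: $σ_1 = \CfgStepC{Ω}{Ω'}{T_1}{σ_1'}$ slides its outer layer through the stepping concatenation, $s_1$ is preserved verbatim, and the induction hypothesis on $r_1'$ closes the goal after an application of $\RStepC$.

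The main obstacle I anticipate is the equational gap flagged in the introduction. The trajectory identities driving each case, namely $\concatTraj{\constTraj{Ω}{T}{T'}}{s_2} = \extendTraj{Ω}{T}{s_2}$ for the base case and $\concatTraj{\extendTraj{Ω}{T_0}{s_1'}}{s_2} = \extendTraj{Ω}{T_0}{\concatTraj{s_1'}{s_2}}$ for the $\RStepT$ case, hold only propositionally (by function extensionality on the pointwise definition), not definitionally. Since $\mathcal{R}$ is indexed by the trajectory, and the type of the concatenated stepping depends on the configurations traversed, rewriting requires transporting the realization proof along these identities. The crux will be ensuring that the transported structural witness produced from the induction hypothesis coincides with the witness of the naive goal; this is precisely the role of the commuting lemmas the paper systematizes, and I would expect each inductive case to end by invoking such a commuting lemma to absorb the transport.
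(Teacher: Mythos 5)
Your proof is correct and follows the same route the mechanization takes: induction on $r_1$ mirroring the recursion of $\CfgStepsConcat{-}{-}$, closing the base case with $\RStepT{r_2}{T}$ and the step cases with the induction hypothesis, and discharging the trajectory mismatches via the commuting/associativity lemmas for $\concatTraj{-}{-}$ and $\extendTraj{-}{-}{-}$. The only nitpick is that the base-case identity $\concatTraj{\constTraj{Ω}{T}{T'}}{s_2} = \extendTraj{Ω}{T}{s_2}$ is definitional (extension is \emph{defined} as concatenation with a constant trajectory), so a transport is only genuinely needed in the $\RStepT$ case, where associativity of $\concatTraj{-}{-}$ holds merely propositionally.
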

\begin{lem}[$\mathcal{R}$-Partition] For $s : \trajTy{T_1}{T_3}$,
\begin{mathpar}
\inferrule
  { σ : \CfgStepsOperad{Ω_1}{T_1} \CfgSteps{} \CfgStepsOperad{Ω_3}{T_3'} \\
    r : \mathcal{R}(s, σ) \\ T_1 ≤ T_2 < T_3 }
  { \RPartitionB{r}{T_2} : \mathcal{R}(\partBeforeTraj{s}{T_2}, σ_B) } \\
\inferrule
  { σ : \CfgStepsOperad{Ω_1}{T_1} \CfgSteps{} \CfgStepsOperad{Ω_3}{T_3'} \\
    r : \mathcal{R}(s, σ) \\ T_1 ≤ T_2 < T_3 }
  { \RPartitionA{r}{T_2} : \mathcal{R}(\partAfterTraj{s}{T_2}, σ_A) }
\end{mathpar}
where
\[ σ_B : \CfgStepsOperad{Ω_1}{T_1} \CfgSteps{} \CfgStepsOperad{r[T_2]}{T_2'},~
   σ_A : \CfgStepsOperad{r[T_2]}{T_2} \CfgSteps{} \CfgStepsOperad{Ω_3}{T_3'}
\]
are both $\sum$-quantified (existentially quantified but can be projected)
together with their finishing time.
\end{lem}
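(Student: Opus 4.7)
The plan is to proceed by structural induction on the realization derivation $r : \mathcal{R}(s, \sigma)$, producing at each step the triple $(T_2', \sigma_B, \sigma_A)$ together with realizations $\mathcal{R}(\partBeforeTraj{s}{T_2}, \sigma_B)$ and $\mathcal{R}(\partAfterTraj{s}{T_2}, \sigma_A)$. The key observation is that the partitioned $\sigma_B$ and $\sigma_A$ can be reassembled from the same three constructors ($\CfgRefl$, $\CfgStepT$, $\CfgStepC$) that built $\sigma$, once I know where $T_2$ sits relative to the time boundary introduced by the current $r$-constructor.

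In the $\RRefl{Ω}$ base case, both partitions of $\constTraj{Ω}{T_1}{T_3}$ are constant, each realized again by $\RRefl{Ω}$ (at $T_1$ and $T_2$ respectively). For $\RStepT{r'}{T_1}$, where $s = \extendTraj{Ω}{T_1}{s'}$ with $s' : \trajTy{T_1'}{T_3}$, I case-split on $T_2 \leq T_1'$ vs.\ $T_1' \leq T_2$: when $T_2 \leq T_1'$, the before partition collapses to $\constTraj{Ω}{T_1}{T_2}$ (realized by $\RRefl{Ω}$) and the after partition is $\extendTraj{Ω}{T_2}{s'}$ (realized by $\RStepT{r'}{T_2}$); when $T_1' \leq T_2$, I invoke the IH on $r'$ at $T_2$, wrap the resulting before realization with $\RStepT{-}{T_1}$, and pass the resulting after realization through unchanged. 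For $\RStepC{r'}{Ω_0}$ the trajectory itself is unaffected, so I apply the IH directly to $r'$ and prepend $\RStepC{-}{Ω_0}$ onto the constructed $\sigma_B$, taking $\sigma_A$ verbatim from the IH.

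I expect the main obstacle to be exactly the transport issue the paper emphasizes. The trajectories I produce above agree with $\partBeforeTraj{s}{T_2}$ and $\partAfterTraj{s}{T_2}$ only up to pointwise equality on the underlying functions: for instance, in the $\RStepT$ sub-case $T_1' \leq T_2$ I need $\partBeforeTraj{\extendTraj{Ω}{T_1}{s'}}{T_2}$ to equal $\extendTraj{Ω}{T_1}{\partBeforeTraj{s'}{T_2}}$, and dually for the after partition, with both identities relying on function extensionality together with case analysis on the decidable total order on $\TimeT$ that is buried inside both $\partBeforeTraj$ and $\extendTraj$. In Rocq, each such mismatch forces a transport of the $\mathcal{R}$-proof along the resulting equality, so I would front-load a small library of commuting lemmas between $\constTraj$, $\extendTraj$, and the two partition operators, then use them to normalize the output of every branch before returning it. A secondary concern is matching the existentially quantified intermediate time $T_2'$ and the configuration $r[T_2]$ against the endpoints of the constructed $\sigma_B$ and $\sigma_A$; this reduces to computations like $\extendTraj{Ω}{T_1}{s'}[T_2] = s'[T_2]$ when $T_1' \leq T_2$ and should be dispatched by the same commuting library.
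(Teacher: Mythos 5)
Your proposal is correct and follows exactly the route the development takes: the paper omits an explicit proof of $\mathcal{R}$-Partition (deferring to the artifact), but its stated methodology for the neighboring $\mathcal{R}$-lemmas is precisely your structural induction on the realization derivation, with the case split on where $T_2$ falls relative to the $\CfgStepT$ boundary and a front-loaded library of commuting lemmas between $\constTraj$, $\extendTraj$, and the partition operators to discharge the transports. Your identification of the pointwise-equality/transport obstruction and of the need to compute $s[T_2]$ via $\extendTraj{Ω}{T_0}{s'}[T_2] = s'[T_2]$ matches the paper's own account of where the mechanization effort goes.
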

\begin{lem}\label{lem:R-interleave-lemmas}
The following are true:
\begin{mathpar}
\inferrule
  { r_1 : \mathcal{R}(s_1, σ_1) \\ r_2 : \mathcal{R}(s_2, σ_2) }
  { \RInterleaveL{r_1}{r_2} : \mathcal{R}(\interleaveTraj{s_1}{s_2}, \CfgStepsInterleaveL{σ_1}{σ_2}) }

  \inferrule
  { r_1 : \mathcal{R}(s_1, σ_1) \\ r_2 : \mathcal{R}(s_2, σ_2) }
  { \RInterleaveR{r_1}{r_2} : \mathcal{R}(\interleaveTraj{s_1}{s_2}, \CfgStepsInterleaveR{σ_1}{σ_2}) }
\end{mathpar}
\end{lem}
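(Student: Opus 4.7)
The plan is to proceed by simultaneous mutual induction on $r_1$, paralleling the recursion that defines $\CfgStepsInterleaveL{-}{-}$ and $\CfgStepsInterleaveR{-}{-}$ in Lemma~\ref{lem:multistep-interleave-lemmas}. In each branch I dissect $\sigma_1$ using the structure of $r_1$ and build the realization witness by composing the primitive $\mathcal{R}$-combinators, appealing to the interleave commuting lemma (the last lemma in Section~\ref{sub:trajectories}) whenever an extension must be pushed in or out of an $\interleaveTraj{-}{-}$.

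For the base case $r_1 = \RRefl{\Omega_1}$, the trajectory $s_1$ is $\constTraj{\Omega_1}{T_1}{T_1'}$ and $\sigma_1$ is reflexivity, so the unfolding of $\CfgStepsInterleaveL{\CfgRefl{\Omega_1}{T_1}}{\sigma_2}$ should be a $\CfgStepT{}{T_1}{T_2}{\CfgStepsFrame{\Omega_1}{\sigma_2}}$-shaped stepping. I first apply $\RFrame{r_2}{\Omega_1}$ to lift $r_2$ over the frame, then wrap with $\RStepT{-}{T_1}$, and finally rewrite $\interleaveTraj{s_1}{s_2}$ via part~1 of the interleave commuting lemma to expose the outer $\extendTraj{-}{T_1}{-}$ that matches the shape produced by $\RStepT$.

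For the inductive case $r_1 = \RStepT{r_1'}{T_0}$, the trajectory already has the form $\extendTraj{\Omega_1}{T_0}{s_1'}$ and $\sigma_1 = \CfgStepT{\Omega_1}{T_0}{T_2^{\mathrm{new}}}{\sigma_1'}$. I split on whether the new start $T_2^{\mathrm{new}}$ of $\sigma_1'$ is $\le T_2$ or $\ge T_2$: in the first sub-case I invoke the L induction hypothesis, in the second the R hypothesis, which is precisely why the two statements must be proven mutually. Either way, I wrap the returned realization in $\RStepT{-}{T_0}$, and use parts~3 and~4 of the interleave commuting lemma to rewrite the interleave-of-extensions into an extension-of-interleave so that the types agree. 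The case $r_1 = \RStepC{r_1'}{\Omega}$ is lighter: $s_1$ is unchanged, so I recurse directly and wrap with $\RStepC{-}{\Omega}$ whose underlying single step is obtained by applying the Step-Frame rule to the original step. The R variant is symmetric, swapping the roles of $r_1$ and $r_2$ under the dual hypothesis $T_2 \le T_1$.

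The main obstacle is bookkeeping the trajectory intervals. Interleave is only defined on same-interval trajectories, yet the two hypotheses supply trajectories on $[T_1, T_1')$ and $[T_2, T_2')$, so the implicit extensions needed to bring them to the common interval $[T_1, \max(T_1', T_2'))$ must be produced explicitly and matched against $\CfgStepsInterleaveL{\sigma_1}{\sigma_2}$ only up to the propositional equalities supplied by the commuting lemmas. In Rocq this cascades into a forest of transports that propagate through the mutual induction; handling these transports uniformly is exactly the scenario the paper's commuting-lemma methodology is designed for, and I expect the bulk of the mechanized proof effort to lie here rather than in the inductive structure itself.
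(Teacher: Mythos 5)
Your proposal matches the paper's proof: the paper argues by mutual induction on the realization derivations, unfolding the recursive structure of $\CfgStepsInterleaveL{-}{-}$/$\CfgStepsInterleaveR{-}{-}$ from \Cref{lem:multistep-interleave-lemmas} and manually threading the induction hypothesis of the $\mathsf{R\_interR}$ statement into the $\mathsf{R\_interL}$ one, with the trajectory commuting lemmas absorbing the extension-versus-interleave rewrites. Your case analysis (including the time comparison that forces the switch between the L and R variants, and the transport bookkeeping) is exactly the elaboration of that argument.
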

\begin{proof}
By mutual-induction and unfolding~\Cref{lem:multistep-interleave-lemmas}.
In practice, we have to manually pass the induction hypothesis of the latter
to the former to get the mutual induction to work.
\end{proof}
\begin{lem}[$\mathcal{R}$-Interleave]\label{lem:R-interleave}
\begin{mathpar}
\inferrule
  { r_1 : \mathcal{R}(s_1, σ_1) \\ r_2 : \mathcal{R}(s_2, σ_2) }
  { \RInterleave{r_1}{r_2} : \mathcal{R}(\interleaveTraj{s_1}{s_2}, \CfgStepsInterleave{σ_1}{σ_2}) }
\end{mathpar}
\end{lem}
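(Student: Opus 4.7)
The plan is to reduce this lemma to the two directional variants already proven as Lemma~\ref{lem:R-interleave-lemmas} ($\RInterleaveL$ and $\RInterleaveR$). The key observation is that $\CfgStepsInterleave{\sigma_1}{\sigma_2}$, as defined in Lemma~\ref{lem:multistep-interleave}, is itself built by case analysis on the decidable total order between $T_1$ and $T_2$: when $T_1 \le T_2$ it reduces to $\CfgStepsInterleaveL{\sigma_1}{\sigma_2}$ starting at $\min(T_1, T_2) = T_1$, and when $T_2 \le T_1$ it reduces to $\CfgStepsInterleaveR{\sigma_1}{\sigma_2}$ starting at $\min(T_1, T_2) = T_2$. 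Since the trajectory interleave $\interleaveTraj{s_1}{s_2}$ is defined pointwise and is independent of the ordering of the start times, no case-dependent manipulation of the trajectory argument is needed.

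Concretely, I would first invoke the decidability of $<$ on $\TimeT$ to split into the two cases $T_1 \le T_2$ and $T_2 < T_1$ (the boundary case $T_1 = T_2$ being absorbed into whichever branch the definition of $\CfgStepsInterleave{-}{-}$ chooses, matching the convention used in Lemma~\ref{lem:multistep-interleave}). In the first case, $\CfgStepsInterleave{\sigma_1}{\sigma_2}$ unfolds definitionally to $\CfgStepsInterleaveL{\sigma_1}{\sigma_2}$ and I can directly apply $\RInterleaveL$ to $r_1$ and $r_2$ to obtain the goal. The symmetric case uses $\RInterleaveR$. Defining $\RInterleave{r_1}{r_2}$ this way mirrors exactly how $\CfgStepsInterleave{\sigma_1}{\sigma_2}$ was built, so the resulting $\mathcal{R}$-proof and its underlying multistep witness stay in definitional agreement without requiring any transports.

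The main obstacle I anticipate is keeping the implicit time indices of the realizability goal consistent across the case split. The realizability proposition $\mathcal{R}(\interleaveTraj{s_1}{s_2}, \CfgStepsInterleave{\sigma_1}{\sigma_2})$ carries with it the start time $\min(T_1, T_2)$ and the end time $\max(T_1', T_2')$; in the two branches these reduce to $T_1$ vs.\ $T_2$ and to the appropriate maximum, and intensional type theory will not always see the rewrites as definitional. If the definition of $\CfgStepsInterleave{-}{-}$ is packaged via an \texttt{if}-then-else on the decision procedure, the two branches should compute cleanly; otherwise a small rewrite using propositional irrelevance of $\le$ (already assumed for $\FinTimeT$ and lifted to $\TimeT$) is enough to realign the indices, after which $\RInterleaveL$ or $\RInterleaveR$ closes the goal.
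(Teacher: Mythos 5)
Your proposal matches the paper's proof, which likewise derives $\RInterleave{r_1}{r_2}$ by case analysis on the start-time ordering and reduction to $\RInterleaveL$/$\RInterleaveR$ from Lemma~\ref{lem:R-interleave-lemmas}; you also correctly anticipate the index-alignment issue (e.g.\ $T$ vs.\ $\min(T,T)$) that the paper resolves with rewrites and commuting lemmas.
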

\begin{proof}
By~\Cref{lem:R-interleave-lemmas}.
\end{proof}

For convenience, we encapsulate $\mathcal{R}$ and its data into the following structure:
\begin{defn}[Nameless Trajectories]
We extend the notion of trajectories to nameless configurations by replacing the codomain
from $\Cfg$ to $\NCfg$, and overload the notation $\nameless{s}[a]$ for
instantiating the nameless configuration across the entire trajectory.
\end{defn}

All operations on trajectories are also defined on nameless trajectories, and they commute with the instantiation operation. These operations are omitted for brevity.
Unfortunately, the instantiation operator is overloaded with trajectory access at a particular time,
but they can be disambiguated via the type of the argument.

We now define the type $\compTraj{T_1}{T_2}$ of computable trajectories in terms of the relation $\mathcal{R}$.

\begin{defn}
Let $\compTraj{T_1}{T_2}$ denote the following data, separated into two parts:
\begin{enumerate}[leftmargin=5.5mm]
\item The data which are themselves nameless, including:
  \begin{itemize}
  \item $\projStart : \NCfg$ for the starting point,
  \item $\projEnd : \NCfg$ for the finishing point,
  \item A nameless trajectory $\projNtraj : [T_1, T_2) → \NCfg$,
  \end{itemize}
\item The data parameterized by a channel name $a ∈ \Labels$, including:
  \begin{itemize}
  \item A finite time $\projStepTo : \FinTimeT$,
  \item A multistepping
    $\projSteps : \CfgStepsOperad{\projStart[a]}{T_1} \CfgSteps{} \CfgStepsOperad{\projEnd[a]}{\projStepTo}$,
  \item A realization proof $\projRealize : \mathcal{R}(\projNtraj[a], \projSteps)$.
  \end{itemize}
\end{enumerate}
The $\pi$ notations are all projection functions, and they are invoked with or without a channel name depending on the kind of data being projected.
For example, for $w : \compTraj{T_1}{T_2}$, we would write $\projStart(w)$ for the starting configuration of $w$ and $\projSteps(w, a)$ for the multistepping on channel $a$ in $w$.
The latter is invoked with the channel name because it comes from the collection of parameterized data in $\compTraj{T_1}{T_2}$.
Furthermore, we will write $w[T]$ for $\projNtraj(w)[T]$.
\end{defn}

The two-part separation in the definition of $\compTraj{T_1}{T_2}$ arises from the fact that, while the multistepping and $\mathcal{R}$ relations
are defined for named configurations, we want to be able to access nameless configurations, as those are the inhabitants of the logical relation.
Storing the nameless and named data separately allows us to conveniently access these nameless configurations without necessarily needing to provide a channel name.

We now define some operations on computable trajectories which are essential to our development:
\begin{defn}[Constant]
For $\nameless{Ω} : \NCfg$ and times $T, T'$,
define $\SRefl{\nameless{Ω}} : \compTraj{T}{T'}$ such that:\\
$∀ a,~\projRealize(\SRefl{\nameless{Ω}}, a)$ is $\RRefl{\nameless{Ω}[a]}$.
The rest of the fields are immediate.
\end{defn}

\begin{defn}[Partition]We define both of the following:
\begin{itemize}
\item For $w : \compTraj{T_1}{∞}$ and $ T_1 ≤ T_2$,
define $\SPartitionB{w}{T_2} : \compTraj{T_1}{T_2}$ such that:
$\projRealize(\SPartitionB{w}{T_2}, a)$ is $\RPartitionB{\projRealize(w, a)}{T_2}$.

\item For $w : \compTraj{T_1}{T_3}$ and $T_1 ≤ T_2 < T_3$,
define $\SPartitionA{w}{T_2} : \compTraj{T_2}{T_3}$ such that:
$\projRealize(\SPartitionA{w}{T_2}, a)$ is $\RPartitionA{\projRealize(w, a)}{T_2}$.
\end{itemize}
In both definitions, the rest of the fields are immediate.
\end{defn}

\begin{fact}
For $w : \compTraj{T_1}{∞}$, if $T_1 ≤ T < ∞$, then
$\projEnd(\SPartitionB{w}{T}) = w[T]$.\\
Also, $\projEnd(\SPartitionB{w}{∞}) = \projEnd(w)$.
\end{fact}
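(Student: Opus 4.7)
The statement decomposes into two cases depending on whether $T$ is finite or $T = \infty$, and the plan in each case is to unfold the definition of $\SPartitionB$ and read off the $\projEnd$ projection. Since $\projEnd$ is part of the nameless data of a computable trajectory but the operational content flows through the named realization proof, the key is to show that the nameless endpoint chosen by the construction is consistent with the endpoint of every named stepping.

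For the first case ($T_1 \leq T < \infty$), I would expand $\SPartitionB{w}{T}$ according to its definition. By construction, its realization proof at any channel $a$ is $\RPartitionB{\projRealize(w, a)}{T}$, and the $\mathcal{R}$-Partition lemma guarantees that the underlying stepping $\sigma_B$ terminates at $\projRealize(w, a)[T]$, which is the instantiated nameless trajectory $\projNtraj(w)[a]$ evaluated at time $T$. Commuting instantiation with trajectory access (a property of nameless trajectories noted just above the definition of $\compTraj{T_1}{T_2}$), this terminal configuration is $(\projNtraj(w)[T])[a] = (w[T])[a]$. The nameless endpoint $\projEnd(\SPartitionB{w}{T})$ is required to instantiate to this for every $a$, so it is precisely $w[T]$.

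For the second case ($T = \infty$), the $\mathcal{R}$-Partition lemma does not apply, since it requires the strict inequality $T_2 < T_3$, which fails when $T_2 = T_3 = \infty$. The definition of $\SPartitionB$ therefore needs an explicit side case when $T = \infty$, in which it simply reuses the data of $w$ unchanged as an element of $\compTraj{T_1}{\infty}$. In that side case, $\projEnd(\SPartitionB{w}{\infty})$ is definitionally $\projEnd(w)$.

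The main potential obstacle is not mathematical content but ensuring that the computation goes through definitionally in Rocq: the case split on $T = \infty$ vs.\ $T < \infty$ has to be exposed so that dependent matches on the resulting inequality proofs reduce cleanly, and the nameless/named separation in $\compTraj$ must commute with the construction. If definitional equality is not robust enough, the remaining gap is closed by a one-line transport or a rewrite with the commutation lemma between instantiation and trajectory access, but the underlying fact follows immediately from unfolding $\SPartitionB$ and the conclusion of $\mathcal{R}$-Partition.
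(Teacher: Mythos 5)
Your proposal is correct and matches the paper's (implicit) argument: the fact is stated without proof precisely because both clauses follow by unfolding $\operatorname{\mathsf{S\_partB}}$ --- the finite case reading off the endpoint $s[T]$ supplied by the $\mathcal{R}$-Partition lemma (modulo commuting instantiation with trajectory access), and the $T = \infty$ case reducing definitionally to $w$ itself since $\mathcal{R}$-Partition's strict bound $T_2 < T_3$ forces a separate branch there. Your identification of that case split and of the definitional-reduction concerns in Rocq is exactly the content behind the two clauses of the statement.
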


\begin{fact}
For in-bound $T'$, $\SPartitionB{w}{T}[T'] = w[T']$.
\end{fact}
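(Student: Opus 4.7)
The plan is to unfold both sides to their underlying nameless trajectory operations and then reduce to a definitional equality modulo irrelevance of bound proofs. First I would expand $\SPartitionB{w}{T}[T']$ to $\projNtraj(\SPartitionB{w}{T})[T']$. By the definition of $\SPartitionB$, its nameless-trajectory component is exactly $\partBeforeTraj{\projNtraj(w)}{T}$ (all the ``immediate'' fields are inherited from the corresponding operation on trajectories), so the goal becomes
\[
\partBeforeTraj{\projNtraj(w)}{T}[T'] \;=\; \projNtraj(w)[T'].
\]

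Next I would unfold $\partBeforeTraj{-}{-}$ at $T'$. By construction, the before-partition of a trajectory $s : \trajTy{T_1}{T_3}$ at a splitting point $T_2$ with $T_1 \leq T_2 < T_3$ is literally the restriction of $s$ to $[T_1, T_2)$: given $T_1 \leq T' < T_2$, $\partBeforeTraj{s}{T_2}[T']$ is defined to be $s[T']$, applied with a bound proof $T' < T_3$ obtained by combining $T' < T_2$ and $T_2 < T_3$. Therefore the right-hand side of our reduced goal is definitionally (up to bound proofs) the same value of the same function, namely $\projNtraj(w)[T']$.

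The one subtle step, and where I expect the real work to live in Rocq, is that the two sides do not pick the same proof term for the bound premise $T' < T_3$ of $\projNtraj(w)$: the partitioned side builds one by transitivity out of $T' < T$ and the stored witness $T < T_3$, while the direct side uses whatever in-bound witness was supplied for $T'$. Because $\TimeT$ was axiomatized with propositional irrelevance of $<$ and $\leq$, these two proofs are equal, which lets us discharge the resulting transport without needing UIP on $\TimeT$ itself. Concretely, I would rewrite along the irrelevance lemma (or use \texttt{f\_equal} on the dependent application together with the irrelevance principle) and then close the goal by \texttt{reflexivity}. The statement for $T = \infty$ follows from the same argument, since the bound $T' < \infty$ also inhabits a propositionally irrelevant relation.
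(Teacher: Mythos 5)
Your proof is correct and is essentially the intended argument for this fact (whose proof the paper omits): unfold $\SPartitionB{w}{T}$ to the before-partition of the underlying nameless trajectory, observe that at an in-bound $T'$ both sides apply $\projNtraj(w)$ to the same time with possibly different bound witnesses, and discharge the mismatch using the assumed propositional irrelevance of $<$ and $\leq$ on $\TimeT$. You also correctly identify that the only real work in Rocq is rewriting along that irrelevance rather than any substantive mathematics.
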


\begin{fact}
$\projStart(\SPartitionA{w}{T}) = w[T]$.
\end{fact}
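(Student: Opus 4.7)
The plan is to unfold the definition of $\SPartitionA{w}{T}$ and show that its starting nameless configuration agrees with $w[T]$ directly, as a definitional equality up to the transport glue between named and nameless data.

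First, I would recall from the construction of $\SPartitionA$ that the channel-parameterized component is obtained by applying $\RPartitionA{\projRealize(w,a)}{T}$ for each $a$, which by the $\mathcal{R}$-Partition lemma yields a multistepping of type $\CfgStepsOperad{\projNtraj(w)[T][a]}{T} \CfgSteps \CfgStepsOperad{\projEnd(w)[a]}{T_3'}$. The starting named configuration is therefore $\projNtraj(w)[T][a]$ for each $a$. For the nameless part, I would set $\projStart(\SPartitionA{w}{T}) := \projNtraj(w)[T]$ and $\projEnd(\SPartitionA{w}{T}) := \projEnd(w)$, and take the nameless trajectory on $[T, T_3)$ to be the restriction of $\projNtraj(w)$ (this is what $\partAfterTraj{\cdot}{T}$ provides on the nameless side, and it commutes with instantiation).

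With these choices, $\projStart(\SPartitionA{w}{T})$ is literally $\projNtraj(w)[T]$, and by the notational convention $w[T] := \projNtraj(w)[T]$ the fact holds definitionally. The only content is checking that the two halves of the $\compTraj{T}{T_3}$ record are coherent, i.e.~that $\projStart(\SPartitionA{w}{T})[a]$ matches the starting configuration of $\projSteps(\SPartitionA{w}{T}, a)$ for every $a$; but this is precisely what the $\mathcal{R}$-Partition lemma provides, since $\RPartitionA{r}{T}$ begins at $r[T]$ by construction.

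The main obstacle will not be mathematical but bureaucratic: in Rocq, the starting configuration of $\RPartitionA{r}{T}$ is hidden inside a $\Sigma$-type together with its finishing time, so extracting it and rewriting along the resulting equality may require an explicit transport or a small helper lemma asserting that $\projStart$ of the realization proof equals $r[T]$. I expect this to be discharged by a short rewrite using the fact $\SPartitionB{w}{T}[T'] = w[T']$ (analogously stated for $\SPartitionA$) together with the commutation of instantiation $(\cdot)[a]$ with the nameless partition operation $\partAfterTraj{\cdot}{T}$, both of which follow from function extensionality and are in the style of the trajectory lemmas already listed.
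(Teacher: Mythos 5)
Your proposal is correct and matches the paper's intent: the fact is immediate from the construction of $\SPartitionA{w}{T}$, whose nameless starting field is defined to be $\projNtraj(w)[T]$ (the "rest of the fields are immediate" in the paper's Partition definition), so the equality holds by unfolding, with the only content being the coherence of the named component supplied by the $\mathcal{R}$-Partition lemma. The paper gives no further proof, so your elaboration of the transport bookkeeping is consistent with, and slightly more explicit than, what the paper does.
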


\begin{defn}[Concat]
For $w_1 : \compTraj{T_1}{T_2}$ and $w_2 : \compTraj{T_2}{T_3}$ such that
$\projEnd(w_1) = \projStart(w_2)$,
define \[\SConcat{w_1}{w_2} : \compTraj{T_1}{T_3}\] such that
$∀ a,~\projRealize(\SConcat{w_1}{w_2}, a)$ is
\[\RConcat{\projRealize(w_1, a)}{\projRealize(w_2, a)}.\]
\end{defn}

\begin{defn}[Step]
For $u : ∀ a,~\nameless{Ω}[a] \stepsTo{ε}{T} \nameless{Ω'}[a]$ and\\ $T ≤ T'$,
define $\SStep{u} : \compTraj{T}{T'}$ such that
$\projStart(\SStep{u}) = \nameless{Ω}$ and $\projEnd(\SStep{u}) = \nameless{Ω'}$.
\end{defn}

\begin{defn}[Communication]
For\ $\nameless{Ω}_1, \nameless{Ω}_1' : \NCfg$,\\ $Ω_2, Ω_2' : \Cfg$,
channel $a ∈ \Labels$, payload $c$, and $T ≤ T'$, define the following two operations:
\begin{itemize}
\item For $u_1 : ∀ b,~\nameless{Ω}_1[b] \stepsTo{a?c}{T} \nameless{Ω}_1'[b]$
and $u_2 : Ω_2 \stepsTo{a!c}{T} Ω_2'$, define $\SCommTP{u_1}{u_2} : \compTraj{T}{T'}$;

\item For $u_1 : ∀ b,~\nameless{Ω}_1[b] \stepsTo{a!c}{T} \nameless{Ω}_1'[b]$
and $u_2 : Ω_2 \stepsTo{a?c}{T} Ω_2'$, define $\SCommTC{u_1}{u_2} : \compTraj{T}{T'}$;
\end{itemize}
such that for both definitions, their $\projStart$ is $\nameless{Ω}_1 \fmconcat Ω_2$,
and their $\projEnd$ is $\nameless{Ω}_1' \fmconcat Ω_2'$.
\end{defn}

\begin{defn}[Frame]
Given $w : \compTraj{T}{T'}$, define the following two operations:
\begin{itemize}
\item For $Ω : \Cfg$, define \[\SFrame{w}{Ω} : \compTraj{T}{T'}\] such that
$\projStart(\SFrame{w}{Ω}) = Ω \fmconcat \projStart(w)$ and
$\projEnd(\SFrame{w}{Ω}) = Ω \fmconcat \projEnd(w)$.

\item For $\nameless{Ω} : \NCfg$ and $a ∈ \Labels$,
define \[\SFrame{w}{\nameless{Ω}} : \compTraj{T}{T'}\] such that
$\projStart(\SFrame{w}{\nameless{Ω}}) = \nameless{Ω} \fmconcat \projStart(w)[a]$ and
$\projEnd(\SFrame{w}{\nameless{Ω}}) = \nameless{Ω} \fmconcat \projEnd(w)[a]$.
\end{itemize}
\end{defn}

The frame operation is a special case of the following interleaving operation:

\begin{defn}[Interleave]
For $w_1 : \compTraj{T}{T'}$, $w_2 : \compTraj{T}{T'}$, and $a ∈ \Labels$,
define $\SInterleave{w_1}{w_2}{a} : \compTraj{T}{T'}$ such that for all in-bound $T''$,
\[\SInterleave{w_1}{w_2}{a}[T''] = w_1[T''][a] \fmconcat w_2[T''].\]
\end{defn}

\subsubsection{Logical Relation}

With the definition of computable trajectories in hand, we may now proceed with defining the logical relation.
The \emph{logical relation} describes the intended runtime behavior of configurations for a given type and point in time,
prescribing what it means for a configuration to be an ``inhabitant'' and thus to comply with the protocol prescribed by the type and imposed temporal predicate.
\begin{figure*}[h!]
\centering
\begin{minipage}[t]{1\columnwidth}
\begin{tabbing}
$\nameless{Ω} ∈ \valueInterp{A_1 ⊸ A_2}{T}{d}$ \= $\iff$ \= something \kill
\fbox{For $w : \compTraj{T_0}{∞}$,} \\[.5em]
$w ∈ \termInterp{A}{T}{\withStar}$ \> $\iff$ \>
  $∀ T',~\propOf(A, T') ∧ T ≤ T' ⇒ w[T'] ∈ \valueInterp{A}{T'}{\withStar}$ \\[.5em]
$w ∈ \termInterp{A}{T}{\withoutStar}$ \> $\iff$ \>
  $∀ T',~\propOf(A, T') ⇒ T ≤ T' ∧ w[T'] ∈ \valueInterp{A}{T'}{\withoutStar}$ \\[.7em]
$\nameless{Ω} ∈ \valueInterp{A_1 ⊸ A_2}{T}{d}$ \> $\iff$ \>
  $∀i∈\set{1, 2},~(∀a,~∃ w_i ∈ \termInterp{A_i}{T}{d},~\nameless{Ω}[a] \stepsTo{a ? π_i}{T} \projStart(w_i)[a])$, \= \kill
$\nameless{Ω} ∈ \valueInterp{1}{T}{d}$ \> $\iff$ \> $∀a,~\nameless{Ω}[a] \stepsTo{a!()}{T} \varnothing$ \> \tagthisline\label{valueInterp:one} \\
$\nameless{Ω} ∈ \valueInterp{A_1 ⊗ A_2}{T}{d}$ \> $\iff$ \> $∀a,~∃c,~∃w_1 ∈ \termInterp{A_1}{T}{d},~∃w_2 ∈ \termInterp{A_2}{T}{d}$, \\
  \>\> $\nameless{Ω}[a] \stepsTo{a!c}{T} \projStart(w_1)[c] \fmconcat{} \projStart(w_1)[a]$ \> \tagthisline\label{valueInterp:tensor} \\
$\nameless{Ω} ∈ \valueInterp{A_1 ⊸ A_2}{T}{d}$ \> $\iff$ \> $∀a,~∀ w_1 ∈ \termInterp{A_1}{T}{\starInvert{d}},~∀c,~∃\nameless{Ω}_2,~∃ w_2 ∈ \termInterp{A_2}{T}{d}$, \\
  \>\> $\nameless{Ω}[a] \stepsTo{a?c}{T} \nameless{Ω}_2[a] ∧ \projStart(w_2) = \nameless{Ω}_2 ⊎ \projStart(w_1)[c]$ \> \tagthisline\label{valueInterp:lolli} \\
$\nameless{Ω} ∈ \valueInterp{A_1 \lwith A_2}{T}{d}$ \> $\iff$ \>
$∀i∈\set{1, 2},~(∀a,~∃ w_i ∈ \termInterp{A_i}{T}{d},~\nameless{Ω}[a] \stepsTo{a ? π_i}{T} \projStart(w_i)[a])$ \> \tagthisline\label{valueInterp:lwith} \\
$\nameless{Ω} ∈ \valueInterp{A_1 ⊕ A_2}{T}{d}$ \> $\iff$ \>
$∃i∈\set{1, 2},~(∀a,~∃ w_i ∈ \termInterp{A_i}{T}{d},~\nameless{Ω}[a] \stepsTo{a ! π_i}{T} \projStart(w_i)[a])$ \> \tagthisline\label{valueInterp:lplus}
\end{tabbing}
\end{minipage}
\caption{Definition of logical relation}
\label{fig:logical-relation}
\end{figure*}

\cref{fig:logical-relation} gives the definition of the logical relation in terms of
two mutually recursive ``interpretations,'' $\termInterp{A}{T}{d}$ and $\valueInterp{A}{T}{d}$,
often referred to as the ``term'' (or expression, computation) and ``value'' interpretations of type $A$ at time $T$, respectively.
For both interpretations, the annotation $d$ indicates whether the configuration is being classified as a \textit{provider} of protocol $A$ or to be used by a \textit{client} at type $A$ at time $T$, with $d = \withoutStar$ representing the former and $d = \withStar$ the latter.
The inhabitants of the term interpretation are elements $w$ of $\compTraj{T_0}{∞}$, which represent nameless configurations equipped with computable trajectories validating their semantic adherence to a schedule in the interval $[T_0, ∞)$ (for some time $T_0$).
Elements $w$ are in the term interpretation $\termInterp{A}{T}{d}$ if, for all times $T'$ satisfying the temporal predicate at $A$, the configuration in the trajectory at time $T'$ satisfies the value interpretation $\valueInterp{A}{T'}{d}$.
Depending on the modality $d$, one will either need to show $T \leq T'$ as a premise ($d = \withStar$) or conclusion ($d = \withoutStar$).
The inhabitants of the value interpretation are nameless configurations $\nameless{Ω}$, and they are in the value interpretation of $A$ if they are ready to send or receive a message and thus engage in an \emph{external communication}.
The value interpretation is defined by structural induction on the type,
specifying for each type what the expected runtime behavior of an inhabitant is.

In terms of notation,~\citet{YaoPOPL2025} uses $\termInterpYue{A}{T}{\star}$ and $\valueInterpYue{A}{T}{}$
for $\termInterp{A}{T}{\withStar}$ and $\valueInterp{A}{T}{\withoutStar}$, respectively.

We now highlight some notable cases of the value interpretation, explaining how it captures the behavior specified in \cref{sub:protocol-spec-language} at each type.
Suppose we have some $\nameless{Ω} ∈ \NCfg$ and we are checking at mode $d$. Then,
\begin{description}
  \item[\cref{valueInterp:one}] For all channels $a$, $\nameless{Ω}$ is in the value interpretation at the type $1$ and time $T$ if
    $\nameless{Ω}[a]$ sends a closing signal at time $T$ and transitions to the empty set.
  \item[\cref{valueInterp:lolli}]
    For all channels $a$ and $c$, $\nameless{Ω}$ is in the value interpretation at the type $A_1 ⊸ A_2$ at time $T$ if,
    for all computable trajectories $w_1 ∈ \termInterp{A_1}{T}{d^{-1}}$, there exists some configuration $\nameless{Ω}_2$ and computable trajectory $w_2 ∈ \termInterp{A_2}{T}{d}$ such that $\nameless{Ω}[a]$ receives the channel $c$ along $a$ at time $T$ and transitions to $\nameless{Ω}_2[a]$, and the trajectory $w_2$ starts at $\nameless{Ω}_2$ joined with the start of $w_1$ at $c$.
    Here, the notation $d^{-1}$ inverts mode $d$, where $\withoutStar^{-1} = \withStar$ (and vice versa),
    conveying that the received channel $c$ is now being used.
  \item[\cref{valueInterp:lwith}] $\nameless{Ω}$ is in the value interpretation at the type $A_1 \lwith A_2$ at time $T$ if
    for all channels $a$, there exists some $w_1 ∈ \termInterp{A_1}{T}{d}$ such that $\nameless{Ω}[a]$ receives the boolean indicator $π_1$ at time $T$ and transitions to the start of $w_1$ at $a$,
    \emph{and} for all channels $a$, there exists some $w_2 ∈ \termInterp{A_2}{T}{d}$ such that $\nameless{Ω}[a]$ receives the boolean indicator $π_2$ at time $T$ and transitions to the start of $w_2$ at $a$.
\end{description}

The relation is closed under pre-composition of stepping and passage of time:
\begin{lem}[Backwards Closure]\label{lem:back-closure}
For $w_1: \compTraj{T}{T'}$, $w_2: \compTraj{T'}{\infty}$ with $\projEnd(w_1) = \projStart(w_2)$,
then $w_2 ∈ \termInterp{A}{T'}{\withoutStar} \implies \SConcat{w_1}{w_2} \in \termInterp{A}{T}{\withoutStar}$.
\end{lem}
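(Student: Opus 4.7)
The plan is to peel off the definition of $\termInterp{A}{T}{\withoutStar}$ on both the hypothesis and the goal. Since $\SConcat{w_1}{w_2} : \compTraj{T}{\infty}$, the goal reduces to showing, for every $T''$ with $\propOf(A, T'')$, that $T \le T''$ and $\SConcat{w_1}{w_2}[T''] \in \valueInterp{A}{T''}{\withoutStar}$. Instantiating the hypothesis $w_2 \in \termInterp{A}{T'}{\withoutStar}$ at that same $T''$ delivers both $T' \le T''$ and $w_2[T''] \in \valueInterp{A}{T''}{\withoutStar}$.

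The time bound $T \le T''$ then follows by transitivity from $T \le T'$ and $T' \le T''$; the first inequality comes from $\mathcal{R}$-$<$ applied to the realization component of $w_1$ (which yields $T < T'$) combined with the total order on $\TimeT$. The remaining obligation is the pointwise replacement $\SConcat{w_1}{w_2}[T''] = w_2[T'']$. By the definition of $\concatTraj{-}{-}$, which splits the interval at $T'$ using the decidable order, the value at any $T'' \ge T'$ is exactly $w_2[T'']$, so the equality is immediate by case analysis on the comparison test. Rewriting by it closes the goal using the value hypothesis already in hand.

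The logical content is short; the main obstacle in the mechanization is the bookkeeping of the in-bound proofs that the underlying trajectory of $\SConcat{w_1}{w_2}$ expects (lower bound $T$, upper bound $\infty$) versus those that $w_2$ naturally carries (lower bound $T'$, upper bound $\infty$). These differ only by the weakening $T \le T'$, but in intensional type theory such mismatches would ordinarily be papered over with transports that then block subsequent definitional unfoldings. Following the paper's strategy, I would isolate the pointwise identity as a reusable commuting lemma of the form ``for $T_1 \le T'' < T_2$ with $T' \le T''$, $\SConcat{w_1}{w_2}[T''] = w_2[T'']$,'' analogous to the stated $\SPartitionB$ fact, and discharge the obligation by rewriting with that lemma rather than manipulating transports inside the proof body.
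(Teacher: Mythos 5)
Your argument is correct: unfolding the $\withoutStar$ term interpretation, instantiating the hypothesis at the given $T''$ to obtain $T' \le T''$ and the value-interpretation fact, deriving $T \le T''$ by transitivity via $\mathcal{R}$-$<$ on $w_1$, and rewriting $\SConcat{w_1}{w_2}[T''] = w_2[T'']$ from the case split in $\concatTraj{-}{-}$ is exactly the intended proof. The paper states this lemma without giving a proof, so there is nothing to diverge from; your reconstruction, including the observation that the real work is a commuting lemma for the bound-proof mismatch rather than inline transports, matches the development's overall strategy.
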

\begin{lem}[Forwards Closure]\label{lem:forwards-closure}
  For $w ∈ \termInterp{A}{T}{\withStar}$, and $T \leq T'$, then
  $\SPartitionA{w}{T'} \in \termInterp{A}{T'}{\withStar}$.
\end{lem}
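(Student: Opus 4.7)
The plan is to unfold the definition of the $\withStar$-flavored term interpretation and show that the partitioned computable trajectory has, at every sufficiently late time, the same configuration as $w$, so the hypothesis transports pointwise. Concretely, by definition, $\SPartitionA{w}{T'} \in \termInterp{A}{T'}{\withStar}$ unfolds to: for every $T''$ with $\propOf(A, T'')$ and $T' \leq T''$, one has $\SPartitionA{w}{T'}[T''] \in \valueInterp{A}{T''}{\withStar}$. So I would fix such a $T''$ and aim to conclude.

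Since $\SPartitionA{w}{T'}$ is obtained from $w : \compTraj{T_0}{\infty}$ by restricting the underlying nameless trajectory to $[T', \infty)$, an analog of the already-stated partition fact $\SPartitionB{w}{T}[T'] = w[T']$ (for in-bound $T'$) gives $\SPartitionA{w}{T'}[T''] = w[T'']$ whenever $T' \leq T''$. This is the only arithmetic/definitional step I need; if it is not already a separately named fact, I would state and prove it as a short auxiliary lemma, unfolding $\SPartitionA{}{}$ and $\projNtraj$ and appealing to function extensionality together with the underlying action of $\partAfterTraj{}{}$.

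With that equality, the goal becomes $w[T''] \in \valueInterp{A}{T''}{\withStar}$. By transitivity $T \leq T' \leq T''$, so $(T \leq T'') \wedge \propOf(A, T'')$ holds, and the hypothesis $w \in \termInterp{A}{T}{\withStar}$ delivers exactly this. No intricate induction on $A$, and no manipulation of the $\mathcal{R}$-proof component of $\SPartitionA{w}{T'}$, is required, since the $\withStar$-term interpretation only inspects the trajectory values at times.

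The main obstacle, and the reason this is not entirely automatic in Rocq, is the usual intensional-type-theory friction: the application $\SPartitionA{w}{T'}[T'']$ carries bound-proof arguments ($T' \leq T''$ and $T'' < \infty$) that must be reconciled with those implicit in $w[T'']$, and the multistepping endpoint time $\projStepTo$ for $\SPartitionA{w}{T'}$ lives inside an existential produced by $\mathcal{R}$-Partition. I expect to discharge these using propositional irrelevance of $\leq$ and $<$ on $\TimeT$ (already granted by the axioms on $\FinTimeT$), together with one or two commuting lemmas of the kind the paper emphasizes for handling transports — this is precisely the sort of routine but non-trivial bookkeeping the authors flag as the cost of intensional mechanization.
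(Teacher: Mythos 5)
Your proof is correct, and since the paper states Forwards Closure without an explicit proof, your argument is exactly the one left implicit: unfold the $\withStar$ term interpretation, use transitivity of $\leq$, and reduce everything to the pointwise identity $\SPartitionA{w}{T'}[T''] = w[T'']$ (the after-partition analog of the stated fact for $\SPartitionB{w}{T}$), with no induction on $A$ and no inspection of the $\mathcal{R}$-component. Your closing remarks about reconciling bound-proofs via propositional irrelevance of $\leq$ and $<$ are also the right assessment of where the mechanization effort actually goes.
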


\section{Fundamental Theorem}%
\label{sec:fundamental-theorem}
In this section, we verify arbitrary well-typed programs in a given type system using the logical relation.

\subsection{Overview}
To perform this mode of verification, we need to develop a type system that is strong enough to ensure that any
well-typed term behaves as prescribed by the logical relation. The proof of this property is referred to as the \emph{fundamental theorem of the logical relation}
(FTLR). By carrying out the proof of the FTLR ``once-and-for-all,'' per-program verification reduces to a typechecking problem---if it typechecks, we can simply invoke the FTLR to get our desired result.
Additionally, if typechecking is decidable, the FTLR will make program verification automatic.
In \cref{sec:lang-def}, we will introduce the necessary definitions for the language we want to verify, including the terms, types, typing rules, and semantics.
In \cref{sec:ftlr}, we use the above language definitions to define the fundamental theorem for well-typed terms, as well as an adequacy result that allows us to verify that inhabitants of the logical relation behave as expected.

\subsection{Language Definition}
\label{sec:lang-def}

In this exploration, we will consider a language of process terms which can be specified with the types defined in \cref{sub:protocol-spec-language}.
The grammar for this language is defined in \Cref{fig:term-grammar},
with the terms in the left column being constructors for the corresponding behavioral types in the right column.
The grammar contains boolean-valued signals $e$ (with $π_1$ being $\textcode{false}$ and $π_2$ being $\textcode{true}$), symbols $s$ (which can either be channel names $a ∈ \Labels$, where $\Labels$ is a countably infinite set of identifiers, or a variable $x$),
finite points in time $T$, and process terms $M$.
Each of the terms is annotated with either a time $T$ or a temporal predicate $t.p$ (as introduced in \cref{sub:protocol-spec-language}).
Binding occurrences are denoted by $x ⇒ M$.
As detailed in the next section, we will type these process terms using
timed ILLST \cite{CairesCONCUR2010,ToninhoESOP2013,ToninhoPhD2015}.

\begin{figure}[ht]
\begin{align*}
e ::={} & π₁ \mid π₂ \\
s ::={} & a \mid x && \text{variable or channel} \\
T :{}   & \FinTimeT && \text{finite time expression}
\end{align*}
\begin{align*}
M ::={} & \fwdEx^T(s) \mid \letEx^T~x:A ← M₁; M₂ \\
  \mid{} & \sendEx^{t.p}() \mid \recvEx^T_s(); M
    && (1) \\
  \mid{} & \recvEx^{t.p}(x ⇒ M) \mid \sendEx^T_s(M₁); M₂
    && (⊸) \\
  \mid{} & \sendEx^{t.p}(M₁); M₂ \mid \recvEx^T_s(x ⇒ M)
    && (⊗) \\
  \mid{} & \recvEx^{t.p}(π₁ ⇒ M₁ \mid π₂ ⇒ M₂) \mid \sendEx^T_s(e); M
    && (\lwith) \\
  \mid{} & \sendEx^{t.p}(e); M \mid \recvEx^T_s(π₁ ⇒ M₁ \mid π₂ ⇒ M₂)
    && (⊕)
\end{align*}
\caption{Grammar for the process language.}\label{fig:term-grammar}
\end{figure}

\subsubsection{Typing Rules}

\begin{figure*}[h!]
\centering
\begin{mathparpagebreakable}
\inferrule[\textsc{Cut}]
  { \GF \mid Γ₁ ⊢ M₁ \relAt T :: A \\\\
    \GF \mid Γ₂, x : A' ⊢ M₂ \relAt T :: C \\\\
    \GF ⊢ A ⋉ A' \relAt T }
  { \GF \mid Γ₁, Γ₂ ⊢ (\letEx^T~x:A' ← M₁ ; M₂) \relAt T :: C }

\inferrule[\textsc{Fwd}]
  { \GF ⊢ A ⋊ A' \relAt T }
  { \GF \mid x : A ⊢ \fwdEx^T(← x) \relAt T :: A' }

\inferrule[\textsc{1-Right}]
  { \GtFp ⊢ T ≤ t }
  { \GF \mid \varnothing ⊢ \sendEx^{t.p}() \relAt T :: 1^{t.p} }

\inferrule[\textsc{1-Left}]
  { \GF \mid Γ ⊢ M \relAt T' :: C \\
    \GF ⊢ T ≤ T' \\ \GF ⊢ p[T'/t] }
  { \GF \mid Γ, x:1^{t.p} ⊢ (\recvEx^{T'}_x(); M) \relAt T :: C } \\

\inferrule[\textsc{$⊸$-Right}]
  { \GtFp \mid Γ, x : A_1 ⊢ M_2 \relAt t :: A_2 \\\\
    \GtFp ⊢ T ≤ t }
  { \GF \mid Γ ⊢ \recvEx^{t.p}(x ⇒ M_2) \relAt T :: (A_1 ⊸^{t.p} A_2) }

\inferrule[\textsc{$⊸$-Left}]
  { \GF \mid Γ₁ ⊢ M₁ \relAt T' :: A_1[T'/t] \\\\
    \GF \mid Γ₂, x : A₂[T'/t] ⊢ M₂ \relAt T' :: C \\\\
    \GF ⊢ T ≤ T' \\ \GF ⊢ p[T'/t] }
  { \GF \mid Γ₁, Γ₂, x : A_1 ⊸^{t.p} A_2 ⊢ (\sendEx^{T'}_x(M_1) ; M_2) \relAt T :: C }

\inferrule[\textsc{$⊗$-Right}]
  { \left( \GtFp \mid Γ_i ⊢ M_i :: A_i \right)_{i ∈ \set{1, 2}} \\ \GtFp ⊢ T ≤ t }
  { \GF \mid Γ₁, Γ₂ ⊢ (\sendEx^{t.p}(M_1); M_2) \relAt T :: (A_1 ⊗^{t.p} A_2) }

\inferrule[\textsc{$⊗$-Left}] 
  { \GF \mid Γ, x : A_1[T'/t], y : A_2[T'/t] ⊢ M \relAt T' :: C \\\\
    \GF ⊢ T ≤ T' \\ \GF ⊢ p[T'/t] }
  { \GF \mid Γ, x : A_1 ⊗^{t.p} A_2 ⊢ (\recvEx^{T'}_x(y ⇒ M)) \relAt T :: C }

\inferrule[\textsc{$\lwith$-Right}]
  { \left( \GtFp \mid Γ ⊢ M_i :: A_i \right)_{i ∈ \set{1, 2}} \\ \GtFp ⊢ T ≤ t }
  { \GtFp \mid Γ ⊢ \recvEx^{t.p}(π₁ ⇒ M₁ \mid π₂ ⇒ M₂) \relAt T :: (A₁ \lwith{}^{t.p} A₂) }

\inferrule[\textsc{$\lwith$-Left}]
  { \GF \mid Γ, x : A_i[T'/t] ⊢ M \relAt T' :: C \\ i ∈ \set{1, 2} \\\\
    \GF ⊢ T ≤ T' \\ \GF ⊢ p[T'/t] }
  { \GF \mid Γ, x : A₁ \lwith{}^{t.p} A₂ ⊢ (\sendEx^{T'}_x(π_i); M) \relAt T :: C }

\inferrule[\textsc{$⊕$-Right}]
  { \GtFp \mid Γ ⊢ M \relAt t :: A_i \\\\
    i ∈ \set{1, 2} \\ \GtFp ⊢ T ≤ t }
  { \GtFp \mid Γ ⊢ (\sendEx^{t.p}(π_i); M) \relAt T :: (A₁ ⊕^{t.p} A₂) }

\inferrule[\textsc{$⊕$-Left}]
  { \left( \GF \mid Γ, x : A_i[T'/t] ⊢ M_i \relAt T' :: C \right)_{i ∈ \set{1, 2}} \\\\
    \GF ⊢ T ≤ T' \\ \GF ⊢ p[T'/t] }
  { \GF \mid Γ, x : A_1 ⊕^{t.p} A_2 ⊢ \recvEx^{T'}_x(π₁ ⇒ M₁ \mid π₂ ⇒ M₂) \relAt T :: C }

\end{mathparpagebreakable}

\caption{Typing rules for the process language.}%
\label{fig:typing-rules}
\end{figure*}

\begin{figure*}[h!]
\centering
\begin{mathparpagebreakable}
\inferrule
  { \GtFq ⊢ p \\ \GtFq ⊢ T ≤ t }
  { \GF ⊢ 1^{t.p} ⋊ 1^{t.q} \relAt T }

\inferrule
  { \GtFq ⊢ p \\ \GtFq ⊢ T ≤ t \\\\
    \GtFq ⊢ B_1 ⋊ A_1 \relAt t \\ \GtFq ⊢ A_2 ⋊ B_2 \relAt t }
  { \GF ⊢ A_1 ⊸^{t.p} A_2 ⋊ B_1 ⊸^{t.q} B_2 \relAt T }

\inferrule
  { \GtFq ⊢ p \\ \GtFq ⊢ T ≤ t \\
    \GtFq ⊢ A_1 ⋊ B_1 \relAt t \\ \GtFq ⊢ A_2 ⋊ B_2 \relAt t }
  { \left( \GF ⊢ A_1 \mathrel{R}^{t.p} A_2 ⋊ B_1 \mathrel{R}^{t.q} B_2 \relAt T \right)_{\text{for}~R∈\set{\lwith, ⊕, ⊗}} }

\\\\

\inferrule
  { \mathcal{G}, t; \mathcal{F}, T ≤ t, q ⊢ p }
  { \GF ⊢ 1^{t.p} ⋉ 1^{t.q} \relAt T }

\inferrule
  { \mathcal{G}, t; \mathcal{F}, T ≤ t, q ⊢ p \\\\
    \mathcal{G}, t; \mathcal{F}, T ≤ t, q ⊢ B_1 ⋉ A_1 \relAt t \\
    \mathcal{G}, t; \mathcal{F}, T ≤ t, q ⊢ A_2 ⋉ B_2 \relAt t }
  { \GF ⊢ A_1 ⊸^{t.p} A_2 ⋉ B_1 ⊸^{t.q} B_2 \relAt T }

\inferrule
  { \mathcal{G}, t; \mathcal{F}, T ≤ t, q ⊢ p \\
    \mathcal{G}, t; \mathcal{F}, T ≤ t, q ⊢ A_1 ⋉ B_1 \relAt t \\
    \mathcal{G}, t; \mathcal{F}, T ≤ t, q ⊢ A_2 ⋉ B_2 \relAt t }
  { \left( \GF ⊢ A_1 \mathrel{R}^{t.p} A_2 ⋉ B_1 \mathrel{R}^{t.q} B_2 \relAt T \right)_{\text{for}~R∈\set{\lwith, ⊕, ⊗}} }

\end{mathparpagebreakable}
\caption{Retyping rules.}%
\label{fig:retyping-rules}
\end{figure*}

To type the process terms defined in \Cref{fig:term-grammar} we use \emph{session types} \cite{HondaCONCUR1993,HondaESOP1998,HondaPOPL2008}, and in particular \emph{timed intuitionistic linear logic session types},
as discussed in \Cref{sub:protocol-spec-language}.

To type the process terms in \Cref{fig:term-grammar}, we use a typing judgment of the form
\[\GF \mid Γ ⊢ M \relAt T :: A,\]
which can be read as ``the process term $M$ provides the communication behavior (i.e. ``session'') of type $A$ at time $T$ under typing context $Γ$, temporal variable context $\mathcal{G}$, and assuming truth of propositions in $\mathcal{F}$.'' In this judgment, $Γ$ provides the typing of free variables in $M$,
$\mathcal{G}$ provides the scoping of time variables, and $\mathcal{F}$. Note that the contents of $\mathcal{G}$ are accessible in the contexts $\mathcal{F}$ and $Γ$, the time $T$, and the type $A$.
The inference rules for this judgment are defined in~\Cref{fig:typing-rules}.

The rules in \Cref{fig:typing-rules} are given in a \emph{sequent calculus},
describing the behavior of a session from the point of view of a provider of the session---expressed by so-called \emph{right rules}---as well as from the point of view of a client---expressed by so-called \emph{left rules}.
Computationally, the rules can be read bottom-up,
where the type of the conclusion denotes the protocol state of the object \textit{before} the message exchange,
and the type of the premise denotes the protocol state \textit{after} the message exchange.
As a consequence, since the conclusion is asserted at time $T$, the premises must take place at some future time.
The rules are thus in agreement with the behavior specified in \cref{sub:protocol-spec-language} at each type.
For example, in case of the right rule \textsc{$⊕$-Right}, the provider sends either of the labels $π_1$ or $π_2$ at time $T$
and then transitions to offering the session $A_i$ at any time $t$ satisfying the temporal predicate $p$, for $i ∈ \set{1, 2}$.
On the other hand, in case of the left rule \textsc{$⊕$-Left}, the client branches on the received label at time $T$,
continuing with either $M₁$ or $M₂$ at a chosen time $T'$ such that $T'$ satisfies the temporal predicate $p$.

While the typing judgment uses a context $Γ$ similarly to the $λ$-calculus,
the typing context here contains a list of \emph{bindings}, which we represent as a finite map
$Γ : \textsf{var} \partialmap \textsf{Types}$, where
$\textsf{var}$ is the set of variables and $\textsf{Types}$ is the set of types.
We do use the conventional $λ$-calculus notation $Γ, x: A$ for the extension of contexts,
but this should be interpreted as inserting the key-value pair $x \mapsto A$ into the map $Γ$.

The rules \textsc{Cut} and \textsc{ID} both carry an additional premise,
written $A ⋉ A' \relAt T$ and $A ⋊ A' \relAt T$, respectively.
These are the \emph{retyping} relations, which are well-explained in~\citet{YaoPOPL2025}[\S 4.1].
The rules for retyping are defined in~\Cref{fig:retyping-rules}.

\subsubsection{Runtime Process Terms}

At runtime, we assign channel names (i.e.~elements of the set $\Labels$) to variables in the context.
Such an assignment is called a \emph{substitution} $σ : \textsf{var} \partialmap \Labels$ from variables to names.
These are not to be confused with substitutions in the $λ$-calculus,
in that we are only substituting channel names for variables, not other terms with possibly free variables.
As a result, the possibility of incurring capture is ruled out in the first place.

Similar to the $λ$-calculus, we write $σ,b/x$ for extending the substitution $σ$ with the key-value
pair $x \mapsto b$, where $b$ is a channel name.
Additionally, we write $σ(x)$ for obtaining the name assigned to the variable $x$ from $σ$.

We define runtime process terms by applying a substitution to process terms $M$, which is defined in~\Cref{fig:subst}.
Notably, the operation $σ-x$ is used for removing $x$ from the substitution $σ$.

\begin{figure}[ht]
\begin{tabbing}
$\hat{σ}(\recvEx^{t.p}(y ⇒ M))$ \= $=$ \= whatever \kill
$\hat{σ}(\fwdEx^T(← x))$ \> $=$ \> $\fwdEx^T(← σ(x))$ \\
$\hat{σ}(\letEx^T~x:A ← M₁; M₂)$ \\
 \> $=$ \> $\letEx^T~x:A ← \hat{σ}(M₁); \widehat{σ-x}(M₂)$ \\
$\hat{σ}(\sendEx^{t.p}())$ \> $=$ \> $\sendEx^{t.p}()$ \\
$\hat{σ}(\recvEx^T_x(); M)$ \> $=$ \> $\recvEx^T_{σ(x)}(); \widehat{σ-x}(M)$ \\
$\hat{σ}(\recvEx^{t.p}(y ⇒ M))$ \> $=$ \> $\recvEx^{t.p}(y ⇒ \widehat{σ-y}(M))$ \\
$\hat{σ}(\sendEx^T_x(M_1); M_2)$ \> $=$ \> $\sendEx^T_{σ(x)}(\hat{σ}(M_1)); \hat{σ}(M_2)$ \\
$\hat{σ}(\sendEx^{t.p}(M_1); M_2)$ \> $=$ \> $\sendEx^{t.p}(\hat{σ}(M_1)); \hat{σ}(M_2)$ \\
$\hat{σ}(\recvEx^T_x(y ⇒ M))$ \> $=$ \> $\recvEx^T_{σ(x)}(y ⇒ \widehat{σ-x}(M))$ \\
$\hat{σ}(\sendEx^T_x(π_i); M)$ \> $=$ \> $\sendEx^T_{σ(x)}(π_i); \hat{σ}(M)$ \text{(for $i∈\set{1,2}$)} \\
$\hat{σ}(\sendEx^{t.p}(π_i); M)$ \> $=$ \> $\sendEx^{t.p}(π_i); \hat{σ}(M)$ \text{(for $i∈\set{1,2}$)} \\
$\hat{σ}(\recvEx^{t.p}(π_1 ⇒ M₁ \mid π_2 ⇒ M₂))$ \\
 \> $=$ \> $\recvEx^{t.p}(π_1 ⇒ \hat{σ}(M₁) \mid π_2 ⇒ \hat{σ}(M₂))$ \\
$\hat{σ}(\recvEx^T_x(π_1 ⇒ M₁ \mid π_2 ⇒ M₂))$ \\
 \> $=$ \> $\recvEx^T_{σ(x)}(π_1 ⇒ \hat{σ}(M₁) \mid π_2 ⇒ \hat{σ}(M₂))$ \\
\end{tabbing}
\caption{Substitution of terms in the process language.}\label{fig:subst}
\end{figure}

\begin{remark}
As is usual, we define $\hat{σ}$ by structural induction on the untyped process term,
removing any bound variable encountered ($σ-x$).
This choice is pretty standard; alternatively we could have updated the substitution $σ$
by mapping the bound variable to itself.

If we were to consider typed terms instead of raw terms,
we could exploit linearity,
the question comes up how $\hat{σ}$ should handle free variables in our type system.
For instance, consider the typing rule for \textsc{$⊗$-Right}
with scoping-related premises explicitly written out:
\begin{mathpar}
\inferrule[\textsc{$⊗$-Right}]
  { \left( \GtFp \mid Γ_i ⊢ M_i :: A_i \right)_{i ∈ \set{1, 2}} \\ \GtFp ⊢ T ≤ t \\ Γ₁ ∩ Γ₂ = \varnothing }
  { \GF \mid Γ₁ ∪ Γ₂ ⊢ (\sendEx^{t.p}(M_1); M_2) \relAt T :: (A_1 ⊗^{t.p} A_2) }
\end{mathpar}
By linear typing, we know that the contexts $Γ₁$ and $Γ₂$ are disjoint,
which indicates a disjoint splitting of $σ$ exists, say $σ = σ₁ ∪ σ₂$,
where $σ_i$ and $Γ_i$ have the same domain for $i ∈ \set{1, 2}$.
If we were to define $\hat{σ}$ for well-typed terms, then it seems more appropriate to define instead:
\[\hat{σ}(\sendEx^{t.p}(M_1); M_2) = \sendEx^{t.p}(\widehat{σ₁}(M_1)); \widehat{σ₂}(M_2).\]
However, \cref{fig:subst} defines $\hat{σ}$ for this case as:
\[\hat{σ}(\sendEx^{t.p}(M_1); M_2) = \sendEx^{t.p}(\hat{σ}(M_1)); \hat{σ}(M_2).\]
This is much easier to implement, but a few lemmas need to be proved about it.
As of now, we are unclear about which approach is favorable.
\end{remark}

We then have the following lemmas on substitution:
\begin{lem}[Composition of Substitution]\label{lem:subst-comp}
For any term $M$, $x ∉ σ$, and channel name $b ∈ \Labels$:
\[
  \widehat{b/x}(\hat{σ}(M)) = \widehat{σ, b/x}(M)
\]
\end{lem}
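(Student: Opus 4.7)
The plan is to proceed by structural induction on the process term $M$, dispatching the thirteen clauses of \Cref{fig:subst}. Before starting, I would establish a small algebraic sublemma on finite-map substitutions: for any substitution $\sigma$, any variable $y$, and any $x \neq y$, the maps $(\sigma - y), b/x$ and $(\sigma, b/x) - y$ agree pointwise, since both send $x$ to $b$, send every $z \in \mathrm{dom}(\sigma)$ with $z \neq y$ to $\sigma(z)$, and are undefined elsewhere. This is a direct computation from the definitions of removal and extension on finite maps, and it is the only nontrivial arithmetic needed about the substitution structure.

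For the non-binding cases ($\fwdEx^T(\leftarrow x)$, $\sendEx^{t.p}()$, and the various $\sendEx$ / $\recvEx$ forms whose immediate constructor does not introduce a new variable), the calculation is mechanical: unfold $\hat{\sigma}$ on both sides using \Cref{fig:subst}, apply the induction hypothesis to each subterm, and conclude by congruence. The one point to observe is that wherever a symbol position is realized by a channel name $\sigma(z) \in \Labels$ after the inner substitution, the outer $\widehat{b/x}$ leaves it untouched, since channel names—unlike variables—are invariant under $b/x$.

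The binding cases $\letEx^T\, x':A \leftarrow M_1; M_2$, $\recvEx^T_s(); M$, $\recvEx^{t.p}(y \Rightarrow M)$, and $\recvEx^T_s(y \Rightarrow M)$ form the heart of the proof. In each, the body sits under a $\widehat{\sigma - y}$. The induction hypothesis applies because removal cannot introduce $x$ into the domain, so $x \notin (\sigma - y)$ follows from $x \notin \sigma$; it gives $\widehat{b/x}(\widehat{\sigma - y}(M')) = \widehat{(\sigma - y), b/x}(M')$. After this, one rewrites $(\sigma - y), b/x$ to $(\sigma, b/x) - y$ using the sublemma, which requires $x \neq y$.

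The main obstacle is exactly this variable-capture bookkeeping: without a structural guarantee that each bound variable differs from $x$, the binder cases fail. My plan is to add a freshness side condition on bound variables (a Barendregt-style convention) to the statement, so that $x \neq y$ is available whenever a binder $y$ is encountered, or, in the Rocq development, to use a nameless representation of binders so that the sublemma's side condition is discharged by construction. With either choice in place, every remaining step in each binding case collapses to one unfold, one application of the induction hypothesis, and one rewrite by the sublemma.
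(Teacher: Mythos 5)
Your overall strategy---structural induction on $M$, unfolding the clauses of the substitution definition, and a commuting sublemma $(\sigma - y), b/x = (\sigma, b/x) - y$ for the binder cases---is exactly the paper's route (the paper's proof is literally ``by induction on the structure of $M$ and expanding the definition of $\hat{\sigma}$''). The one place you go wrong is the conclusion you draw from the shadowing case: you propose to \emph{change the statement} by adding a Barendregt-style freshness condition so that every bound variable $y$ satisfies $y \neq x$. This is unnecessary, and weakening the lemma this way would be a real cost, since downstream uses (e.g.\ the $\letEx$ and $\recvEx$ stepping rules, which apply $\widehat{b'/x}$ to bodies with arbitrary binders) do not come with such a guarantee.

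The case $y = x$ goes through directly. On the left, $\hat{\sigma}$ already strips the bound variable, so the body is $\widehat{(b/x) - x}\bigl(\widehat{\sigma - x}(M')\bigr)$; here $(b/x) - x$ is the empty substitution, which acts as the identity (lookups outside the domain leave the symbol alone), and $\sigma - x = \sigma$ by the hypothesis $x \notin \sigma$, so the body reduces to $\hat{\sigma}(M')$. On the right, $(\sigma, b/x) - x = \sigma$ for the same reason, giving $\hat{\sigma}(M')$ as well. So the binder cases need only a two-way split on whether $y = x$: your sublemma in the branch $y \neq x$, and the computation above (plus the trivial fact that the empty substitution is the identity) in the branch $y = x$. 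This is consistent with the paper's own remark that, because only channel names---never terms with free variables---are substituted, capture ``is ruled out in the first place''; no renaming convention or switch to a nameless binder representation is needed. One small point you handle correctly and should keep: the induction hypothesis must be generalized over $\sigma$, since the binder cases instantiate it at $\sigma - y$.
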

\begin{proof}
By induction on the structure of $M$ and expanding the definition of $\hat{σ}$.
\end{proof}

Before proceeding to~\Cref{lem:discard,defn:compl},
we define the following, which extends binary relations to elementwise relations on finite maps.
The goal is to related contexts and substitutions---both are finite maps from variables.

\begin{defn}\label{defn:map_forall2}
We say two finite maps $m_1, m_2 : X \partialmap Y$ are \emph{related}
by a relation $R : Y × Y → \textcode{Type}$ when for every $x ∈ X$,
one of the following holds:
\begin{itemize}
\item $m_1$ and $m_2$ are both undefined at $x$,
\item $m_1$ and $m_2$ are both defined at $x$ and $R(m_1[x], m_2[x])$ holds.
\end{itemize}
\end{defn}

\begin{defn}\label{defn:map_ops}
We introduce the following notations:
\begin{itemize}
\item For $m : X \partialmap Y$ and $f : Y → Y'$,
we write $f(m)$ for applying $f$ to the values of $m$, without changing the keys.
\item For $m : X \partialmap Y$ and $x : X$,
we write $m - x$ for removing the key $x$ from $m$.
\end{itemize}
\end{defn}

\begin{lem}\label{lem:map_forall2-union}
For disjointed maps $m_1, m_2$, if $m_1 ⊎ m_2$ are related with $σ$, then we can compute
a disjoint decomposition $σ_1, σ_2$ such that $m_i$ are related with $σ_i$ for $i ∈ \set{1, 2}$,
and $σ = σ_1 ⊎ σ_2$.
\end{lem}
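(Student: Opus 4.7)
The plan is to construct the decomposition explicitly by restricting $σ$ to the domain of each $m_i$. Concretely, define $σ_1$ to be the partial map that agrees with $σ$ on $\text{dom}(m_1)$ and is undefined elsewhere, and similarly define $σ_2$ on $\text{dom}(m_2)$. Since the hypothesis that $m_1 ⊎ m_2$ is related with $σ$ (in the sense of \Cref{defn:map_forall2}) means in particular that $\text{dom}(m_1 ⊎ m_2) = \text{dom}(σ)$, and since the disjointness of $m_1, m_2$ implies $\text{dom}(m_1 ⊎ m_2) = \text{dom}(m_1) \uplus \text{dom}(m_2)$, this construction exhausts all keys of $σ$.

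First I would verify the two relatedness obligations $R(m_1, σ_1)$ and $R(m_2, σ_2)$. For each key $x$, I do a case split on whether $x \in \text{dom}(m_i)$: if yes, then $σ_i[x] = σ[x]$ and $m_i[x] = (m_1 ⊎ m_2)[x]$ by disjointness, so the desired $R(m_i[x], σ_i[x])$ follows directly from the given relatedness of $m_1 ⊎ m_2$ with $σ$ at $x$; if no, then both $m_i$ and $σ_i$ are undefined at $x$, discharging that branch of \Cref{defn:map_forall2}.

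Next I would verify disjointness of $σ_1$ and $σ_2$ and the equation $σ = σ_1 ⊎ σ_2$. Disjointness follows immediately: any key in the domain of $σ_i$ lies in $\text{dom}(m_i)$, and the $m_i$ are disjoint by assumption. For the union equation, I would argue pointwise: for $x \in \text{dom}(σ)$, relatedness forces $x \in \text{dom}(m_1) \uplus \text{dom}(m_2)$, so exactly one $σ_i$ is defined at $x$ with value $σ[x]$; for $x \notin \text{dom}(σ)$, both $σ_i$ are undefined at $x$, so $(σ_1 ⊎ σ_2)[x]$ is undefined as well.

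The main obstacle will not be the mathematical content, which is essentially bookkeeping, but the mechanization: the notions of domain, restriction, and disjoint union of finite partial maps must be well-behaved enough in the chosen Rocq library to support the pointwise case analyses above, and extensionality for partial maps is needed to conclude $σ = σ_1 ⊎ σ_2$ from the pointwise equality. In practice this likely means peeling away \textsf{lookup} lemmas for the restriction operation and invoking an \textsf{map\_eq} principle; care is also needed because the construction must be genuinely computable (the lemma states we ``can compute'' the decomposition), so I would define $σ_1, σ_2$ using a concrete restriction primitive rather than by classical existence.
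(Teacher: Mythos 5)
Your construction — restricting $\sigma$ to $\mathrm{dom}(m_1)$ and $\mathrm{dom}(m_2)$ and checking relatedness, disjointness, and the union equation pointwise — is correct, and the paper offers no explicit proof of this lemma to compare against; this is the standard argument and surely the intended one. One small refinement: the computability concern you raise at the end attaches less to the maps $\sigma_1,\sigma_2$ themselves (restriction is always computable) than to the relatedness witnesses $R(m_i[x],\sigma_i[x])$, which must be projected out of the hypothesis — this is exactly why the paper notes that its \textcode{map\_Forall2} must be valued in \textcode{Type} rather than \textcode{Prop} for this lemma to go through.
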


A convenient way to represent the facts that two maps have the same domain is to
say that they are related by any suitable relation, so the above facts can be used to derive
other useful propositions about these maps. For example,
we can define the following lemma for related type context maps $Γ$ and substitution maps $σ$:

\begin{lem}[Discard]\label{lem:discard}
For $\GF \mid Γ ⊢ M \relAt T :: A$ and substitutions $σ, σ'$ such that $σ$ and $Γ$ are related maps,
and $σ$ disjoint with $σ'$, $\widehat{σ \cup σ'}(M) = \hat{σ}(M)$.
\end{lem}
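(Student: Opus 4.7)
The plan is to proceed by induction on the typing derivation $\GF \mid Γ ⊢ M \relAt T :: A$, unfolding $\hat{σ}$ according to \Cref{fig:subst} in each case. The guiding intuition is that a well-typed term $M$ can only have free variables in $\mathrm{dom}(Γ)$, and relatedness of $σ$ with $Γ$ forces $\mathrm{dom}(σ) = \mathrm{dom}(Γ)$; hence every binding in $σ'$ targets a variable that does not occur free in $M$ and so cannot alter the result of substitution.

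The leaf cases---\textsc{Fwd}, \textsc{1-Right}, and similar---reduce to the pointwise lookup identity $(σ \cup σ')(x) = σ(x)$. This holds because any channel reference $x$ actually substituted by $\hat{σ}$ must appear in $Γ$ (hence in $\mathrm{dom}(σ)$ by relatedness) and $σ'$ is disjoint from $σ$, so the union lookup defers to $σ$.

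The context-splitting rules---\textsc{Cut}, \textsc{$⊸$-Left}, \textsc{$⊗$-Right}---split $Γ$ as $Γ_1, Γ_2$, with each subterm typed under one half. Here I would invoke \Cref{lem:map_forall2-union} to obtain a matching decomposition $σ = σ_1 \fmconcat σ_2$ with each $σ_i$ related to $Γ_i$, and then apply the induction hypothesis to each subterm, treating the other half of $σ$ bundled with $σ'$ as the new ``irrelevant'' substitution (which remains disjoint from the relevant one by construction).

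The binder cases---\textsc{1-Left}, \textsc{$⊸$-Right}, \textsc{$⊗$-Left}, and the $\lwith$ / $⊕$ rules that introduce a fresh variable $x$---unfold to the subgoal $\widehat{(σ \cup σ') - x}(M') = \widehat{σ - x}(M')$, where $M'$ is typed under $Γ, x : A'$. I expect this to be the main obstacle: $σ - x$ is no longer related to $Γ, x : A'$, so the induction hypothesis does not apply off the shelf. To close it I would either strengthen the statement to a substitution-irrelevance lemma quantified over free-variable sets (provable by a straightforward induction on the term), or adopt the convention that $x$ is fresh for $\mathrm{dom}(σ) \cup \mathrm{dom}(σ')$ so both removals are vacuous, then extend $σ$ with a dummy channel binding for $x$ to restore relatedness and invoke the IH on $M'$. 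Either way, the finite-map identity $(σ \cup σ') - x = (σ - x) \cup (σ' - x)$ and the insensitivity of $\hat{σ}$ to bindings for variables not free in the term round out the argument.
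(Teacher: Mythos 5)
Your proposal matches the paper's proof, which is exactly ``by induction on the typing derivation of $\GF \mid Γ ⊢ M \relAt T :: A$ and \Cref{lem:map_forall2-union}'' --- the same induction, the same use of the disjoint-decomposition lemma for the context-splitting rules. The subtlety you flag in the binder cases (that $σ$ is no longer related to the extended context $Γ, x:A'$, so the induction hypothesis needs a strengthening or a freshness convention) is real but is not surfaced by the paper's one-line proof, and your proposed remedies are reasonable.
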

\begin{proof}
By induction on the typing of $\GF \mid Γ ⊢ M \relAt T :: A$ and~\Cref{lem:map_forall2-union}.
\end{proof}

\subsubsection{Stepping Rules}

\begin{defn}
We implement a process language (\Cref{defn:proc-lang}),
with nameless objects $\NObj$ being the terms in~\Cref{fig:term-grammar},
and stepping rules specified in~\Cref{fig:dynamics}.
\end{defn}

\begin{figure}[h!]
\centering
\begin{tabbing}
The following stepping rules always apply: \\
$({\recvEx^{t.p}_b(x ⇒ M)}, c)$ \= $\stepsToObj{b?π_2}{T}$ \= whatevs \kill
$({\fwdEx^T(← b)}, a)$ \> $\stepsToObj{ε}{T}$ \> $\procFwd a b {}$ \\
$({\letEx^T~x:A ← M_1; M_2}, a)$ \\
 \> $\stepsToObj{ε}{T}$ \\
 \` $\procEx{b'}{M_1} \fmconcat{} \procEx a {\widehat{b'/x}(M_2)}$, for $b'∈ \Labels$ \\
$({\recvEx^{T}_b();M}, a)$ \> $\stepsToObj{b?()}{T}$ \> $\procEx a M$ \\
$({\recvEx^T_b(x ⇒ M)}, c)$ \> $\stepsToObj{b?a}{T}$ \> $\procEx c {\widehat{a/x}(M)}$ \\
$({\sendEx^{T}_b(M_1);M_2}, a)$ \> $\stepsToObj{b!c}{T}$ \> $\procEx c {M_1} \fmconcat{} \procEx a {M_2}$ \\
$({\recvEx^T_b(π_1 ⇒ M_1 \mid π_2 ⇒ M_2)}, a)$ \\
 \> $\stepsToObj{b?π_i}{T}$ \> $\procEx a {M_i}$, for $i∈\set{1,2}$ \\
$({\sendEx^T_b(π_i);M}, a)$ \> $\stepsToObj{b!π_i}{T}$ \> $\procEx a M$, for $i∈\set{1,2}$ \\
\\
The following stepping rules are under the premise $p[T/t]$,\\
for universally quantified $T$:\\
$({\sendEx^{t.p}()}, a)$ \> $\stepsToObj{a!()}{T}$ \> $\varnothing$ \\
$({\sendEx^{t.p}(M_1);M_2}, a)$ \> $\stepsToObj{a!c}{T}$ \> $\procEx a {M_2} \fmconcat{} \procEx c {M_1}$ \\
$({\recvEx^{t.p}(x ⇒ M)}, c)$ \> $\stepsToObj{a?b}{T}$ \> $\procEx c {\widehat{b/x}(M)}$ \\
$({\sendEx^{t.p}(π_i);M}, a)$ \> $\stepsToObj{a!π_i}{T}$ \> $\procEx a M$, for $i∈\set{1,2}$ \\
$({\recvEx^{t.p}(π_1 ⇒ M_1 \mid π_2 ⇒ M_2)}, a)$ \\
  \> $\stepsToObj{a?π_i}{T}$ \> $\procEx a {M_i}$, for $i∈\set{1,2}$
\end{tabbing}
\caption{Object-level stepping of the process language.}%
\label{fig:dynamics}
\end{figure}

\subsection{Fundamental Theorem}
\label{sec:ftlr}

Before we can state the FTLR, we need to define closing substitutions for the typing context $Γ$,
that we refer to as \emph{complementary configurations}, $δ$, and implement as a finite map from 
the variables in $Γ$ to values $(a, \nameless{Ω}) ∈ \Labels × \compTraj{T}{∞}$. Each pair $(a, \nameless{Ω})$ is
a computable trajectory with its providing channel $a$, and the $δ$ connects variables from $Γ$ to these trajectories.
At runtime, these variables are substituted with channel names,
so we need a way to guarantee that these connected trajectories are also ``well-behaved'', as dictated by the logical relation:

\begin{defn}\label{defn:compl}
We define the set of \emph{complementary configurations} for a context $Γ$
at a given time $T$ as $δ : \text{vars} \partialmap \Labels × \compTraj{T}{∞}$ such that
$δ$ and $Γ$ are related by the relation $s ∈ \termInterp{A}{T}{\withStar}$, for $(a, s) ∈ δ$ and $A ∈ Γ$.

The proposition that $δ$ \emph{is a complementary configuration for $Γ$} is written as $\typedCompl{Γ}{δ}$.
\end{defn}

Applying the first projection to $δ$ obtains a substitution,
which we will denote by $\substOf{δ}$.
We note here that this operation commutes with insertion, union, and deletion on finite maps, which are all mechanized.

The following definition describes the operation of ``applying'' complementary configurations $δ$
to a nameless atomic process $\nameless{P}$. This operation, in a sense,
``links'' a nameless atomic process with all the configurations it communicates with 
by instantiating each of the $\nameless{Ω}$s in the $δ$ map with their providing channel.

\begin{defn}\label{defn:applyCompl}
We may \emph{apply} complementary configurations $δ$ such that $\typedCompl{Γ}{δ}$,
to $\nameless{P} ∈ \NObj$ by mapping the values $(a, w)$ to $\projStart(w)[a]$,
and union them all into a configuration $Ω$.
We then put this result together with $\nameless{P}$ to obtain $(Ω, \nameless{P}) ∈ \NCfg$.

We denote this operation as $\applyCompl{δ}{\nameless{P}}$ and commutes with map and multiset operations.
\end{defn}

\begin{defn}[InterleaveCompl]\label{defn:S-interleave-compl}
  For times $T, T'$, complementary configurations $δ$, and $w \in \compTraj{T}{T'}$,
  we can \emph{interleave} the entirety of $w$ with $δ$ by iterating $δ$,
  where for each $(a, w') ∈ \Labels × \compTraj{T}{∞}$, we fold them as $\SInterleave{w}{w'}{a}$.
  We call this entire operation $\SInterleaveCompl{w}{δ}$.

  The start of the interleaved computable trajectory should be $\applyCompl{δ}{\nameless{P}}$.
\end{defn}

\begin{lem}[Semantic Retyping]
\begin{enumerate}
\item If $A ⋉ B \relAt T$ and given $w ∈ \termInterp{A}{T}{\withoutStar}$, then $w ∈ \termInterp{B}{T}{\withStar}$;
\item If $A ⋊ B \relAt T$ and given $w ∈ \termInterp{B}{T}{\withStar}$, then $w ∈ \termInterp{A}{T}{\withoutStar}$.
\end{enumerate}
\end{lem}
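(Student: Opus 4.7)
The plan is to prove (1) and (2) by separate structural inductions on the retyping derivations $A ⋉ B$ and $A ⋊ B$, respectively. Both proofs case-split on the type connective ($1$, $⊸$, $⊗$, $\lwith$, $⊕$), and in each case I would unfold the definition of $\termInterp{-}{T}{d}$ into a goal about $\valueInterp{-}{T'}{d}$ at an arbitrary time $T'$ in the appropriate window. The propositional implications packaged inside each retyping premise (e.g., $q \Rightarrow p$ together with the corresponding $T \leq t$ derivations) are then used to bridge $\propOf(A, T')$ and $\propOf(B, T')$, so that the hypothesis can be instantiated at $T'$.

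I would dispatch the $1$ case first, as it is the simplest: since $\valueInterp{1}{T'}{d}$ does not depend on $d$, once the temporal-predicate implication is used to connect the two sides, the value-interpretation fact transfers unchanged. The $⊗$, $\lwith$, and $⊕$ cases are structurally similar to one another: after unfolding, one pairs up the sub-components living at $A_1, A_2$ (respectively $B_1, B_2$) and invokes the inductive hypothesis on the matching covariant sub-retyping $A_i ⋉ B_i$ (or $A_i ⋊ B_i$). The existential witnesses for the goal are simply the outputs of the IH, and the LTS step witnessing the communication transports across unchanged because the retyping preserves the outer communication primitive at the type level.

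The interesting case is $⊸$. In $\valueInterp{A_1 ⊸ A_2}{T'}{d}$ the argument position uses the inverted modality $d^{-1}$, which mirrors the contravariantly-written premise $B_1 ⋉ A_1$ (or $B_1 ⋊ A_1$). The IH on this contravariant sub-retyping converts a supplied argument trajectory into the form expected by the hypothesis, while the IH on the covariant $A_2 ⋉ B_2$ (respectively $A_2 ⋊ B_2$) handles the result, exactly as in the other binary cases.

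The main obstacle I anticipate is bookkeeping around the modality $d$ and the subtle difference in how the $⋉$ and $⋊$ rules package their propositional premises: the $⋉$ rule assumes $T \leq t$ in context, whereas the $⋊$ rule derives $T \leq t$ from $q$. Consequently, in (2) the $T \leq T'$ demanded as part of the $\withoutStar$ conclusion must be produced using the retyping's derived bound rather than already being in scope, and this distinction has to be threaded through each case. In Rocq, I expect that unfolding the nested $\forall / \exists$ shapes of the value interpretation, matching them against the IH outputs, and discharging equalities on configurations via the commuting lemmas developed earlier will dominate the bureaucracy of the proof.
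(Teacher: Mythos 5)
Your proposal matches the paper's proof, which is exactly an induction on the retyping derivation, using the propositional premises of each rule to transfer the temporal predicates and the (contravariantly flipped) sub-retyping premises to handle the argument position of $⊸$. The paper gives no further detail, and your case analysis, including the observation that the $⋉$ and $⋊$ rules package $T \leq t$ differently and that this feeds the $T \leq T'$ obligation of the $\withoutStar$ interpretation, fills in the same argument correctly.
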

\begin{proof}
By induction on the retyping derivation.
\end{proof}

\begin{theorem}[FTLR]\label{thm:ftlr}
Suppose $\GF\mid Γ ⊢ M \relAt T :: A$.
Assume for all variables in $\mathcal{G}$ that makes $\mathcal{F}$ true,
and for all $δ$ such that $\typedCompl{Γ}{δ}$, then there exists $w \in \compTraj{T}{∞}$ such that
$\applyCompl{δ}{\hat{\substOf{δ}}(M)} = \projStart(w)$ and
$w \in \termInterp{A}{T}{\withoutStar}$.
\end{theorem}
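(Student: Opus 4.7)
The plan is to proceed by induction on the typing derivation $\GF \mid Γ ⊢ M \relAt T :: A$. For each rule, I must construct a witness computable trajectory $w \in \compTraj{T}{∞}$ whose start is $\applyCompl{δ}{\hat{\substOf{δ}}(M)}$ and which lives in the term interpretation $\termInterp{A}{T}{\withoutStar}$. The needed building blocks are all supplied earlier: $\SRefl$, $\SStep$, $\SCommTP$/$\SCommTC$, $\SFrame$, $\SInterleave$, $\SConcat$, $\SPartitionA$/$\SPartitionB$, together with the algebraic lemmas on realizability (\Cref{lem:R-interleave-lemmas,lem:R-interleave}) and the backwards/forwards closure lemmas (\Cref{lem:back-closure,lem:forwards-closure}). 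The semantic retyping lemma handles the side conditions on \textsc{Cut} and \textsc{Fwd}, while \Cref{lem:discard,lem:subst-comp,lem:map_forall2-union} discharge the substitution bookkeeping.

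For the right rules (\textsc{1-Right}, \textsc{$⊗$-Right}, \textsc{$⊸$-Right}, \textsc{$\lwith$-Right}, \textsc{$⊕$-Right}) the strategy is uniform. The goal is to show $w \in \termInterp{A}{T}{\withoutStar}$, which by unfolding asks for each $T'$ satisfying $\propOf(A, T')$ that $T \le T'$ and the configuration $w[T']$ inhabits the value interpretation at $T'$. I construct $w$ as $\SConcat{w_0}{w_1}$ where $w_0$ is a constant trajectory $\SRefl{\cdot}$ on the interval $[T, T')$ (so the providing process is poised but idle), and $w_1 \in \compTraj{T'}{∞}$ performs the actual communication action at $T'$ via $\SCommTC$ (or $\SCommTP$, for input-directed connectives like \textsc{$\lwith$-Right}). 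The continuation(s) needed to satisfy the value-interpretation clause come from applying the induction hypothesis to the right-rule premise, possibly under an extended substitution $σ, c/x$ produced by \Cref{lem:subst-comp}, and the correctness of $\projStart$ follows by unfolding the $\SCommTC$/$\SRefl$ definitions.

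For the left rules, the key move is that the complementary configuration $δ$ provides, at the distinguished variable $x : A_x$, a trajectory $w_x \in \termInterp{A_x}{T}{\withStar}$. By unfolding the $\withStar$-term interpretation at the time $T'$ prescribed by the left rule, I obtain that $w_x[T']$ inhabits $\valueInterp{A_x}{T'}{\withStar}$, which tells me exactly which transition the provider of $x$ is willing to perform at $T'$. I splice this provider step with the continuation trajectory given by the induction hypothesis on the left-rule premise (using an updated $δ'$ for the new context), gluing them using $\SCommTP$ or $\SCommTC$ as the direction of the connective demands, wrapping the rest of $δ$ around the result with $\SInterleaveCompl$, and using $\SConcat$ with $\SPartitionB{w_x}{T'}$ to account for the idle interval $[T, T')$. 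The \textsc{Cut} case chains the induction hypotheses on both premises with $\SConcat$ and appeals to Semantic Retyping to convert between provider/client modalities; \textsc{Fwd} is a direct computation using $\SStep$ with the \textsc{Step-Fwd} rule and Semantic Retyping applied to $w_x$ pulled from $δ$.

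The main obstacle, and where this proof will spend most of its effort in Rocq, is not any single rule but the pervasive equality reasoning needed to realign the start/end configurations after each compositional step: every time I form $\SConcat{w_0}{w_1}$ I must prove $\projEnd(w_0) = \projStart(w_1)$ up to the monoid laws of $\fmconcat$ and the commutation of $\applyCompl{-}{-}$ with map, union, and deletion (\Cref{defn:applyCompl}, \Cref{lem:map_forall2-union}), and each use of $\SInterleaveCompl$ with an iterated $δ$ produces a nested fold whose start matches $\applyCompl{δ}{\nameless{P}}$ only after a chain of commuting lemmas. Because Rocq's intensional equality forces these into transports, the plan is to prove each such alignment as a named lemma on computable-trajectory projections and discharge them before entering the induction, so the structural cases themselves remain a straightforward rule-by-rule construction driven by the operators on $\compTraj{T_1}{T_2}$ established above.
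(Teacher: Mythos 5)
Your top-level strategy---induction on the typing derivation, building the witness trajectory from the $S$-operations, and discharging \textsc{Cut}/\textsc{Fwd} via Semantic Retyping and the substitution bookkeeping via \Cref{lem:discard,lem:subst-comp,lem:map_forall2-union}---is exactly the paper's (the paper's proof is literally ``by induction on the typing derivation,'' and the operations you list are precisely the ones the paper says were developed for this theorem). Your treatment of the left rules and of the transport/alignment overhead is also on target.

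There is, however, a genuine problem in your right-rule cases. The goal $w \in \termInterp{A}{T}{\withoutStar}$ universally quantifies $T'$ over all times satisfying $\propOf(A, T')$, so the witness $w$ must be chosen \emph{before} $T'$ is introduced; your construction $\SConcat{w_0}{w_1}$ with $w_0$ constant on $[T, T')$ and $w_1$ ``performing the communication at $T'$'' is ill-scoped, since $w$ depends on $T'$. It is also semantically wrong when the temporal predicate is satisfied at more than one time: if the trajectory actually carries out the send at some $T'$, then at a later $T''$ also satisfying the predicate the configuration $w[T'']$ is no longer poised, and the value interpretation fails at $T''$. The fix is that for right rules the external communication is never part of $w$ at all: the value interpretation only demands a \emph{poisedness} proposition ($\nameless{Ω}[a] \stepsTo{a!c}{T'} \cdots$ exists), so the correct witness is simply the constant trajectory $\SRefl{\cdot}$ over $[T, \infty)$, with the continuation trajectories $w_i$ required by clauses \cref{valueInterp:one}--\cref{valueInterp:lplus} produced as \emph{separate} existential witnesses from the induction hypotheses (under the extended substitution), not spliced into $w$. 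Relatedly, $\SCommTP$/$\SCommTC$ combine a nameless configuration with a complementary partner into one trajectory; they belong in the left rules (and \textsc{Cut}), where $δ$ supplies the partner, not in the right rules, where the client is absent from $w$.
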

\begin{proof}
By induction on $\GF\mid Γ ⊢ M \relAt T :: A$.
\end{proof}

The following is our \emph{adequacy} result, which is a way for us to ``check''
that inhabitants of the logical relation indeed have the expected property.

\begin{cor}[Adequacy]\label{cor:adequacy}
For channel name $a$, and times $T \leq T'$, if $\varnothing ⊢ \nameless{P} :: 1$ then there exists 
$w \in \compTraj{T}{∞}$ such that $\projStart(w) = (\varnothing, \nameless{P})$ and $w(T')[a] \stepsToObj{a!\varnothing}{T'} \varnothing$.
\end{cor}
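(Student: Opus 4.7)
The plan is to specialize the Fundamental Theorem (\Cref{thm:ftlr}) to the fully-closed case and then read off the conclusion by unfolding the term and value interpretations at the terminal type $1^{t.p}$.

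First, I will apply \Cref{thm:ftlr} to the hypothesis $\varnothing \vdash \nameless{P} :: 1^{t.p}$ with all contexts taken to be empty: $\mathcal{G} = \mathcal{F} = \Gamma = \varnothing$ and $\delta = \varnothing$. The premise $\typedCompl{\varnothing}{\varnothing}$ is immediate by \Cref{defn:compl} since both maps are empty, and the quantification over assignments to $\mathcal{G}$ is vacuous. Because the empty substitution is a no-op on $\nameless{P}$, \Cref{defn:applyCompl} gives $\applyCompl{\varnothing}{\nameless{P}} = (\varnothing, \nameless{P})$. The FTLR therefore produces some $w \in \compTraj{T}{\infty}$ with $\projStart(w) = (\varnothing, \nameless{P})$ and, crucially, $w \in \termInterp{1^{t.p}}{T}{\withoutStar}$.

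Next, I will peel off the two layers of the logical relation. Unfolding $\termInterp{\cdot}{\cdot}{\withoutStar}$ from \Cref{fig:logical-relation} gives, for every $T'$ such that $\propOf(1^{t.p}, T')$ holds, both $T \leq T'$ and $w[T'] \in \valueInterp{1^{t.p}}{T'}{\withoutStar}$. I then instantiate at the $T'$ supplied by the corollary---with $T \leq T'$ already in scope and the temporal predicate $p[T'/t]$ assumed to hold there---and invoke \cref{valueInterp:one}, which unfolds $w[T'] \in \valueInterp{1^{t.p}}{T'}{\withoutStar}$ to exactly $\forall a,\ w[T'][a] \stepsTo{a!()}{T'} \varnothing$. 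Specialising to the given channel $a$ yields the desired stepping, modulo reconciling the configuration-level $\stepsTo$ produced here with the object-level $\stepsToObj$ appearing in the statement via \textsc{Step-Obj}.

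The only genuine obstacle is the implicit treatment of the temporal predicate: the $\withoutStar$-mode term interpretation only hands over a value at times $T'$ satisfying $\propOf(1^{t.p}, T')$, so if the corollary is read literally with an un-annotated $1$, an additional side condition---or the convention that $1$ is a degenerate type with a trivially true predicate---must be supplied. Once that assumption is made explicit, no induction is needed: the FTLR has already discharged the inductive content, and adequacy is a direct specialisation of the logical relation at the terminal type.
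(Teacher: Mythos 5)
Your proposal is correct and follows exactly the paper's route: the paper's proof is precisely ``by FTLR and definition of term interpretation,'' i.e.\ instantiate \Cref{thm:ftlr} with empty contexts and unfold the term and value interpretations at type $1$. Your additional remarks about the temporal predicate at $T'$ and the mismatch between the configuration-level $\stepsTo$ and the object-level $\stepsToObj$ in the statement are fair observations about looseness in the corollary's phrasing, but they do not change the argument.
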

\begin{proof}
By FTLR and definition of term interpretation.
\end{proof}

\section{Mechanization}%
\label{sec:mechanization}
In this section, we discuss aspects of the Rocq mechanization, including some of our decisions with representation, our use (or lack thereof) of Rocq libraries,
and the deviations we made from the original development in \citeauthor{YaoPOPL2025}

\paragraph{Design decisions}

In a typing judgment $\GF \mid Γ ⊢ M \relAt T :: A$, the context $Γ$ is represented as a \textcode{gmap}
from variable names to type.
For the time context $\GF$, we decided to use a \emph{higher-order} representation
to avoid having to deal with the substitution of time variables. For instance, $\GF \mid Γ ⊢ M \relAt T :: A$
is represented as a Rocq function from the time variables in $\mathcal{G}$ and the proofs of propositions in $\mathcal{F}$
to the inductively defined typing judgment $Γ ⊢ M \relAt T :: A$. This style of representing binding constructs
is inspired by Logical Framework (LF)~\cite{LF}.
This frees us from having to prove substitution lemmas and dealing with structural rules for time propositions.

\paragraph{Rocq libraries}

In our development, we heavily utilize the \textcode{gmap} data structure from the \textlibraryname{stdpp} library~\cite{gmap, stdpp},
with the lemma \textcode{map\_first\_key\_ind} being essential for the lemmas related to~\Cref{defn:S-interleave-compl}.

We developed our own version of \textcode{map\_Forall2}~(\Cref{defn:map_forall2}),
which is valued in \textcode{Type} instead of \textcode{Prop}.
We made essential use of this fact to define~\Cref{lem:map_forall2-union}.

We also had to axiomatize multisets (\Cref{ex:fmset}) instead of using \textcode{gmultiset} from \textlibraryname{stdpp}.
This is because \textcode{gmultiset} is only defined for countable types,
but the multiset of configurations $\Cfg$ is defined with elements $\sum_{S ∈ \ProcLang}S.\NObj$.
In order for this type to be countable, each $\ProcLang$ needs to be countable, which would require the stepping relation in $\ProcLang$ to be countable.
The stepping relation is valued in \textcode{Prop},
and being countable means we must encode stepping proofs into positive numbers.
This violates the principle that \textcode{Prop} should not eliminate to \textcode{Type}.
While the multiset we axiomatized cannot be implemented in Rocq, there are other models of type theory which should support it.

\paragraph{Changes from original development}

We have made a number of notable changes to~\citet{YaoPOPL2025}'s development.

The definition corresponding to~\Cref{lem:multistep-interleave} in Appendix B.1 of \citeauthor{YaoPOPL2025}
proceeds by induction on the length of stepping, rather than structural induction supported by Rocq.
We reconstructed the proof using manual mutual induction.
In the proof of~\Cref{lem:R-interleave}, which goes by unfolding~\Cref{lem:multistep-interleave},
we must create the matching lemmas for the mutually inductive sub-definitions as well.

There is a type error in~\Cref{lem:R-interleave},. Recall the rule:
\[
\RInterleave{r_1}{r_2} : \mathcal{R}(\interleaveTraj{s_1}{s_2}, \CfgStepsInterleave{σ_1}{σ_2})
\]
Suppose $s_1, s_2 : \trajTy{T}{T'}$. Then, $\interleaveTraj{s_1}{s_2} : \trajTy{T}{T'}$.
However, $\CfgStepsInterleave{σ_1}{σ_2}$ begins with $\CfgStepsOperad{Ω_1 ⊎ Ω_2}{\min(T, T)}$,
and there is a type mismatch between $T$ and $\min(T, T)$.
To fix the type error, we would have to rewrite over the fact that $T = \min(T, T)$,
and this creates a rewrite expression in the proof goal. We also used rewrites a couple of times
in the construction of~\Cref{lem:multistep-interleave}. All of these rewrites block $δ$-reduction,
so we developed a family of commuting lemmas to move the rewrites around to simplify the proof goals.

To prove~\Cref{thm:ftlr}, we have to develop a number of operations on $\mathcal{R}$ and $S$,
notably~\Cref{defn:S-interleave-compl}, which are all assumed to be true by~\citet{YaoPOPL2025}.
Essentially, we need to extend a binary operation to a multiset.

\citet{YaoPOPL2025} did not define an interface for process languages.
Instead, they have the $\procEx a -$ syntax, which can be plugged in with essentially ``dynamically-typed'' terms.
Our mechanization contributes a static type signature for all
of these definitions, including but not limited to~\Cref{defn:proc-lang}.
We also took a step further on decoupling the logical relation from syntax.
In~\cite{YaoPOPL2025}, the stepping used by the logical relation relies on the transitions
of untyped terms, which are then semantically typed against timed session types.
We decoupled the logical relation from the syntax, making it parametric over
the language, the process terms, and the object-level stepping.
The only restriction we impose is that language-specific stepping happens within a language,
and the inter-language interaction only happens through message-passing.

\section{Related Work}%
\label{sec:related}
In \Cref{sec:mechanization} we have already discussed the differences between our mechanization
and its pen-and-paper formalization given in \citep{YaoPOPL2025},
so we focus here on related mechanization work.

\paragraph*{Mechanizations of logical relations for session types.}

\citet{GollamudiCoqPL2025} contributed the first mechanization of a logical relation for session types.
Although the authors developed a semantic logical relation as well,
they solely assert protocol adherence, but no timeliness.
Moreover, the authors' mechanization is not parametric in a process language,
but fixes the term syntax to be the terms of ILLST,
resulting in a rooted tree structure at run-time with scoped channels.
In contrast, our work recognizes the value of decoupling name assignment for a process from its semantic characterization,
and uses nameless objects.

A slightly different approach to mechanization of session-based message passing
has been taken by Actris \citep{KastbergHinrichsenPOPL2020}.
Actris is a higher-order concurrent verification logic
implemented in Iris
\citep{JungJFP2018}
with support for dependent separation protocols.
Given its powerful foundation,
Actris not only allows verifying protocol compliance of a program,
but also full functional correctness.
The approach taken by Actris, however, is quite different from ours.
Whereas we develop a logical relation that is indexed with ILLST types to prescribe protocols,
Actris uses the logical relations built into Iris and expresses protocols in terms of Iris' ghost state.
Moreover, we mechanize a type system and prove it sound by the FTLR.

\paragraph*{Multi-language verification with processes.}  

Our work echos the development in DimSum \citep{SammlerPOPL2023}, based on Iris \citep{JungJFP2018}.
Similarly to our work, DimSum proposes a process framework for 
multi-language semantics and verification. In this framework, different languages are 
separated into different \emph{modules}, which communicate with each other 
through synchronization of \emph{events}. Our work also models different languages as 
separate processes and allows synchronization via message exchange.
%
DimSum is a very flexible framework: it allows the user to define different sets of events for 
different languages. However, this flexibility comes at a cost, 
requiring a user to set up the embedding of events across languages. Given a 
collection of modules, the user must carry out a linkage proof by hand using proof rules. 
In our system, the communication primitives available are fixed across languages. From 
the point of DimSum, we have chosen a fixed, but rather rich, set of events shared by 
all languages. This allows us to (1) establish a fundamental theorem that automates
composition across languages and (2) support higher-order channels, where the name of 
a message-passing entity is passed as data to another entity.



\begin{acks}
This material is based upon work supported by the NSF under Grant No. (2211996 and 2442461) and upon work supported by the AFOSR
under award number FA9550-21-1-0385 (Tristan Nguyen, program manager). Any opinions, findings, and conclusions or recommendations expressed in this material are those of the author(s) and do not necessarily reflect the views of
the National Science Foundation or the U.S. Department of Defense.

The authors would like to thank Robbert Krebbers and Jules Jacobs on their valuable feedback and discussions on mechanization.
\end{acks}

\bibliographystyle{ACM-Reference-Format}
\bibliography{../paper/ref}

\end{document}